\theoremstyle{definition}
\newtheorem{theorem}{Theorem}
\newtheorem*{lemma*}{lemma}
\newtheorem*{corollary*}{Corollary}
\newcommand{\blue}[1]{{#1}}
\begin{document}
\setstcolor{red}
\newtheorem{Proposition}{Proposition}[section]	 
\title{Deep Thermalization and Measurements of Quantum Resources  }   
\date{\today}

\author{Naga Dileep Varikuti}
\email{dileep.varikuti@unitn.it}
\author{Soumik Bandyopadhyay}
\email{soumik.bandyopadhyay@unitn.it}
\author{Philipp Hauke}
\email{philipp.hauke@unitn.it}
\affiliation{Pitaevskii BEC Center, CNR-INO and Dipartimento di Fisica, Universit\`a di Trento, Via Sommarive 14, Trento, I-38123, Italy}	
\affiliation{INFN-TIFPA, Trento Institute for Fundamental Physics and Applications, Via Sommarive 14, Trento, I-38123, Italy}

\begin{abstract}

Quantum resource theories provide a unified framework for characterizing useful quantum phenomena subject to physical constraints, but are notoriously hard to assess in experimental systems. In this letter, we introduce a unified protocol for quantifying the resource-generating power of arbitrary quantum evolutions applicable to multiple {resource theories}.
It is based on deep thermalization, which has recently gained attention for its role in the emergence of quantum state designs from partial projective measurements. 
Central to our approach is the use of projected ensembles, recently employed to probe {deep thermalization}, together with new twirling identities that allow us to directly infer the {resource-generating powers} of the underlying dynamics. These identities further reveal how resources build up and thermalize in generic quantum circuits. Finally, we show that quantum resources themselves undergo deep thermalization at the subsystem level, offering a complementary and another experimentally accessible route to infer the {resource-generating powers}. Our work connects deep thermalization to resource quantification, offering a new perspective on the essential role of various resources in generating state designs {and enabling efficient estimation of quantum resources in experiment.}

\end{abstract}

\maketitle

\newtheorem{corollary}{Corollary}[theorem]
\newtheorem{lemma}[theorem]{Lemma}
\newtheorem{remark}{Remark}
\def\endproof{\hfill$\blacksquare$}

\textit{Introduction.---}Quantum resource theories (QRTs) offer a unified framework for characterizing and manipulating operationally useful quantum features. 
{Important examples are the QRTs of} entanglement~\cite{amico2008entanglement, RevModPhys.81.865, contreras2019resource, bauml2019resource}, non-stabilizerness~\cite{veitch2014resource, campbell2017roads, leone2024stabilizer}, and coherence~\cite{streltsov2017colloquium, PhysRevLett.119.140402}
{which play central roles in achieving universal quantum computation~\cite{Briegel2009, raussendorf2007topological, gottesman1998heisenberg, aaronson2004improved, bravyi2005universal, bravyi2012magic, hebenstreit2019all}, and asymmetry~\cite{chitambar2019quantum, gour2008resource, marvian2013theory}, which serves as a key resource in the context of quantum reference frames~\cite{chitambar2019quantum}}.
Each QRT specifies free states and operations easily accessible under 
specific experimental constraints, {while the remainder is}, conversely, identified as costly, i.e., resourceful [see Fig.~\ref{fig:sch}(a)]. {For example, in the QRT of entanglement, separable states and LOCCs (local operations and classical communication) are considered free, while entangled states and non-separable operations carry the resource.
The ability of a unitary evolution 
to build up the desired resource when starting from a 
typical free state is called its ``resource-generating power (RGP)."
It remains an important question \emph{whether and how fast} a unitary, when embedded in a generic quantum circuit, reaches the RGP of a typical, i.e., Haar-random unitary. A comprehensive understanding of this process---in analogy to equilibration in many-body systems~\cite{zhang2015thermalization, jonnadula2017impact, jonnadula2020entanglement, varikuti2025impact} dubbed ``thermalization of quantum resources''---remains an outstanding question. 
A closely related---and perhaps even more pressing---challenge concerns their experimental quantification~\cite{elben2023randomized, oliviero2022measuring}.}


{In this work, we build on recent advances in ``deep thermalization''~\cite{choi2023preparing, cotler2023emergent} to engineer a framework that enables (i) the experimental characterization of quantum resources and (ii) provides significant insights into their thermalization in generic quantum circuits \footnote[1]{In this work, we use the term \textit{generic quantum circuits} for those composed of both free and non-free operations.}.
Deep thermalization provides a refined description of statistical relaxation in many-body systems \cite{cotler2023emergent, choi2023preparing, mcginley2023shadow, versini2023efficient, liu2024deep, lucas2023generalized, shrotriya2023nonlocality, mark2024maximum, mok2025optimal, manna2025projected, zhang2025holographic, sherry2025mixed, fritzsch2025free, vairogs2025localizing, mandal2025partial, sreejith2025signatures, liu2025coherence, loio2025quantum, yu2025mixed, yan2025characterizing, chan2024projected, lami2025quantum}, extending beyond the Eigenstate Thermalization Hypothesis \cite{deutsch1991quantum, srednicki1994chaos, ETH_ansatz_expt, d2016quantum, deutsch_18, eth_nonherm}. 
Rather than local observables, deep thermalization concerns \textit{projected state ensembles} obtained by post-selecting measurement outcomes on a larger subsystem in a generic non-integrable quantum evolution. These ensembles deeply thermalize, meaning that they relax to 
universal distributions, such as state designs---{finite ensembles of pure states reproducing Haar-random statistics \cite{renes2004symmetric, klappenecker2005mutually, benchmarking2, cotler2023emergent}---or their finite-temperature counterparts, known as Scrooge ensembles~\cite{cotler2023emergent, chang2025deep, manna2025projected}}.}
{It has become a subject of intensive study to analyze these ensembles and their dependency on the initial state \cite{mok2025optimal}, the dynamics \cite{ho2022exact, ippoliti2022solvable, ippoliti2023dynamical, claeys2022emergent, bejan2025matchgate}, or the measurement basis \cite{varikuti2024unraveling, bhore2023deep}.}
{For generic evolutions, 
the projected ensembles typically approach state designs, but when both the initial state and dynamics are free in a QRT sense, they deviate significantly from the designs \cite{varikuti2024unraveling, liu2025coherence, 3ttm-vhdt}. This stems from the fact that free elements of QRTs occupy highly structured subsets of the Hilbert space~\footnote[2]{The free unitaries of QRTs considered in this work form proper subgroups of $\mathcal{U}(2^N)$, whose Haar measures generally fail to reproduce the higher-order moments of the Haar measure on $\mathcal{U}(2^N)$, precluding arbitrarily high-order unitary designs.}.
Therefore, the onset of design behavior is intuitively linked to the presence of quantum resources.}
{In this letter, we formalize this intuition by devising a protocol to quantify the RGP of arbitrary evolutions, rooted in the projected ensembles.}
{By particularly focusing on linear entropic resource measures, which are experimentally more accessible than nonlinear ones~\footnote[3]{For example, in the QRT of entanglement, the von Neumann entropy requires diagonalization of the reduced density matrix, whereas the linear entropy, given by the purity, is much easier to compute. From an implementation perspective, the former requires costly operations such as full state tomography, while the latter can be estimated using a SWAP test~\cite{PhysRevLett.120.050406}.}, we construct an unbiased estimator for the RGP. 
We find that this estimator is highly reliable, as its signal-to-noise ratio remains stable with increasing system size. }

{Our protocol can be immediately implemented on near-term quantum devices (see the supplemental material (SM)~\cite{supplemental} for a discussion of experimental feeasibility): a free state is evolved under a circuit composed of the target unitary and random free unitaries, followed by projective measurements on a subsystem. The resulting measurement outcome probabilities provide direct access to the RGP of the target unitary. Complementing this, we show that the resource content of the projected states 
serves as a proxy for the full dynamics, thereby establishing an experimentally viable framework for probing 
resources. We focus on asymmetry and non-stabilizerness in the main text, with technical details and the cases of entanglement and coherence presented in the SM~\cite{supplemental}.}

{
\textit{Quantifying resources.---}Many quantum resources admit entropy-based quantifiers, including entanglement~\cite{PhysRevD.82.126010} and stabilizerness~\cite{leone2022stabilizer}.
Let $\mathcal{R}^{(\alpha)}(|\psi\rangle)$ denote the $\alpha$-Rényi entropy quantifying the resource of a state $|\psi\rangle$. The corresponding linear entropy is $\mathcal{R}_{\mathrm{lin}}(|\psi\rangle) = 1 - e^{-\mathcal{R}^{(2)}(|\psi\rangle)}$. 
Often, the linear entropies can be evaluated considering a number of $t$ (determined by the underlying QRT) copies of the state of interest, $\mathcal{R}_{\mathrm{lin}}(|\psi\rangle) = 1 - \mathrm{Tr}\left(\mathbf{W} (|\psi\rangle\langle\psi|)^{\otimes t}\right)$. The operator $\mathbf{W}$ is chosen to commute with $F^{\otimes t}$ for all free unitaries $F$ and to satisfy $0\le \langle\psi^{\otimes t}|\mathbf{W}|\psi^{\otimes t}\rangle\le 1$, with the upper bound attained when $|\psi\rangle$ is free, such that $\mathcal{R}_{\mathrm{lin}}$ vanishes for free states.

The resource-generating power (RGP) of a unitary $U$ is then defined as the average linear entropy achieved when $U$ is applied to pure free states. I.e., denoting the average by $\langle\cdot\rangle_{|\psi\rangle}$, $\mathcal{R}_{p}(U) = \langle \mathcal{R}_{\mathrm{lin}}(U|\psi\rangle) \rangle_{|\psi\rangle}$. 
RGPs quantify the information-theoretic complexity of implementing a quantum evolution within a QRT framework. 
They also bridge QRTs with out-of-time-ordered correlators (OTOCs) and information scrambling~\cite{styliaris2019quantum, varikuti2022out,leone2022stabilizer,anand2021quantum, varikuti2022out}. In addition to characterizing the resource content generated by near-term quantum devices, our results thus provide a scope on experimentally probing quantum chaos.}

\textit{Main results.---}
{Here, we give a high-level summary of our main results, followed by examples for specific QRTs.}
At the heart of our results are the twirling identities, {common to all QRTs considered~\cite{varikuti2025impact, jonnadula2017impact, jonnadula2020entanglement, supplemental}.
They state that averaging the action of a unitary $U$ over free unitaries $F$ and free states $|\psi\rangle$ yields simply a linear interpolation between the $t$'th moments of free ($\rho^{(t)}$) and Haar-random states ($\bm{\Pi}^{(t)}$): $\langle (FU|\psi\rangle\langle\psi|U^{\dagger}F^{\dagger})^{\otimes t}\rangle_{F,|\psi\rangle} = (1-\alpha)\;\rho^{(t)} + \alpha\;\bm{\Pi}^{(t)}$, with $\alpha \propto \mathcal{R}_{p}(U)$
~\footnote[4]{Note that the $t$-th order twirling of a channel $\zeta$ with respect to a compact group $\mathbb{G}(2^N)$ is defined by $\zeta \mapsto \langle F^{\dagger \otimes t}\, \zeta \, F^{\otimes t}\rangle_{F\in \mathbb{G}}$}.} 
{Taking advantage of these identities, we obtain a clear measurement protocol for computing RGP in the given QRT}: (i) prepare a random pure bipartite free state $|\psi\rangle_{\mathrm{free}}\in\mathcal{H}_{A}\otimes\mathcal{H}_{B}$, (ii) apply the target unitary $U$, whose RGP is to be estimated, (iii) apply a random free operation $F$, (iv) measure a subset of the final state (say $B$) in the computational basis, record the outcome probabilities $\{p_b\}$, and repeat the above steps several times (see Fig.~\ref{fig:sch}(b)).
{The following theorem establishes one of our main results: an unbiased estimator of the RGP from the probabilities $p_b$.}

\begin{theorem}\label{theorem1}
Let $\mathcal{R}_p(U)$ denote an appropriately defined {RGP} of a unitary $U$ in a {QRT admitting for a twirling identity.} 
{Define
$\tilde{P}^{(t)}
= 
\sum_{b} p_b^t/2^{N_B},
$
where $t\in \mathbb{Z}^{+}$ is determined by the QRT, and $\{p_b\}$ are the subsystem measurement outcome probabilities} obtained from the above protocol. Then,
\begin{eqnarray}\label{prot-eq}
\langle \mathcal{R}_{\mathrm{est}}(U) \rangle_{F, |\psi\rangle}
=
\frac{\mathcal{R}_p(U)}{\overline{\mathcal{R}}}
=
\frac{k_2-\langle \tilde{P}^{(t)} \rangle_{F,|\psi\rangle}}
{k_2-k_1}.
\end{eqnarray}
Here, $k_1$ and $k_2$ are constants determined by the underlying QRT and the subsystem sizes, $\langle \cdot \rangle_{F,|\psi\rangle}$ denotes averaging over free unitaries $F$ and states $|\psi\rangle$, and $\overline{\mathcal{R}}=\langle \mathcal{R}_p(U)\rangle_{U\in \mathrm{U}(2^N)}$ is the Haar-averaged RGP.    
\end{theorem}

\begin{figure} 
\includegraphics[scale=0.5]{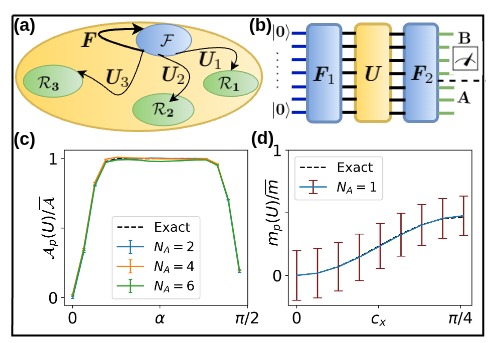}
\caption{\label{fig:sch} (a) Abstract representation of free and non-free elements: $\mathcal{F}$ denotes the set of free states/operators; $U_i$ are non-free operations mapping $\mathcal{F}$ outside itself, while the free operation $F$ preserves $\mathcal{F}$. (b) Protocol for estimating the resource content of a unitary of interest $U$: prepare a random free state $\ket{\psi}$ (e.g., using a free unitary $F_1$ on a fiducial state), apply $U$ followed by a random free operation $F_2$, then perform measurements on a subsystem $B$ and post-process the outcomes. (c) Illustration of the use of the protocol to estimate the $Z_2$-AGP ($\mathcal{A}_{p}(U)/\overline{\mathcal{A}}$) for $U=u^{\otimes N}$ with $u=\exp\{-i\alpha \sigma_{x}\}$, with $N=10$ and subsystem measurements on $N_B=2$, $4$, $6$ qubits. (d) Estimation of non-stabilizing power $m_p(U)/\overline{m}$ for two-qubit unitary $U=\exp\{-ic_x \sigma_x\otimes\sigma_x/2\}$ with projective measurements on $N_B=1$ qubit. }      
\end{figure}  


The proof proceeds by using the twirling identities to evaluate $\langle p^{t}_b\rangle$, thereby relating {$\langle\tilde{P}^{(t)}\rangle$} to the corresponding RGPs (see Eqs.~(\ref{twirl}) and (\ref{magic_main}), along with the subsequent discussion, and also SM~\cite{supplemental}). {Equation~\eqref{prot-eq} encapsulates the fact that, for computational-basis measurements, $\langle p_b^t\rangle$ is independent of $b$, causing $\langle \tilde{P}^{(t)}\rangle$ to simply linearly interpolate between two regimes: $\langle \tilde{P}^{(t)} \rangle = k_2$ for free $U$, and $\approx k_1$ for Haar-random $U$. }A key advantage of our approach is that measuring only a subsystem suffices, thereby expanding projected ensembles from conceptual relevance into a highly useful tool.

{Importantly, the variance of $\mathcal{R}_{\mathrm{est}}$ decays at least exponentially with the size of the measured subsystem $N_{B}$ for all the QRTs considered. Remarkably, in the QRT of entanglement, the chaotic limit yields an even stronger suppression, exponentially in $N$, making the protocol particularly attractive (see SM~\cite{supplemental}). Consequently, the signal-to-noise ratio does not deteriorate even for large system sizes.}

{Moreover, the resource content of the projected states is an effective proxy for the full $\mathcal{R}_{p}(U)$.}
Namely, we further show that within the above protocol, {for an arbitrary computational basis state $|b\rangle\in\mathcal{H}_{B}$}, the averaged resource content of the projected states follows to leading order
\begin{eqnarray}\label{2res}
\langle \mathcal{R}_{\text{PS}} \rangle_{F,|\psi\rangle} \approx   
\frac{ \Omega\; \overline{\mathcal{R}}_{A} \cdot k_1 }{
\left( 1 - \Omega \right) k_2 + \Omega k_1 }, \;\text{ where }\; \Omega=\dfrac{\mathcal{R}_p(U)}{\overline{\mathcal{R}}}.
\end{eqnarray}
Here, $\overline{\mathcal{R}}_{A}$ denotes the average resource of Haar-random states in the Hilbert space $\mathcal{H}_{A}$. 


{The QRT-specific tools we develop to prove the above results also enable a key result on thermalization of quantum resources, namely the equilibration of resource measures to the Haar-ensemble averages in generic quantum circuits. From the perspective of QRTs, generic circuits comprise layers of free and non-free operations, as shown in Fig.~\ref{fig:sch-dt2}(a). By randomizing free operations and fixing the non-free operation to $U$, we show that $\mathcal{R}_{p}$ converges to $\overline{\mathcal{R}}$ exponentially, 
a behavior ubiquitous across the QRTs considered:} 
\begin{theorem}\label{theorem2}
Let $U^{(n)} = F_{n}U F_{n-1} U \cdots F_1 U$, where each $F_j$ (for all $1 \leq j \leq n-1$) is drawn independently from the group of free unitaries in a suitable QRT. 
Let $U$ be a fixed unitary with finite $\mathcal{R}_{p}(U)$. Then, 
\begin{equation}\label{symexp}
\left\langle \mathcal{R}_{p}(U^{(n)}) \right\rangle_{\tilde{F}} = \overline{\mathcal{R}} \left[ 1 - \left( 1 - \frac{\mathcal{R}_{p}(U)}{\overline{\mathcal{R}}} \right)^n \right],
\end{equation}
where $\langle \cdot \rangle_{\tilde{F}}$ denotes independent averaging over the ${F_j}$'s.
\end{theorem}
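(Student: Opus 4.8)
The plan is to track the free-averaged $k$-th moment operator of the state as it is pushed through the layers of $U^{(t)}$, where $k$ is the polynomial degree of $\mathcal{R}_p$, and to show it relaxes geometrically onto the Haar design operator. Write $\mathcal{U}_k(\cdot):=U^{\otimes k}(\cdot)\,U^{\dagger\otimes k}$ for $k$-fold conjugation by $U$, $\mathcal{T}_{\mathrm{free}}:=\mathbb{E}_{F}\bigl[F^{\otimes k}(\cdot)\,F^{\dagger\otimes k}\bigr]$ for the $k$-fold twirl over the free-unitary group, $\sigma_{\mathrm{free}}:=\mathbb{E}_{|\chi\rangle\,\mathrm{free}}\bigl[(|\chi\rangle\!\langle\chi|)^{\otimes k}\bigr]$ for the free $k$-design moment, and $\sigma_{\mathrm{Haar}}:=\mathbb{E}_{|\phi\rangle\sim\mathrm{Haar}}\bigl[(|\phi\rangle\!\langle\phi|)^{\otimes k}\bigr]$ for the Haar $k$-design moment. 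Since $\mathcal{R}_p$ is a degree-$k$ state polynomial obtained by averaging the pure-state resource measure over free inputs, $\mathcal{R}_p(V)=\mathcal{L}\bigl(\mathbb{E}_{|\chi\rangle\,\mathrm{free}}[(V|\chi\rangle\!\langle\chi|V^{\dagger})^{\otimes k}]\bigr)$ for a fixed affine functional $\mathcal{L}$, with $\mathcal{L}(\sigma_{\mathrm{free}})=0$ (free states carry no resource) and $\mathcal{L}(\sigma_{\mathrm{Haar}})=\overline{\mathcal{R}}$ (because a Haar-random unitary maps any fixed state to a Haar-random state, the Haar-averaged RGP equals the Haar-averaged state resource).

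First I would feed a free state into $U^{(t)}$ and average over the innermost free unitary $F_1$ alone; as $F_1$ is independent of the other $F_j$, the remainder of the circuit pulls outside this average, and after also averaging over the input the innermost block reduces to $\mathcal{T}_{\mathrm{free}}\!\circ\!\mathcal{U}_k(\sigma_{\mathrm{free}})$. The twirling identity of Eq.~\eqref{twirl} — and Eq.~\eqref{magic_main} in the non-stabilizerness case — applied to each pure output $U|\chi\rangle$ and then averaged over free $|\chi\rangle$ by linearity, evaluates this to
\begin{equation}
\mathcal{T}_{\mathrm{free}}\!\circ\!\mathcal{U}_k(\sigma_{\mathrm{free}})=(1-\Omega)\,\sigma_{\mathrm{free}}+\Omega\,\sigma_{\mathrm{Haar}},\qquad \Omega=\frac{\mathcal{R}_p(U)}{\overline{\mathcal{R}}}.
\end{equation}
Since $\sigma_{\mathrm{Haar}}$ is invariant under conjugation by any unitary and under $\mathcal{T}_{\mathrm{free}}$, the $\Omega\,\sigma_{\mathrm{Haar}}$ component is frozen for all subsequent layers, while the $(1-\Omega)\,\sigma_{\mathrm{free}}$ component is precisely the same problem with one layer removed. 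Iterating this — or, equivalently, using independence of the $F_j$ to factorize the averaged $k$-fold channel of $U^{(t)}$ as $(\mathcal{T}_{\mathrm{free}}\!\circ\!\mathcal{U}_k)^{t}$ acting on $\sigma_{\mathrm{free}}$ — gives the recursion $\sigma_n=(1-\Omega)\,\sigma_{n-1}+\Omega\,\sigma_{\mathrm{Haar}}$ with $\sigma_0=\sigma_{\mathrm{free}}$, whose solution is $\sigma_t=(1-\Omega)^{t}\sigma_{\mathrm{free}}+\bigl[1-(1-\Omega)^{t}\bigr]\sigma_{\mathrm{Haar}}$.

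Applying $\mathcal{L}$ and using $\mathcal{L}(\sigma_{\mathrm{free}})=0$, $\mathcal{L}(\sigma_{\mathrm{Haar}})=\overline{\mathcal{R}}$ then yields $\bigl\langle\mathcal{R}_p(U^{(t)})\bigr\rangle_{\tilde{F}}=\overline{\mathcal{R}}\,\bigl[1-(1-\Omega)^{t}\bigr]$, i.e.\ Eq.~\eqref{symexp}. The main obstacle is the twirling identity itself: the whole argument rests on the free-group twirl of a resourceful $k$-th moment collapsing onto the two-dimensional span of $\{\sigma_{\mathrm{free}},\sigma_{\mathrm{Haar}}\}$ with mixing weight fixed by the single scalar $\mathcal{R}_p/\overline{\mathcal{R}}$, which requires the free twirl of a pure-state $k$-th power to depend on the state only through its resource value and to do so affinely — a rigidity property of the commutant of the free-unitary group at order $k$ that must be checked QRT by QRT (asymmetry, non-stabilizerness, entanglement, coherence), and is exactly where the Lie-algebra-based structure enters. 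The residual steps — the coincidence of Haar-averaged RGP and Haar-averaged state resource, choosing the fiducial so that its free twirl is $\sigma_{\mathrm{free}}$, and pulling the deterministic part of the circuit through each single-unitary average — are routine.
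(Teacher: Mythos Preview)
Your proposal is correct and matches the paper's approach: both iterate the same twirling identity (Eqs.~\eqref{twirl}, \eqref{magic_main}) to show the free-averaged $k$-th moment relaxes geometrically onto $\sigma_{\mathrm{Haar}}$ with ratio $1-\mathcal{R}_p(U)/\overline{\mathcal{R}}$, and then read off the RGP. The only organizational difference is that the paper first packages one step into a scalar ``sandwich lemma'' $\langle\mathcal{R}_p(VFU)\rangle_F=\mathcal{R}_p(U)+\mathcal{R}_p(V)-\mathcal{R}_p(U)\mathcal{R}_p(V)/\overline{\mathcal{R}}$ and then recurses on $V=U^{(t-1)}$, whereas you iterate directly at the level of the moment operator; the content is the same.
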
 
Interestingly, the same twirling identities that underlie Theorem~\ref{theorem1} also lead to Theorem~\ref{theorem2}. Equation~(\ref{symexp}) {tells us} that the thermalization rate is governed entirely by $\mathcal{R}_{p}(U)$, implying that, with random free operations, even arbitrarily small RGP suffice to generate any targeted amount of resource.
{We use Theorem~\ref{theorem2}, together with Eq.~(\ref{2res}), to also probe the deep thermalization of quantum resources in circuits composed of free and resourceful operations (see Fig.~\ref{fig:sch-dt2}; we define the deep thermalization of resources as the 
equilibration of resource measures in projected ensembles of states that undergo generic evolutions.)} 
In the following, we rigorously demonstrate the above results for the QRTs of $\mathbb{Z}_2$-asymmetry and non-stabilizerness, {to provide further intuition as well as specific numerical tests.}



\textit{Main results for 
$Z_2$-asymmetry.---}{In the QRT of $\mathbb{Z}_2$-asymmetry, free states are given by eigenstates of the symmetry generator $\mathcal{G}=\sigma_z^{\otimes N}$}. We define free unitaries as resource non-increasing operations rather than merely those covariant with the symmetry group, thereby also including unitaries that anti-commute with $\mathcal{G}$ (see SM~\cite{supplemental} for a construction of random unitaries with symmetry via polar decomposition). The linear $Z_2$-asymmetry entropy of a pure state $|\psi\rangle\in\mathcal{H}^{2^N}$ can be computed using~\cite{supplemental}
\begin{equation}
\mathcal{A}_{\mathrm{lin}}(|\psi\rangle) = 1-\text{Tr}\left[\left(\mathbf{Z}^{\otimes 2}_{0}+\mathbf{Z}^{\otimes 2}_{1}\right)(|\psi\rangle\langle\psi )^{\otimes 2}\right],     
\end{equation}
where $\mathbf{Z}_{s} = \big(\mathbb{I} + (-1)^s \sigma_z^{\otimes N}\big)/2$ is the projector onto the symmetry sector with charge $s \in \{0,1\}$. $\mathcal{A}_{\mathrm{lin}}$ vanishes iff $|\psi\rangle$ is {an eigenstate of $\mathcal{G}$ and remains invariant under the action of free unitaries}, providing a faithful measure of $Z_2$-asymmetry.
{Correspondingly,} the $Z_2$-asymmetry generating power ($Z_2$-AGP) of a unitary $U$ is 
{$\mathcal{A}_{p}(U)=\langle\mathcal{A}_{\mathrm{lin}}(U|\psi\rangle) \rangle_{|\psi\rangle}$.} 
{Using elementary algebra, we can prove the following twirling identity~\cite{supplemental}}:
\begin{eqnarray}\label{twirl}
\hspace{-0.75em}\left\langle \left(FU\right)^{\otimes 2}\rho_{\text{f}}  \left(FU\right)^{\dagger\otimes 2} \right\rangle_{F}= \left[ 1-\dfrac{\mathcal{A}_{p}(U)}{\overline{\mathcal{A}}} \right] \rho_{f} +\dfrac{\mathcal{A}_{p}(U)}{\overline{\mathcal{A}}}\bm{\Pi^{(2)}}.
\end{eqnarray}
{Here, $\overline{\mathcal{A}} = 2^{N-1}/(2^N+1)$ is the Haar-averaged AGP \cite{supplemental}, $\rho_{f}$ denotes the second moment of the ensemble of free states, and $\bm{\Pi}^{(2)}=\left(\mathbb{I}+\text{SWAP}\right)/(2^N(2^N+1))$ with $\mathrm{SWAP}$ acting non-trivially on $\mathcal{H}^{2^N}\otimes \mathcal{H}^{2^N}$~\cite{zhang2014matrix}.} As the above equation highlights, while $U$ imprints the resource onto free states, the random $F$ spreads it across the system, yielding a convex combination of free ($\rho_{\mathrm{f}}$) and Haar-randomized ($\bm{\Pi}^{(2)}$) parts, each weighted by terms proportional to $\mathcal{A}_{p}(U)$.

{Following the procedure in Fig.~\ref{fig:sch}(b)}, Eq.~(\ref{twirl}) can be reformulated into an experimentally implementable protocol for measuring $\mathcal{A}_{p}(U)$. Given a random pure free state $|\psi\rangle \in \mathcal{H}_{AB}$ and a random free operation $F$, the probability of projecting on a fixed state $|b\rangle\in\mathcal{H}_{B}$ is $p_b = \text{Tr}(\langle b|FU|\psi\rangle \langle \psi| U^{\dagger}F^{\dagger} |b\rangle)$. By sampling over $|\psi\rangle$ and $F$, and averaging via Eq.~(\ref{twirl}), we obtain $\mathcal{A}_{p}(U)/\overline{\mathcal{A}}=({k_2 - \langle p_b^2 \rangle_{F,|\psi\rangle}})/({k_2 - k_1})$, where $k_1 = \text{Tr}(\langle b^{\otimes 2}|\bm{\Pi}^{(2)}|b^{\otimes 2}\rangle )$, and $ k_2 = \text{Tr}(\langle b^{\otimes 2}|\rho_{\mathrm{f}}|b^{\otimes 2}\rangle )$. This validates the applicability of Theorem~\ref{theorem1} for the QRT of $Z_2$-asymmetry. For computational-basis measurements, {the constants are} $k_{l\in \{1, 2\}}={2^{N_A}(2^{N_A}+l)}/\left(2^N(2^N+l)\right)$ for $0 < N_A < N$ \cite{supplemental}. Therefore, Born probabilities corresponding to subsystem measurements enable a direct evaluation of the AGP.
Interestingly, these probabilities emerge naturally within the projected ensemble framework \cite{cotler2023emergent}. 

{Remarkably, in the limit of large $N$, the variance of the estimator obtained from this protocol decays exponentially with $N_{B}$. Specifically, the variance ranges between $\sim 8/(2^{N_{B}}-1)$ to $\sim 2/(2^{N_{B}}-1)$ as $U$ interpolates between a free and a Haar random unitary. For details, see SM~\cite{supplemental}.}
Figure~\ref{fig:sch}(b) shows the {estimation} of the $Z_2$-AGP for an $N=10$ qubit system using projective measurements on $N_B=4, 6, 8$ qubits.

{Similar to the measurement outcome probabilities}, the projected states in this protocol also encode information about $\mathcal{A}_{p}(U)$.
{The projected state on $\mathcal{H}_A$ obtained upon projecting $FU|\psi\rangle$ onto $|b\rangle\in \mathcal{H}_{B}$ is given by}
$|\phi_b\rangle = \langle b|FU|\psi\rangle/\sqrt{p_b}$. By writing $|\tilde{\phi}_b\rangle = \langle b |FU|\psi\rangle$, the asymmetry in the projected state can be computed using $ \mathcal{A}_{\text{PS}}= \text{Tr}[\mathbf{Z}^{(2)\perp}_{A}(|\tilde{\phi}_b\rangle\langle\tilde{\phi}_b |)^{\otimes 2}]/p^2_b$. Then, averaging over free states yields to zeroth order ${\langle\mathcal{A}_{\text{PS}}\rangle}_{F, |\psi\rangle} \approx \left({\mathcal{A}_{p}(U)\; {\overline{\mathcal{A}_{A}}} \; k_1}\right)/\left(\overline{\mathcal{A}}\;\langle p^2_b\rangle_{F, |\psi\rangle}\right)$ (see SM~\cite{supplemental}). {By incorporating $\langle p_b^2\rangle_{F, |\psi\rangle}$ obtained via the protocol, we recover Eq.~(\ref{2res}) for the $Z_2$-AGP. }
When $N_A$ is comparable to $N$, it follows that {$k_1/k_2\approx 1 -2^{-N_A}+2^{-N}$}, leading to $\langle\mathcal{A}_{\text{PS}}\rangle/\overline{\mathcal{A}_{A}} \approx \mathcal{A}_{p}(U)/\overline{\mathcal{A}}$. This indicates that the normalized asymmetry of the projected states, {when averaged over the free states and unitaries}, behaves very similarly to the normalized AGP of $U$.

We now illustrate Theorem~\ref{theorem2} 
as a further consequence of the twirling identity in Eq.~(\ref{twirl}). 
{The scenario we consider} consists of interlacing free and non-free operations [see Fig.~\ref{fig:sch-dt2}(a)]. The free operations $F_j$ at different time steps $j=1\dots n$ are random and independent, while the non-free operations are arbitrary but taken here as non-random and identical at each step. {The averaged AGP of} $U^{(n)} = F_{n} U F_{n-1} U \cdots F_1 U$ is
\begin{equation}\label{expsymmain}
\left\langle \mathcal{A}_{p}(U^{(n)}) \right\rangle_{\tilde{F}} = \overline{\mathcal{A}} \left[ 1 - \left( 1 - \frac{\mathcal{A}_{p}(U)}{\overline{\mathcal{A}}} \right)^n \right].
\end{equation}
The proof follows from {iteratively applying} the twirling channel in Eq~(\ref{twirl}), {see SM \cite{supplemental}}. This result corresponds to Theorem~\ref{theorem2} for the $Z_2$-AGP. {Notably,} $\overline{\mathcal{A}}$ is the only non-trivial fixed point of Eq.~(\ref{expsymmain}). Moreover, the above equation suggests that when free and non-free operations are applied in conjunction, the AGP typically converges exponentially to its Haar-averaged value with rate $\lambda = -\ln [1 - {\mathcal{A}_{p}(U)}/{\overline{\mathcal{A}}}]$. 

\textit{Main results for non-stabilizerness.---}{Non-stabilizerness constitutes a key resource for achieving a quantum advantage, since stabilizer circuits are classically simulable~\cite{veitch2014resource, campbell2017roads, leone2024stabilizer}. In this QRT, the stabilizer states and Clifford group constitute the free states and unitaries.} The non-stabilizerness of a pure state can be computed using the linear stabilizer entropy $
m_{\mathrm{lin}}(|\psi\rangle)=1-2^N\text{Tr}[Q(|\psi\rangle\langle\psi |)^{\otimes 4}],  
$
where {$Q=\sum_{i=1}^{2^{2N}}P_{i}^{\otimes 4}/2^{2N}$} is the projector onto the stabilizer space {and $P_i$s are $N$-qubit Pauli strings}~\cite{leone2022stabilizer, leone2021quantum, turkeshi2023measuring, haug2023quantifying, tarabunga2024nonstabilizerness, santra2025complexitytransitionschaoticquantum, varikuti2025impact, capecci2025role, leone2024stabilizer, bittel2025operational, tirrito2024anticoncentration, turkeshi2025magic, tarabunga2025magic, tirrito2025universal, aditya2025mpemba}. {$m_{\mathrm{lin}}$ is a $4$-th degree state polynomial.} The non-stabilizing power of a unitary $U$ is $m_p(U)=1-2^N\text{Tr}[QU^{\otimes 4}\rho_s U^{\dagger\otimes 4}]$, where $\rho_{s}:=\langle\left( |\psi\rangle\langle\psi | \right)^{\otimes 4}\rangle_{|\psi\rangle\in \mathrm{STAB}(N)}$ is the fourth moment of the stabilizer states \cite{leone2022stabilizer, varikuti2025impact, robin2025stabilizer, robin2024magic, chernyshev2025quantum, brokemeier2025quantum}.
Analogous to Eq.~(\ref{twirl}), {algebraic manipulations lead to the twirling identity} (see SM~\cite{supplemental})
\begin{align}\label{magic_main} 
\left\langle \left(CU\right)^{\otimes 4}\rho_\text{s} \left( CU \right)^{\dagger \otimes 4} \right\rangle_{C}  = \left[ 1-\dfrac{m_p(U)}{\overline{m}}  \right]\rho_{\text{s}}+\dfrac{m_p(U)}{\overline{m}}\bm{\Pi}^{(4)}, 
\end{align}
where $\bm{\Pi}^{(4)} = \Pi^{(4)}/\mathrm{Tr}(\Pi^{(4)})$, with $\Pi^{(4)} = \frac{1}{4!}\sum_{\pi \in S_4} \pi$, and {$\pi$ denotes the permutation operator acting on four copies of the Hilbert space $\mathcal{H}^{2^N}$}.
Similar to the $Z_2$-asymmetry, the above equation can be used to devise an experimental protocol to compute $m_p(U)$. Specifically, for a random stabilizer state $|\psi\rangle\in\mathcal{H}_{A}\otimes\mathcal{H}_{B}$ and a random Clifford unitary $C$, the probability to project onto a fixed computational-basis state $|b\rangle \in \mathcal{H}_B$ is related to $m_p(U)$ via Eq.~(\ref{magic_main}) as $m_p(U)/\overline{m} = (k_2 - \langle p_b^4 \rangle_{C, |\psi\rangle}) / (k_2 - k_1)$, where $k_1=\text{Tr}(\langle b^{\otimes 4}| \bm{\Pi}^{(4)} |b^{\otimes 4}\rangle )$ and $k_2=\text{Tr}(\langle b^{\otimes 4}| \rho_s |b^{\otimes 4}\rangle )$. Explicit expressions for $k_1$ and $k_2$ are given in the SM~\cite{supplemental}. This result confirms the applicability of Theorem \ref{theorem1} to the QRT of non-stabilizerness.
{As one can show analytically~\cite{supplemental}, when $N$ is large and $U$ is a Haar-random unitary, the variance of the estimator decays like $\sim 72 /(2^{N_{B}}-1)$. 
We anticipate, and corroborate numerically, a similar scaling for the estimator's variance for more general unitaries~\cite{supplemental}. 
}

The non-stabilizerness of the projected states can be probed using $m_{\text{PS}}\equiv 1 - 2^{N_A}\text{Tr}[ Q_{A}\left(|\phi_b\rangle\langle\phi_b|\right)^{\otimes 4}]$, where $|\phi_b\rangle=\langle b|CU|\psi\rangle /\sqrt{p_b}$, and $Q_{A}=\sum_{j=0}^{2^{2N_A}} (P_j)^{\otimes 4}/2^{2N_A}$ with $P_j$s being Pauli strings over $N_A$-qubits. The averaged non-stabilizerness of the projected states, up to leading order becomes ${\langle m_{\text{PS}}\rangle} \approx \left({m_p(U)\; k_1}\;{\overline{m_{A}}}\right)/\left(\overline{m}\;\langle p^4_b\rangle\right)$ \cite{supplemental}, where $\langle\cdot\rangle$ indicates averaging over the free elements $C$ and $|\psi\rangle$. This is in accordance with Eq.~(\ref{2res}). When $N_A$ is comparable to $N$, we again have $k_1\approx k_2$, leading to $\langle m_{\text{PS}}\rangle / \overline{m_A} \approx m_p(U) / \overline{m}$. This approximation becomes exact in the no-measurement limit, i.e., $N_A\rightarrow N$. Hence, this result offers another way of characterizing $m_p(U)$. Finally, we remind that the twirling identity in Eq.~(\ref{magic_main}) also allows one to probe the non-stabilizing power of the circuits consisting of alternating free and resourceful unitaries. Let $U^{(n)} = C_{n}U C_{n-1} U \cdots C_1 U$, where each $C_j$ ($\forall\,\,1 \leq j \leq n-1$) is an independently drawn Clifford unitary and $U$ is fixed. 
Then, Eq.~(\ref{magic_main}) implies \cite{varikuti2025impact} 
\begin{eqnarray}\label{symexp_mag}
\left\langle m_p(U^{(n)}) \right\rangle_{\tilde{C}} = \overline{m} \left[ 1 - \left( 1 - \frac{m_p(U)}{\overline{m}} \right)^n \right].
\end{eqnarray}
{As this equation indicates, Theorem~\ref{theorem2} is generic, showing the exponential convergence of quantum resources to their Haar-averaged values when $m_p(U)$ is finite.}

\begin{figure} 
\includegraphics[scale=0.5]{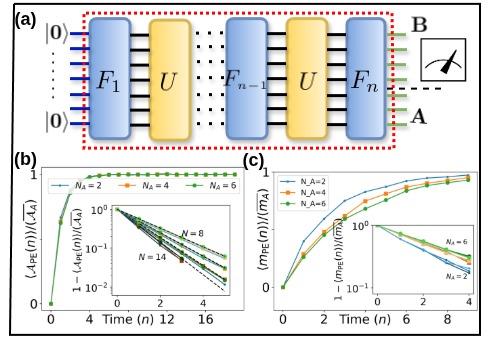}
\caption{\label{fig:sch-dt2} (a) Illustration of repeated free and non-free operations on a fixed free state, followed by projective measurements on a subset of qubits. 
(b) Deep thermalization of the $Z_2$-asymmetry for the above circuit with $N = 14$ and subsystem sizes $N_A = 2$ to $6$. The close agreement between the curves shows that subsystem asymmetry provides a reliable probe of the AGP. Inset: relaxation dynamics for $U = u^{\otimes N}$ with $u = e^{-i\pi \sigma_x/24}$ and $N = 8, 10, 12, 14$ (light to dark). For every $N$, the relaxation rates corresponding to different $N_A$ nearly coincide. 
(c) Deep thermalization of non-stabilizerness under repeated interspersions of Clifford and non-Clifford operations for the unitary $U = \mathbb{I} \otimes \exp{-i\pi\sigma_x \otimes \sigma_x/8} \otimes \mathbb{I}$, for $N=12$ and different $N_A$. $\langle m(n)_{\mathrm{PE}}\rangle/\overline{m_A}$ approaches $\langle m_p(U^{(n)})\rangle_{\tilde{C}}$ as $N_A\rightarrow N$. Inset: data for $N=8,10,12$ (light to dark). In both (b) and (c), data is averaged over $\sim 10^3$ random circuit instances. }      
\end{figure}

\textit{Deep thermalization in QRTs.---}Here, we probe deep thermalization in the QRTs of asymmetry and non-stabilizerness under the circuit dynamics shown in Fig.~\ref{fig:sch-dt2}(a).
Deep thermalization in QRTs can be understood in two equivalent ways: (i) through the thermalization of quantum resources in the projected ensembles, which reflect their subsystem behavior, and (ii) through the emergence of approximate state designs from these ensembles. Both perspectives are intimately related: the thermalization of resources implies that the projected ensembles approximate at least $t$-designs ($t=2$ and $4$ for asymmetry and non-stabilizerness, respectively). {In the following, we discuss the former perspective, leaving the latter to the End Matter.}

For the $Z_2$-asymmetry, we take the initial states to be random $Z_2$-symmetric. At discrete time $n=0$, measurements in the computational basis ensure that the projected states on the complementary subsystem remain $Z_2$-symmetric, 
and therefore $\langle \mathcal{A}_{\text{PE}}\rangle_{|\psi\rangle, \tilde{F}} = 0$, {where $\mathcal{A}_{\mathrm{PE}}=\sum_{b}p_b \mathcal{A}_{\mathrm{PS}}(|\phi_b\rangle)$ denotes the resource content of the projected ensemble for a single circuit instance}. As the states evolve under the circuit dynamics, their global asymmetry grows and relaxes to the Haar-averaged value in accordance with Eq.~(\ref{expsymmain}). This indicates that the states become Haar-like at the level of their second moment. Since ensembles of $2$-design states yield projected ensembles that also approximate $2$-designs \cite{cotler2023emergent}, the asymmetry of the projected ensembles likewise grows and thermalizes to the Haar value, which we identify as deep thermalization of $Z_2$-asymmetry.
{Equation~(\ref{2res}) further shows that the asymmetry of the projected states closely tracks the AGP of the full dynamics.}
In particular, for $N_A\sim N$, we have $\langle \mathcal{A}_{\text{PS}}(n)\rangle_{|\psi\rangle, \tilde{F}} /\overline{\mathcal{A}}_{A} \approx \langle \mathcal{A}_p(U^{(n)})\rangle_{\tilde{F}}/\overline{\mathcal{A}}$. Consequently, the collective asymmetry of the projected ensemble can be expected to follow a similar exponential relaxation as in Eq.~(\ref{expsymmain}), as confirmed numerically in Fig.~\ref{fig:sch-dt2}(b). The figure shows $\langle \mathcal{A}_{\text{PE}}(n)\rangle_{\tilde{F}}/\overline{\mathcal{A}}_{A}$ for $N=8, 10, 12, 14$, with $N_A=2, 4, 6$, all converging to the normalized AGP $\langle \mathcal{A}_{p}(U^{(n)})\rangle_{\tilde{F}}/\overline{\mathcal{A}}$, indicating the deep thermalization of the asymmetry. The inset compares $\langle \mathcal{A}_{\text{PE}}(n)\rangle_{\tilde{F}}$ with the AGP of the dynamics, $\langle \mathcal{A}(U^{(n)})\rangle_{|\psi\rangle, \tilde{F}}$, showing remarkable agreement even for small subsystems ($N_A=2$), consistent with our analytical predictions.

{
We now turn to the behavior of non-stabilizerness in projected ensembles, given by $m_{\mathrm{PE}} = \sum_{b}p_b m_{\mathrm{PS}}(|\phi_{b}\rangle)$, under interlaced Clifford and non-Clifford dynamics. At $n=0$, computational-basis measurements leave the projected states in the stabilizer set, yielding $m_{\mathrm{PE}} = 0$. As the dynamics proceeds, the non-stabilizerness of the projected ensemble, averaged over initial stabilizer states and Clifford unitaries, grows and relaxes exponentially toward the Haar value $\bar{m}_A$ for all system and subsystem sizes considered [Fig.~\ref{fig:sch-dt2}(c)]. For larger subsystems, $\langle m_{\mathrm{PE}}(n)\rangle/\bar{m}_A$ closely tracks the non-stabilizing power of the full dynamics. The inset shows even faster relaxation of $\langle m_{\mathrm{PE}}(n)\rangle$ than the global evolution. This numerical behavior is consistent with Eq.~(\ref{symexp_mag}).
}

\textit{Discussion.---}
{In this Letter, we have developed a protocol to quantify and characterize the resource-generating power of unitaries across a broad class of resource theories. The quantum resources the framework encompasses have implications for quantum computation, quantum foundations, and the classical simulability of quantum dynamics.
A central outcome is an unbiased estimator whose variance decays at least exponentially with the size of the measured subsystem, enabling scalable experimental characterization of quantum resources. Beyond its practical utility, our framework reveals that free and resourceful operations together drive the thermalization of resource measures toward their Haar values, while deep thermalization provides a complementary subsystem-level proxy for the resource content of dynamics.
}

{Being built on twirling identities, our framework naturally extends to quantifying the resource content of arbitrary quantum states.}
Although our protocol relies on random free operations, it remains an open question whether similar twirling identities emerge when using local free operations \cite{oliviero2022measuring, coffman2024local, elben2023randomized}, 
{which are often more scalable in practice.} {In addition, developing efficient methods for sampling free unitaries constitutes an important direction for future research.}
Our main results hinge on twirling identities like Eqs.~(\ref{twirl}) and~(\ref{magic_main}) (see also SM \cite{supplemental}), where averaging free unitaries yields a weighted sum of free and Haar moments. Any QRT admitting such a twirling identity will lead to the same conclusions.
Investigating whether this framework extends to other QRTs, such as non-Gaussianity \cite{jozsa2008matchgates, hebenstreit2019all, knill2001fermionic, coffman2025measuring, braccia2024computing, 3yx4-1j27, w97w-7zny}, or non-stabilizerness for qudits~\cite{magni2025anticoncentration, Magni2025quantumcomplexity}, is a compelling direction of future research.


\textbf{Acknowledgments.} NDV and SB acknowledge insightful discussions with Arul Lakshminarayan. We thank Andrea Legramandi and Leela Ganesh Chandra Lakkaraju for discussions related to quantum resources. 
This project has received funding from the Italian Ministry of University and Research (MUR) through project DYNAMITE QUANTERA2\_00056, in the frame of  ERANET COFUND QuantERA II – 2021 call co-funded by the European Union (H2020, GA No 101017733); the European Union - Next Generation EU, Mission 4, Component 2 - CUP E53D23002240006, and CARITRO through project SQuaSH.
This project was supported by the European Union under Horizon Europe Programme - Grant Agreement 101080086 - NeQST; the Swiss State Secretariat for Education, Research and lnnovation (SERI) under contract number UeMO19-5.1; the Provincia Autonoma di Trento, and Q@TN, the joint lab between University of Trento, FBK—Fondazione Bruno Kessler, INFN—National Institute for Nuclear Physics, and CNR—National Research Council.
S.B.\ acknowledges CINECA for use of HPC resources
under Italian SuperComputing Resource Allocation– ISCRA
 Class C Project No.DeepSYK - HP10CAD1L3. 
Views and opinions expressed are however those of the author(s) only and do not necessarily reflect those of the European Union or of the Ministry of University and Research.
Neither the European Union nor the granting authority can be held responsible for them.

\bibliography{main.bib}
\section*{End Matter}

\begin{figure}[t]
\includegraphics[scale=0.31]{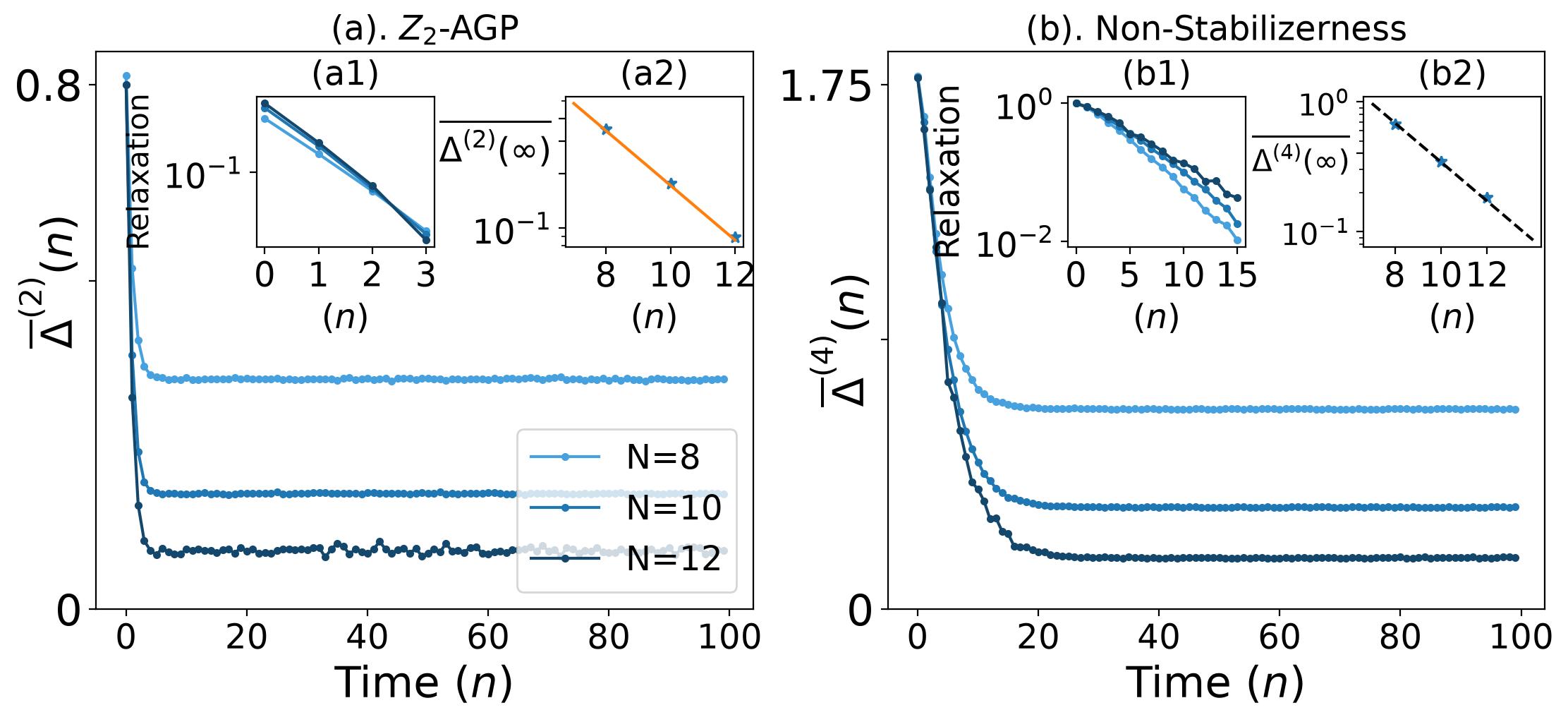}
\caption{\label{fig:td} Deep thermalization, quantified by the trace distance between the moments of the projected ensemble and that of the Haar ensemble, in the QRT of (a) $Z_2$-asymmetry and (b) non-stabilizerness. In both cases, the initial free state evolves under repeated interspersings of random free and non-free operations as depicted in Fig.~\ref{fig:sch-dt2}(a), followed by measurements in the computational basis. The data are averaged over $\sim 10^2$ realizations of the circuit and the initial free state. The subsystem size is kept fixed at $N_A = 2$. Insets: (a1,b1) The trace distance exhibits exponential relaxation toward its long-time average across all system sizes, consistent with global thermalization. We quantify this relaxation using $1 - \overline{\Delta}^{(t)} / \overline{\Delta^{(t)}(\infty)}$, 
where $\overline{\Delta^{(t)}(\infty)}$ denotes the long-time average and $t = 2,4$ correspond to the $\mathbb{Z}_2$-AGP and non-stabilizerness, respectively. (b1,b2) The long-time average decays exponentially with increasing system size.  }     
\end{figure}   

\textit{Emergent state designs in QRTs.---}\blue{Complementary to the analysis in the main text regarding the thermalization of quantum resources in projected ensembles, we further characterize deep thermalization by studying how the projected ensembles of typical free generator states converge to state $t$-designs \footnote[5]{Recall that $t=2$ and $4$ for the QRTs of $Z_2$-asymetry and non-stabilizerness, respectively.}. For this purpose, we use the trace-distance measure between the moments, given by $\Delta^{(t)}(\tau) = \| \mathcal{M}^{(t)}_{\text{PE}}(n) - \Pi^{(t)}_{A} \|_{1}$. At $n=0$, the first moment of the projected ensemble already coincides with the Haar moment ($\mathbb{I}{A}/2^{N_A}$), while higher moments remain symmetrized. Under the dynamics generated by $U^{(n)}$ in Fig.~\ref{fig:sch-dt2}(a), these moments relax exponentially in time toward their Haar counterparts. Figures~\ref{fig:td}(a) and \ref{fig:td}(b) show this relaxation for the second and fourth moments, respectively, for $N=8,10,12$ with $N_A=2$, where measurements on $N_B$ qubits are performed in the computational basis and results are averaged over $\sim 10^2$ circuit realizations. The second-moment distance $\Delta^{(2)}(n)$ decays exponentially with time, and its long-time value decreases exponentially with increasing $N_B$. 
There are well-known situations where first non-trivial deviations from Haar statistics occur at higher moments (e.g., stabilizer states without projections form exact 3-designs, but not 4-designs \cite{zhu2016clifford, leone2025non}). 
Therefore, we also study $\Delta^{(4)}(n)$, which again shows exponential decay, indicating rapid convergence of the projected ensemble toward Haar-like statistics, with long-time values that vanish exponentially with system size.}

\appendix
\onecolumngrid


\setcounter{theorem}{0}
\renewcommand{\thetheorem}{A\arabic{theorem}}


\clearpage

\onecolumngrid

\begin{center}
{\large\bfseries Supplemental Material: Deep Thermalization and Measurements of Quantum Resources}
\end{center}

\vspace{1cm}

\tableofcontents
\newpage

\section{Details on resource quantifiers}
\label{appa}

In this section, we introduce the entropic resource quantifiers associated with the two quantum resource theories (QRTs) considered in the main text, namely, the QRT of $\mathbb{Z}_2$-asymmetry and the QRT of non-stabilizerness. For the QRT of $\mathbb{Z}_2$-asymmetry, we provide explicit details concerning the structure of the free states and the corresponding free operations, thereby clarifying how the entropic measures arise in this setting. For the QRT of non-stabilizerness, the framework is comparatively well established in the literature, and hence, we focus on presenting the entropic quantifier relevant to this resource. Taken together, these discussions serve to highlight how entropy-based measures can provide a unifying lens through which different resource theories may be compared and understood.

\subsubsection{$Z_2$-Asymmetry generating power}
Here, we provide useful details concerning the quantum resource theory of $Z_2$-asymmetry \cite{chitambar2019quantum, gour2008resource, marvian2013theory}. The $Z_2$-symmetry has a single generator $\sigma^{\otimes N}_z$ supported over an $N$-qubit Hilbert space. Then, the Hilbert space can be decomposed into two invariant sectors having the eigenvalues or charges $\pm 1$, with corresponding subspace projectors
\begin{eqnarray}
 \mathbf{Z}_{0} = \dfrac{\mathbb{I}_{2^N}+\sigma^{\otimes N}_z}{2}   \;\text{ and }\;   \mathbf{Z}_{1} = \dfrac{\mathbb{I}_{2^N}-\sigma^{\otimes N}_z}{2}.
\end{eqnarray}
\textbf{Free states and operations: }The eigenvectors of $\sigma^{\otimes N}_z$, whose ensemble we denote with $\mathcal{E}_{|\psi\rangle_{\text{free}}}$, constitute the free states of the $Z_2$-asymmetry. We define free operations as resource-non-generating maps that map free states to free states. This broader class includes unitaries that permute sectors or basis states without generating the resource. Importantly, it avoids the extra assumptions needed under conventional covariant operations, while still remaining consistent with the overall resource-theory framework \footnote[4]{The results obtained here for $Z_{2}$-asymmetry also extend, under mild additional assumptions, to the setting where the free operations are covariant.}. Under this definition, the free unitaries of the $Z_2$-asymmetry QRT constitute the union of two disjoint sets:
\begin{enumerate}
    \item  The set of unitaries that commute with $\sigma^{\otimes N}_z$, i.e., $\{u\in \text{U}(2^N)\; |\; [u, \sigma^{\otimes N}_{z}]=\mathbf{0} \}$, denoted with $\mathcal{U}_{C}$ (see next Section for a simple construction of these unitaries using polar decomposition), and
    \item all the unitaries that anti-commute with $\sigma^{\otimes N}_z$, i.e., $\{ u\in \text{U}(2^N)\;|\; \{u, \sigma^{\otimes N}_z\}=\mathbf{0} \}$, denoted with $\mathcal{U}_{AC}$.
\end{enumerate}
When the free states are acted upon by the unitaries from $\mathcal{U}_{C}$, their charge remains preserved, i.e., for some free state $|\psi_k\rangle$ with charge $k$, we have $u|\psi_k\rangle = (-1)^k|\psi_k\rangle$ for every $u\in \mathcal{U}_{C}$. However, it gets flipped under the application of unitaries from $\mathcal{U}_{AC}$ --- for any $u\in \mathcal{U}_{AC}$, we have $u\;|\psi_k\rangle = |\phi_{k+1 \;\text{mod}\; 2}\rangle$, where $|\psi\rangle$ and $|\phi\rangle$ are two arbitrary free states with different charges. Non-free operations, in contrast, create superpositions of the free states from different charge sectors. With a slight rearrangement of the basis, one can write the unitaries belonging to $\mathcal{U}_{C}$ and $\mathcal{U}_{AC}$ as
\begin{eqnarray}
u_1\equiv \raisebox{-1.2cm}{\includegraphics[scale=0.15]{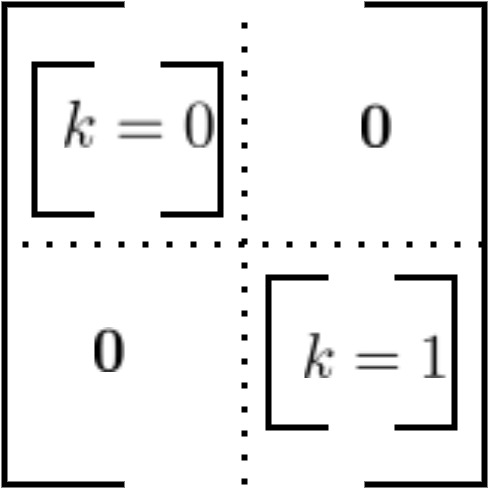}}   \;\;\;\text{ and }\;\;\; u_2\equiv \raisebox{-1.2cm}{\includegraphics[scale=0.15]{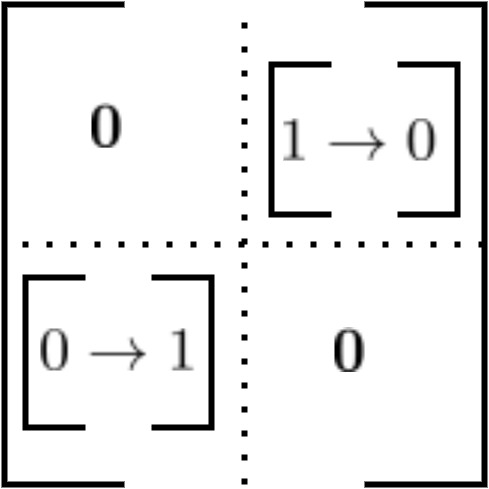}}.
\end{eqnarray}
It is then easy to see that $u_2$ can be obtained by permuting the blocks of $u_1$, i.e., $u_2= P u_1$, where $P$ transforms the basis vectors from one charge sector to the other. Then, we have $P\mathbf{Z}_kP^{\dagger}= \mathbf{Z}_{k+1\;\text{mod }2}$.

We now show that $\mathcal{U}_{\mathbb{Z}_2} \equiv \mathcal{U}_{C} \cup \mathcal{U}_{AC}$ is a subgroup of the unitary group $\mathrm{U}(2^N)$. First, note that if $w_1, w_2 \in \mathcal{U}_{C}$, then $w_1 w_2 \in \mathcal{U}_{C}$. Similarly, if $w_1, w_2 \in \mathcal{U}_{AC}$, then $w_1 w_2 \in \mathcal{U}_{C}$. Finally, if $w_1 \in \mathcal{U}_{C}$ and $w_2 \in \mathcal{U}_{AC}$, then $w_1 w_2 \in \mathcal{U}_{AC}$. Thus, $\mathcal{U}_{\mathbb{Z}_2}$ is closed under multiplication. Next, consider $u \in \mathcal{U}_{C}$. We can write $u = u_{k=0} \oplus u_{k=1}$, hence its inverse is $u^{\dagger} = u^{\dagger}_{k=0} \oplus u^{\dagger}_{k=1}$, which again lies in $\mathcal{U}_{C}$. Similarly, the inverse of any unitary in $\mathcal{U}_{AC}$ also belongs to $\mathcal{U}_{AC}$. Therefore, $\mathcal{U}_{\mathbb{Z}_2}$ is closed under taking inverses. Since $\mathcal{U}_{\mathbb{Z}_2}$ is closed under both multiplication and inversion, it follows that $\mathcal{U}_{\mathbb{Z}_2}$ is a subgroup of $\mathrm{U}(2^N)$. In the following, we show that $\mathcal{U}_{\mathbb{Z}_2}$ is a compact subgroup of the unitary group $\mathcal{U}(2^N)$, thereby showing that it can be associated with a natural Haar measure.

\begin{lemma}
Let $\mathcal{U}_{\mathbb{Z}_2}(2^N)$ denote the subset of $\mathcal{U}(2^N)$ consisting of all unitaries that either commute or anti-commute with the $\mathbb{Z}_2$ symmetry operator $S=\sigma^{\otimes N}_{z}$, i.e.,
$[w, S] = 0\; \text{ or }\; \{w, S\}=0\quad \text{for any} \quad w \in \mathcal{U}_{\mathbb{Z}_2}(2^N)$. Then, $\mathcal{U}_{\mathbb{Z}_2}(2^N)$ forms a compact subgroup of the full unitary group $\mathcal{U}(2^N)$.
\end{lemma}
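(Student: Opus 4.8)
The plan is to show that $\mathcal{U}_{\mathbb{Z}_2}(2^N)$ is a closed subset of the compact group $\mathcal{U}(2^N)$; since a closed subset of a compact space is compact, and we have already verified (in the paragraph preceding the lemma) that $\mathcal{U}_{\mathbb{Z}_2}(2^N)$ is a subgroup, this suffices. The argument therefore reduces entirely to a topological closedness check, which I would carry out using sequences (equivalently, continuity of the relevant maps), exploiting that $\mathcal{U}(2^N)$ is a metric space in the operator norm.

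\medskip

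The key steps, in order, are as follows. First I would recall that $\mathcal{U}(2^N)$ is compact in the operator-norm topology (it is a closed and bounded subset of the finite-dimensional space of $2^N \times 2^N$ matrices), so it suffices to prove that $\mathcal{U}_{\mathbb{Z}_2}(2^N)$ is closed in $\mathcal{U}(2^N)$. Second, I would write $\mathcal{U}_{\mathbb{Z}_2}(2^N) = \mathcal{U}_C \cup \mathcal{U}_{AC}$, where $\mathcal{U}_C = \{w : wS - Sw = 0\}$ and $\mathcal{U}_{AC} = \{w : wS + Sw = 0\}$, with $S = \sigma_z^{\otimes N}$ fixed. Third, I would observe that the maps $w \mapsto wS - Sw$ and $w \mapsto wS + Sw$ are continuous (indeed they are restrictions to $\mathcal{U}(2^N)$ of linear, hence continuous, maps on matrix space, since $S$ is a fixed matrix and matrix multiplication is continuous). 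Therefore $\mathcal{U}_C$ and $\mathcal{U}_{AC}$ are each the preimage of the closed set $\{0\}$ under a continuous map, hence closed in $\mathcal{U}(2^N)$. Fourth, since a finite union of closed sets is closed, $\mathcal{U}_{\mathbb{Z}_2}(2^N)$ is closed in $\mathcal{U}(2^N)$. Fifth, combining closedness with the already-established subgroup property and with compactness of $\mathcal{U}(2^N)$, I conclude that $\mathcal{U}_{\mathbb{Z}_2}(2^N)$ is a compact subgroup, which by the standard construction admits a (unique, bi-invariant, normalized) Haar measure — the point this lemma is invoked for in the main text.

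\medskip

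I do not anticipate a serious obstacle here: the statement is essentially a routine point-set topology argument once the subgroup structure is in hand. The only mild subtlety worth a sentence in the write-up is to make explicit that $\{0\}$ is closed in the (Hausdorff, finite-dimensional) matrix space and that restricting a continuous map to the subspace $\mathcal{U}(2^N)$ preserves continuity, so that the preimages are closed \emph{relative to} $\mathcal{U}(2^N)$ — which is exactly what is needed. One could alternatively phrase the entire argument sequentially: take $w_n \to w$ with $w_n \in \mathcal{U}_{\mathbb{Z}_2}(2^N)$; by pigeonhole a subsequence lies entirely in $\mathcal{U}_C$ or entirely in $\mathcal{U}_{AC}$; pass to the limit in the identity $w_n S = \pm S w_n$ using joint continuity of multiplication to get $w S = \pm S w$, so $w \in \mathcal{U}_{\mathbb{Z}_2}(2^N)$. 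Either phrasing is short; I would present the preimage version as the main line and mention the sequential version as the intuition.
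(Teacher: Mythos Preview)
Your proposal is correct and matches the paper's own proof essentially line for line: both argue that $\mathcal{U}_C$ and $\mathcal{U}_{AC}$ are preimages of $\{0\}$ under continuous (linear) maps, hence closed, so their union is closed and therefore compact inside the compact group $\mathcal{U}(2^N)$. The only cosmetic difference is that the paper writes the defining maps as $w \mapsto S^\dagger w S \mp w$ rather than $w \mapsto wS \mp Sw$, which is equivalent since $S$ is unitary.
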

\begin{proof}
We have already shown that $\mathcal{U}_{\mathbb{Z}_2}(2^N)\equiv \mathcal{U}_{C}\cup\mathcal{U}_{AC}$ is a subgroup of $\mathcal{U}(2^N)$, since it is a subset of the latter and satisfies the group axioms. Because the operator norm of unitary matrices is bounded, $\mathcal{U}_{\mathbb{Z}_2}(2^N)$ is bounded as well. To establish closedness, note that the union of two closed sets is always closed. In particular, $ \mathcal{U}_{C} = \{ w \in \mathcal{U}(2^N) : S^\dagger w S - w = 0 \}$ is the preimage of the zero matrix under the continuous map $w \mapsto S^\dagger w S - w$, while $\mathcal{U}_{AC} = \{ w \in \mathcal{U}(2^N) : S^\dagger w S + w = 0 \}$ is the preimage of the zero matrix under the continuous map $w \mapsto S^\dagger w S + w$. Hence, both $\mathcal{U}_{C}$ and $\mathcal{U}_{AC}$ are closed, and so is their union $\mathcal{U}_{\mathbb{Z}_2}(2^N)$. Since $\mathcal{U}_{\mathbb{Z}_2}(2^N)$ is both closed and bounded in the finite-dimensional normed space $M_{2^N}(\mathbb{C})$, it is compact. Therefore, $\mathcal{U}_{\mathbb{Z}_2}(2^N)$ is a compact subgroup of $\mathcal{U}(2^N)$, and consequently admits a natural Haar measure.
\end{proof}

We shall make use of the above result to perform Haar integrals over the free unitaries in the QRT of $Z_2$-asymmetry. Now, we provide the quantifiers of the $Z_2$-asymmetry in quantum states and the quantum evolutions.

\textbf{$\alpha$-R\'enyi entropy of $Z_2$-asymmetry: }Given a quantum state $|\psi\rangle\in\mathcal{H}^{2^N}$, we define the probability vector $\mathbb{P}\equiv \{\langle \psi |\mathbf{Z}_0|\psi\rangle , \langle \psi |\mathbf{Z}_1|\psi\rangle\}$. When $|\psi\rangle$ is an eigenstate of the symmetry operator $\sigma^{\otimes N}_z$, one of the two elements in $\mathbb{P}$ becomes $1$, while the other vanishes. One can then define the $\alpha$-R\'enyi-entropy, as a measure of the $Z_2$-asymmetry:
\begin{eqnarray}
\mathcal{A}^{\alpha}(|\psi\rangle) = \dfrac{1}{1-\alpha}\ln\left( \langle \mathbf{Z_0}\rangle^{\alpha} + \langle \mathbf{Z_1}\rangle^{\alpha} \right).
\end{eqnarray}
The above quantity vanishes if and only if $|\psi\rangle$ is an eigenvector of the symmetry operator. Also, under the action of free operations mentioned above, $\mathcal{A}^{\alpha}(|\psi\rangle)$ remains invariant. Therefore, $\mathcal{A}^{\alpha}$  provides a faithful measure of the $Z_2$-asymmetry of quantum states. In this work, we are interested in linear asymmetry, given by
\begin{eqnarray}
\mathcal{A}_{\text{lin}}(|\psi\rangle)=\text{Tr}\left[ \mathbf{Z}^{(2)\perp}\left( |\psi\rangle\langle\psi | \right)^{\otimes 2} \right] \,,
\end{eqnarray}
where $\mathbf{Z}^{(2)\perp}=\mathbb{I}-\mathbf{Z}^{\otimes 2}_{0}-\mathbf{Z}^{\otimes 2}_{1}$ denotes the complement of the $Z_2$-symmetric subspaces over the two-replica space.

\textbf{Asymmetry Generating Power: }Having defined all the necessary tools, one can define the asymmetry generating power (AGP) of an arbitrary unitary $U$ as the average amount of symmetry breaking that $U$ introduces as it acts on a typical state with the $Z_2$-symmetry:
\begin{eqnarray}
\mathcal{A}_{p}(U)= \text{Tr}\left[\mathbf{Z}^{(2)\perp}U^{\otimes 2}\overline{\left( |\psi\rangle\langle\psi | \right)^{\otimes 2}} U^{\dagger \otimes 2} \right],
\end{eqnarray}
where the overline indicates the statistical average over all free states from $\mathcal{E}_{|\psi\rangle_{\text{free}}}$. This average can be evaluated as
\begin{eqnarray}
\overline{\left( |\psi\rangle\langle\psi | \right)^{\otimes 2}} = \dfrac{1}{2}\left[   \dfrac{\mathbf{Z}^{\otimes 2}_0 \Pi^{(2)}}{\text{Tr}\left( \mathbf{Z}^{\otimes 2}_0 \Pi^{(2)} \right)} + \dfrac{\mathbf{Z}^{\otimes 2}_1 \Pi^{(2)}}{\text{Tr}\left( \mathbf{Z}^{\otimes 2}_1 \Pi^{(2)} \right)}  \right].
\end{eqnarray}
{Here, $\Pi^{(2)} = ({\mathbb{I} + \mathrm{SWAP}})/{2}$ denotes the projector onto the permutation-invariant subspace of the two-copy Hilbert space $\mathcal{H}^{2^N} \otimes \mathcal{H}^{2^N}$, where $\mathrm{SWAP}$ exchanges the two replicas}. The factor $1/2$ arises from the fact that both charge sectors have equal dimension and thus equal probability of being chosen. Also, note that $\text{Tr}\left( \mathbf{Z}^{\otimes 2}_0 \Pi^{(2)} \right) = \text{Tr}\left( \mathbf{Z}^{\otimes 2}_1 \Pi^{(2)} \right)$. It then follows that
\begin{eqnarray}
\mathcal{A}_{p}(U)&=& 1- \dfrac{1}{2\mathcal{Z}}\sum_{s_1, s_2\in\{0, 1\}}\text{Tr}\left(  \mathbf{Z}^{\otimes 2}_{s_1}U^{\otimes 2}\mathbf{Z}^{\otimes 2}_{s_2}\Pi^{(2)} U^{\dagger\otimes 2} \right)      \nonumber\\
&=& 1- \dfrac{1}{4\mathcal{Z}}\sum_{s_1, s_2\in\{0, 1\}}\left[\text{Tr}\left(  \mathbf{Z}_{s_1}U\mathbf{Z}_{s_2}U^{\dagger} \right)^2 + \text{Tr}\left(  \mathbf{Z}_{s_1}U\mathbf{Z}_{s_2}U^{\dagger} \mathbf{Z}_{s_1}U\mathbf{Z}_{s_2}U^{\dagger} \right) \right],
\end{eqnarray}
where $\mathcal{Z}=\text{Tr}(\mathbf{Z}^{\otimes 2}_{0}\Pi^{(2)})=\text{Tr}(\mathbf{Z}^{\otimes 2}_{1}\Pi^{(2)})$. The above expression implies that the linear AGP of the arbitrary unitary $U$ is directly related to the two-point and four-point out-of-time ordered correlators (OTOCs) of $\mathbf{Z}_{0}$ and $\mathbf{Z}_{1}$. Similar connections between the OTOCs and other linear resource-generating powers, such as non-stabilizerness and entanglement, have been found in the literature \cite{leone2022stabilizer, styliaris2021information, varikuti2022out}. Interestingly, the OTOCs were shown to have surprising connections with asymmetry through the Fisher information, as noted in Ref.~\cite {garttner2018relating}. Moreover, the AGP of the Haar random unitaries is
\begin{equation}
\overline{\mathcal{A}}=1-\dfrac{1}{\mathrm{Tr}(\Pi^{(2)})}\left[\text{Tr}(\mathbf{Z}^{\otimes 2}_{0}\Pi^{(2)})+ \text{Tr}(\mathbf{Z}^{\otimes 2}_{1}\Pi^{(2)})\right] = \dfrac{2^N}{2(2^N+1)}.
\end{equation}
In the large-$N$ limit, $\overline{\mathcal{A}}$ approaches $1/2$. Moreover, it is worth noting that both the $Z_2$ asymmetry of a quantum state and the $Z_2$-AGP of a unitary operator are always upper bounded by $1/2$.

\begin{table*}[t]
\centering
\caption{Summary of the quantum resource theories considered in this work.}
\label{tab:resource_theories}
\renewcommand{\arraystretch}{1.}
\begin{tabular}{|l|l|l|l|}
\hline
\parbox[c]{2.5cm}{\textbf{Resource Theory}} &
\parbox[c]{3cm}{\textbf{Free (Pure) States}} &
\parbox[c]{4.5cm}{\textbf{Free Unitaries}} &
\parbox[c]{5cm}{\textbf{Resource Quantifier}} \\
\hline
\parbox[c]{2.5cm}{
Entanglement} &
\parbox[c]{3.75cm}{Product states $\ket{\psi_A}\otimes\ket{\psi_B}$} &
\parbox[c]{5cm}{
Local unitaries $U_A\otimes U_B$
}
&
\parbox[c]{5cm}{
\vspace{0.08in}
Linear entanglement entropy:
\[
E_{\rm lin}(\psi)=1-\mathrm{Tr}(\rho_A^2)
\]
}
\\
\hline
\parbox[c]{2.5cm}{
Coherence} &
\parbox[c]{3.5cm}{Computational basis states} &
\parbox[c]{5.25cm}{
\[
U=\sum_i e^{i\theta_i}|\pi(i)\rangle\langle i|\,,
\]
\[
\text{with }\pi \text{ a basis vector permu-}
\]
\[
\text{tation operator, }
\pi |i\rangle = |\pi(i)\rangle
\]
}
&
\parbox[c]{5cm}{
Linear coherence:
\[
C_{\rm lin}(\psi)=1-\sum_i|\langle i|\psi\rangle|^4
\]
}
\\
\hline

\parbox[c]{2.5cm}{$\mathbb{Z}_2$-asymmetry} &
\parbox[c]{3.5cm}{Eigenstates of $\sigma^{\otimes N}_z$} &
\parbox[c]{5cm}{
\[
U\in\mathcal{H}^{2^N} \text{ satisfying}
\]
\[
[U,\sigma_z^{\otimes N}]=0
\quad\text{or}\quad
\{U,\sigma_z^{\otimes N}\}=0
\]
}
&
\parbox[c]{5.3cm}{
Linear asymmetry:
\[
A_{\rm lin}(\psi)
=
\mathrm{Tr}\!\left[
Z_{\perp}^{(2)}
(\ket{\psi}\!\bra{\psi})^{\otimes2}
\right]
\]
}
\\
\hline

\parbox[c]{2.5cm}{
Non-stabilizerness\\
(Magic)
}
&
\parbox[c]{3.5cm}{Stabilizer states} &
\parbox[c]{5cm}{
Clifford unitaries:
\[
\text{generated by }\{\text{CNOT, H, S} \} \text{ gates}.
\]
}
&
\parbox[c]{5.5cm}{
\[
\text{Linear stabilizer entropy:}
\]
\[
m(\psi)
=
1-
2^N
\mathrm{Tr}\!\left[
Q(\ket{\psi}\!\bra{\psi})^{\otimes4}
\right]
\]
}
\\
\hline

\end{tabular}
\end{table*}

\subsubsection{Non-stabilizing power}
Here, we briefly summarize the definitions of the non-stabilizerness of a state and the non-stabilizing power of a unitary. A pure state in an $N$-qubit Hilbert space is called a stabilizer state if it is stabilized by a subgroup $\mathcal{S} \subset \mathcal{G}_N$ of Pauli strings with $|\mathcal{S}| = 2^N$, making it a simultaneous $+1$ eigenstate of every $P \in \mathcal{S}$. Such states can be efficiently generated using Clifford circuits and are therefore classically simulable. Given a state $|\psi\rangle$, one can quantify how far it is from the stabilizer space using the linear stabilizer entropy given by \cite{leone2022stabilizer}
\begin{eqnarray}
m(|\psi\rangle) = 1 - 2^N \sum_{i=0}^{2^{2N}-1} \frac{1}{2^{2N}} \langle \psi | P_i | \psi \rangle^4 = 1 - 2^N \text{Tr}[Q(|\psi\rangle \langle \psi|)^{\otimes 4}],
\end{eqnarray}
where $\{P_i\}$ are all $N$-qubit Pauli strings and $Q$ is a projector in $\mathcal{H}^{\otimes 4}$ and is defined as $Q=\sum_{i=0}^{2^{2N}-1}P^{\otimes 4}_i$. Note that $m(|\psi\rangle)$ is invariant under Clifford operations and vanishes if and only if $|\psi\rangle$ is a stabilizer state. Therefore, the stabilizer entropy provides a faithful measure of non-stabilizerness of quantum states. Recently, this quantity has been  shown to be a strong monotone of non-stabilizerness \cite{leone2024stabilizer}. Following this, the linear stabilizing power of an arbitrary quantum evolution can be defined as
\begin{eqnarray}
m_p(U) = 1 - 2^N \text{Tr}\left[ Q U^{\otimes 4} \overline{(|\psi\rangle \langle \psi|)^{\otimes 4}} U^{\dagger \otimes 4} \right],
\end{eqnarray}
where the overline denotes averaging over all stabilizer states. The non-stabilizing power of a unitary quantifies the average amount of non-stabilizerness that the unitary generates when it acts upon a typical stabilizer state. Similar to the stabilizer entropy, this quantity is also invariant under Clifford conjugation and vanishes if and only if $U$ is a Clifford operator.

Linear Stabilizer entropy connects naturally with quantum information scrambling diagnostics, and we focus on it here due to its operational meaning and tractable analytic structure \cite{leone2021quantum, leone2022stabilizer, bittel2025operational}. Other measures like stabilizer fidelity, extent \cite{bravyi2019simulation}, rank \cite{bravyi2019simulation, bravyi2016trading, aaronson2004improved}, mana \cite{veitch2014resource}, and Wigner negativity \cite{pashayan2015estimating} have also been explored in related contexts.

\vspace{1em}
{In the main text, we primarily considered the aforementioned quantum resources to present our main results; see Table~\ref{tab:resource_theories} for a summary of the resource theories studied in this work. Our framework relies on the free unitaries of the underlying resource theory forming a compact group, which enables the use of Haar integral techniques in our derivations. Extending these results to other resource theories, such as quantum thermodynamics or fermionic non-Gaussianity, where similar group-theoretic structures may exist, remains an interesting direction for future work.}

\section{Projected state ensembles}
Here, we briefly summarize the projected state ensembles introduced in Refs.~\cite{cotler2023emergent}. A quantum state $t$-design is an ensemble of pure states whose $t$-th moments match those of the Haar distribution. Formally, an ensemble $\mathcal{E} = \{p_i, |\psi_i\rangle\}$ is an exact $t$-design if for all $k\leq t$, the following holds \cite{renes2004symmetric, klappenecker2005mutually, benchmarking2}:
\begin{equation}
\sum^{|\mathcal{E}|}_{i=1} p_i (|\psi_i\rangle\langle\psi_i|)^{\otimes k} = \int d\psi\, (|\psi\rangle\langle\psi|)^{\otimes k}.
\end{equation}
In essence, a $t$-design provides a finite ensemble of pure states that is indistinguishable from Haar-random states for all tests involving up to $t$ copies of the state. This means that for any state polynomial of degree $t$ or less, its average over the design ensemble coincides with its Haar-average.

The projected ensemble framework aims to construct such designs by performing local measurements on a single many-body state which is evolved under quantum chaotic dynamics. Consider a quantum state $|\psi\rangle$ on $N = N_A + N_B$ qudits, partitioned into subsystems $A$ and $B$. A projective measurement on $B$ in a basis $\{|b\rangle\}$ yields post-measurement states on $A$, forming an ensemble $\mathcal{E}(|\psi\rangle, \mathcal{B}) = \{p_b, |\phi(b)\rangle\}$.
This ensemble approximates a $t$-design when $|\psi\rangle$ is sufficiently complex, such as after chaotic dynamics. The quality of approximation is quantified by the trace norm distance
\begin{equation}\label{eq:Delta_t}
\Delta^{(t)}_{\mathcal{E}}\equiv \left\|\sum_{|b\rangle\in\mathcal{B}}\dfrac{\left(\langle b|\psi\rangle\langle\psi |b\rangle\right)^{\otimes t}}{\left(\langle\psi |b\rangle\langle b|\psi\rangle\right)^{t-1}}-\int\limits_{|\phi\rangle\in\mathcal{E}^{A}_{\text{Haar}}}d\phi \left(|\phi\rangle\langle\phi|\right)^{\otimes t}\right\|_1\leq \varepsilon.
\end{equation}

The Haar average in Eq.~\eqref{eq:Delta_t} can be expressed as
\begin{equation}
\int d\phi\, (|\phi\rangle\langle\phi|)^{\otimes t} = \bm{\Pi}^{(t)}_A,
\end{equation}
where $\bm{\Pi}^{(t)}_A=\Pi^{(t)}_{A}/\mathrm{Tr}(\Pi^{(4)}_{A})$ and $\Pi^{(t)}=\sum_{j=1}^{t!}\pi_j/ t!$ is the projector onto the permutation symmetric subspace of $\mathcal{H}^{\otimes t}$, and $\mathcal{D}_A = d^{N_A}(d^{N_A}+1)\cdots(d^{N_A}+t-1)$. For Haar-random generator states, $\Delta^{(t)}_{\mathcal{E}}$ decays exponentially with $N_B$, independent of the measurement basis. However, when $|\psi\rangle$ respects global symmetries, the appropriate choice of the measurement basis becomes critical. Another important factor that influences whether state designs emerge is the restriction imposed by a resource theory. If the generator state belongs to the free-state set of a given QRT, then the resulting projective ensembles may fail to approximate higher-order state designs. This is one of the aspects we have investigated in this work.

\section{Construction of random unitaries with symmetry}
\label{polar}
In this appendix, we outline the procedure for uniformly sampling random unitaries that commute with the symmetry operator $\sigma^{\otimes N}_z$ \cite{varikuti2024unraveling}. A standard method for generating Haar-random unitaries from $\mathcal{U}(2^N) )$ is to sample a complex Gaussian matrix and then perform the QR decomposition, which factors the input matrix into a unitary component and an upper-triangular component \cite{mezzadri2006generate}. After an appropriate normalization, the resulting unitary matrix is distributed according to the Haar measure over $ \mathcal{U}(2^N) $. However, the QR decomposition fails to preserve symmetries of the input matrix, such as $Z_2$ or translation symmetries. Hence, this method cannot be extended to sample unitaries abiding by symmetries. To circumvent this, we utilize the polar decomposition to generate these unitaries. In the following, we first show that the polar decomposition can be used to generate random unitaries that reproduce the moments of Haar unitaries.

Before proceeding, we first show that if the input matrix has independent complex Gaussian entries with zero mean and unit variance (also known as a Ginibre matrix), then the unitary obtained from its polar decomposition behaves like a Haar-random unitary. This is established by computing the moments of the resulting unitaries (denoted with $\mathcal{E}_{\text{polar}}$) and showing that they coincide with the corresponding Haar moments. Given an arbitrary complex Gaussian matrix $Z$, its right polar decomposition yields $ Z = UP $, where $ P = \sqrt{Z^\dagger Z} $ is a positive semi-definite matrix. When $ Z $ is of full rank, $ U $ can be uniquely defined as $ U = Z(Z^\dagger Z)^{-1/2} $. {Note that the set of rank-deficient matrices has measure zero with respect to the Ginibre ensemble. Indeed, rank deficiency is equivalent to $\det(Z)=0$, which defines the zero set of a nontrivial polynomial in the matrix entries. Such sets are known to have Lebesgue measure zero \cite{Mityagin2020ZeroSet}. Since the Ginibre ensemble is absolutely continuous with respect to Lebesgue measure, it follows that rank-deficient matrices occur with probability zero. Consequently, a Ginibre matrix is almost surely of full rank.}

To see how the Haar moments emerge from the Polar decomposition, consider an arbitrary operator $ A $ acting on $ t $ replicas of a Hilbert space $ \mathcal{H}^d $. The average action of the ensemble on this operator is given by
\begin{eqnarray}
\langle U^{\dagger \otimes t} A U^{\otimes t} \rangle_{U\in \mathcal{E}_{\text{polar}}} &=& \int d\mu(Z)\, \left( (Z^\dagger Z)^{-1/2} Z^\dagger \right)^{\otimes t} A \left( Z (Z^\dagger Z)^{-1/2} \right)^{\otimes t}\nonumber\\
&=& \int d\mu(Z)\, \left( (Z^\dagger Z)^{-1/2} Z^\dagger \right)^{\otimes t} \left( \int d\mu(V)\, V^{\dagger \otimes t} A V^{\otimes t} \right) \left( Z (Z^\dagger Z)^{-1/2} \right)^{\otimes t}\nonumber\\
&=& \int d\mu(V)\, V^{\dagger \otimes t} A V^{\otimes t},
\end{eqnarray}
{where $d\mu(Z)$ denotes the probability measure associated with the Ginibre ensemble given by $d\mu(Z)\propto e^{-\mathrm{Tr}(Z^\dagger Z)}dZ$, which is invariant under left and right multiplication by unitary matrices $Z \mapsto UZV$}. The second {equality} follows from the unitary invariance of the Ginibre ensemble. The third equality implies that $\mathcal{E}_{\text{polar}}$ exactly matches those of the Haar measure.

We now show that if the initial operator abides by a symmetry, the resulting unitary from polar decomposition preserves that symmetry. For illustrative purposes, we consider the $Z_2$-symmetric case. Given a complex Ginibre matrix $Z$,
\begin{eqnarray}
Z' = Z+\sigma^{\otimes N}_{z} Z \sigma^{\otimes N}_{z}
\end{eqnarray}
is $Z_2$ symmetric. Provided $ Z' $ has full rank, one can check that the corresponding positive semi-definite part $ P' = \sqrt{Z'^\dagger Z'} $ also commutes with $\sigma^{\otimes N}_z$, and so does the unitary part $ U' = Z'(Z'^\dagger Z')^{-1/2} $. Furthermore, the distribution of $ Z' $ is invariant under conjugation by any unitary that commutes with the symmetry operator. Hence, the unitaries obtained via polar decomposition from $Z'$ form an ensemble that samples uniformly from the subgroup of $Z_2$-symmetric unitaries. We find this construction highly convenient for our numerical computations, and it may also be of interest for use beyond the scope of this work. {In the following, we outline the numerical algorithm used to generate random $Z_2$-symmetric unitaries. This construction can be readily extended to other symmetries, such as translation symmetry.

\begin{algorithm}[H]
\caption{Sampling a random $Z_2$-symmetric unitary via polar decomposition}
\begin{algorithmic}[1]
\State \textbf{Input:} Number of qubits $N$
\State \textbf{Output:} Random unitary $U$ such that $[U,\sigma_z^{\otimes N}] = 0$

\State $d \gets 2^N$

\State \textbf{Step-1: Generate Ginibre matrix:}
\State Sample $Z \in \mathbb{C}^{d \times d}$ with i.i.d.\ entries
$Z_{ij} \sim \mathcal{N}(0,1) + i\,\mathcal{N}(0,1)$

\State \textbf{Step-2: Impose $Z_2$ symmetry:}
\State $S_z \gets \sigma_z^{\otimes N}$
\State $Z_{\text{sym}} \gets Z + S_z Z S_z$

\State \textbf{Step-3: Check full rank:}
\If{$\det(Z_{\text{sym}}) \approx 0$}
    \State Resample $Z$ and return to Step 3
\EndIf

\State \textbf{Step-4: Compute positive part:}
\State $P \gets \sqrt{Z_{\text{sym}}^\dagger Z_{\text{sym}}}$

\State \textbf{Step-5: Polar decomposition:}
\State $U \gets Z_{\text{sym}} P^{-1}$

\State \Return $U$
\end{algorithmic}
\end{algorithm}
}

\section{Haar integration over the unitary subgroup $\mathcal{U}_{\mathbb{Z}_{2}}(2^N)$}
\label{appb}
In this section, we provide details concerning the Haar integration over the compact group $\mathcal{U}_{\mathbb{Z}_2}(2^N)$, comprising all the unitaries that either commute or anticommute with the symmetry operator $\sigma^{\otimes N}_z$. Extension to other discrete symmetries is straightforward. Here, we evaluate the first two moments of $\mathcal{U}_{\mathbb{Z}_2}(2^N)$. Before proceeding, it is useful to recall the notation for the subspace projectors
\begin{equation}
\mathbf{Z}_{k}=\dfrac{\mathbb{I}+(-1)^{k}\sigma^{\otimes N}_x}{2}, \;\text{ where }\; k\in \{0, 1\} \;\text{ and }\; \mathbf{Z}^{2}_{k}=\mathbf{Z}_{k}.
\end{equation}
In the following, we evaluate the first two moments associated with the Haar integration over $\mathcal{U}_{\mathbb{Z}_{2}}(2^N)$.

\subsection{First moment}
We are interested in evaluating the integral
\begin{equation}
\Theta^{(1)} \equiv \int_{F \in \mathcal{U}_{\mathbb{Z}_{2}}(2^N)} d\mu(F)\, \big(F^{\dagger} A F\big),
\end{equation}
for an arbitrary operator $A$ {acting non-trivially on} the Hilbert space $\mathcal{H}^{2^N}$. To solve this integral, we first identify the first-order commutant of $\mathcal{U}_{\mathbb{Z}_2}(2^N)$. By construction, $\mathcal{U}_{\mathbb{Z}_2}(2^N)$ can be written as the disjoint union $\mathcal{U}_{\mathbb{Z}_2}(2^N) = \mathcal{U}_{C} \cup \mathcal{U}_{AC}$, where the subset $\mathcal{U}_{C}$ consists of unitaries that commute with the symmetry operator $\sigma_z^{\otimes N}$, and the subset $\mathcal{U}_{AC}$ consists of unitaries that anticommute with it. Since one subset commutes and the other anticommutes with the symmetry operator, there exists no nontrivial operator that simultaneously commutes with both classes of unitaries. Consequently, the only element in the first-order commutant is the identity operator $\{ \mathbb{I}_{2^N} \}$.

Therefore, when performing the Haar integration over $\mathcal{U}_{\mathbb{Z_{2}}}(2^N)$, the integral acts as a twirl that projects $A$ onto the identity component: $\Theta^{(1)} = \alpha\, \mathbb{I}_{2^N}$, where the constant $\alpha$ is determined by the trace condition and is given by $\alpha=\text{Tr}(A)/\text{Tr}(\mathbb{I})=\text{Tr}(A)/2^N$. It then follows that
\begin{equation}
\Theta^{(1)} \equiv \int_{F \in \mathcal{U}_{\mathbb{Z}_{2}}(2^N)} d\mu(F)\, \big(F^{\dagger} A F\big) = \dfrac{\operatorname{Tr}(A)}{2^N}\, \mathbb{I}_{2^N}.
\end{equation}

\subsection{Second moment}
We are now interested in evaluating the second moment of $\mathcal{U}_{\mathbb{Z}_2}(2^N)$, given by
\begin{eqnarray}
  \Theta^{(2)}\equiv \int_{F\in\mathcal{U}_{\mathbb{Z}_{2}}(2^N)}d\mu(F)\left( F^{\dagger\otimes 2}AF^{\otimes 2} \right) ,
\end{eqnarray}
where $A$ has the support over the replica Hilbert space $\mathcal{H}^{2^N}\otimes \mathcal{H}^{2^N}$. To evaluate the above integral, we first identify its second-order commutant. While for every $u\in \mathcal{U}_{C}$, $u^{\otimes 2}$ commutes with operators from the cartesian product of the sets $\mathcal{S}_{C}\equiv \{ \mathbf{Z}^{\otimes 2}_{0},  \mathbf{Z}^{\otimes 2}_{1}, \mathbf{Z}^{(2)\perp}\}\times \{ {\Pi}^{+}, {\Pi}^{-}\}$, where ${\Pi}^{\pm}$ denote the projectors onto permutation symmetric ($+$) and anti-symmetric ($-$) subspaces, and $\mathbf{Z}^{(2)\perp}=\mathbb{I}_{2^N}-\mathbf{Z}^{\otimes 2}_{0}-\mathbf{Z}^{\otimes 2}_{1}$ denotes the complement of the $Z_2$-symmetric subspace in the two-replica Hilbert space. On the other hand, for every $w\in \mathcal{U}_{AC}$, one can see that $w^{\otimes 2}$ commutes with the cartesian product set $\mathcal{S}_{AC}\equiv \{ \mathbf{Z}^{\otimes 2}_{0}+\mathbf{Z}^{\otimes 2}_{1}, \mathbf{Z}^{(2)\perp}\}\times \{ {\Pi}^{+}, {\Pi}^{-}\}$. Therefore, the only common set of operators with which all $v^{\otimes 2}$ for $v\in\mathcal{U}_{\mathbb{Z}_2}$ commute is $\mathcal{S}_{C}\bigcap \mathcal{S}_{AC}\equiv \mathcal{S}_{AC}$.
It then follows that
\begin{eqnarray}\label{moment2}
\int_{F\in U_{\text{Z}}(2^N)}d\mu(F)\left(F^{\otimes 2} A F^{\dagger\otimes 2}\right)=\alpha\left(\mathbf{Z}^{\otimes 2}_{0}{\Pi^{+}}+ \mathbf{Z}^{\otimes 2}_{1}{\Pi}^{+}\right)+ \alpha_{\perp}\mathbf{Z}^{(2)\perp}{\Pi^{+}} + \beta\left(\mathbf{Z}^{\otimes 2}_{0}{\Pi}^{-}+ \mathbf{Z}^{\otimes 2}_{1}{\Pi}^{-}\right)+ \beta_{\perp} \mathbf{Z}^{(2)\perp}{\Pi}^{-}.
\end{eqnarray}
The constants $\alpha, \alpha_{\perp}, \beta$ and $\beta_{\perp}$ ensure proper normalization of the right-hand side.

\section{Main results concerning the $Z_2$-asymmetry generating power ($Z_2$-AGP)}
\label{appc}

In this section, we present rigorous proofs of all results concerning the \(Z_{2}\)-AGP discussed in the main text, including the twirling identity (Eq.~(4)), the AGP analogue of Theorem~2, and the thermalization behavior of the \(Z_{2}\)-AGP. For completeness, we also provide a few supplementary results that were omitted from the main text.

\subsection{Derivation of the twirling identity in Eq.~(4)}
\label{appc-1}

Here, we provide a complete derivation of Eq.~(4) using the results from the previous section concerning Haar integrals over the unitary subgroup \(\mathcal{U}_{Z_2}(2^N)\). {We start from Eq.~(\ref{moment2}) from the previous section, which reads}
\begin{eqnarray}\label{moment2-1}
\int_{F\in U_{\text{Z}}(2^N)}d\mu(F)\left(F^{\otimes 2} A F^{\dagger\otimes 2}\right)=\alpha\left(\mathbf{Z}^{\otimes 2}_{0}{\Pi^{+}}+ \mathbf{Z}^{\otimes 2}_{1}{\Pi}^{+}\right)+ \alpha_{\perp}\mathbf{Z}^{(2)\perp}{\Pi^{+}} + \beta\left(\mathbf{Z}^{\otimes 2}_{0}{\Pi}^{-}+ \mathbf{Z}^{\otimes 2}_{1}{\Pi}^{-}\right)+ \beta_{\perp} \mathbf{Z}^{(2)\perp}{\Pi}^{-}.
\end{eqnarray}
Here, we take $A = U^{\otimes 2}\rho_f U^{\dagger \otimes 2}$, where {$U$} is assumed to have finite $\mathcal{A}_p$, and $\rho_f$ denotes the second moment of the ensemble of free states, which we evaluate to be the following:
\begin{equation}
\rho_f =\langle \left( |\psi_{\mathrm{free}}\rangle\langle\psi_{\mathrm{free}} | \right)^{\otimes 2}  \rangle_{|\psi\rangle_{\mathrm{free}}}= \dfrac{1}{2}\left[  \dfrac{\mathbf{Z}^{\otimes 2}_{0}\Pi^{(2)} }{\text{Tr}\left( \mathbf{Z}^{\otimes 2}_{0}\Pi^{(2)} \right)} + \dfrac{\mathbf{Z}^{\otimes 2}_{1}\Pi^{(2)} }{\text{Tr}\left( \mathbf{Z}^{\otimes 2}_{1}\Pi^{(2)} \right)}   \right]
\end{equation}
{Note that $A = U^{\otimes 2} \rho_{f} U^{\dagger \otimes 2}$ acts nontrivially only on the permutation-symmetric subspace. This follows from the fact that $\Pi^{-} A \Pi^{-} = 0$, i.e., the antisymmetric subspace lies in the kernel of $A$. As a result, the coefficients $\beta$ and $\beta_{\perp}$ vanish.}
As a result, Eq.~(\ref{moment2}) becomes
\begin{eqnarray}\label{FAF}
\int_{F\in U_{\text{Z}}(2^N)}d\mu(F)\left(F^{\otimes 2} A F^{\dagger\otimes 2}\right)=\alpha\left(\mathbf{Z}^{\otimes 2}_{0}+ \mathbf{Z}^{\otimes 2}_{1}\right) {\Pi}^{(2)}+ \alpha_{\perp}\mathbf{Z}^{(2)\perp}{\Pi^{(2)}},
\end{eqnarray}
In the above equation, we have denoted $\Pi^{+}$ with $\Pi^{(2)}$, consistent with the notation choice throughout this paper. The coefficients $\alpha$ and $\alpha_{\perp}$ can be evaluated as
\begin{eqnarray}
\alpha= \dfrac{\text{Tr}\left[\left(\mathbf{Z}^{\otimes 2}_{0}+\mathbf{Z}^{\otimes 2}_{1}\right){\Pi}^{(2)} A \right]}{\text{Tr}\left[ \left(\mathbf{Z}^{\otimes 2}_{0} + \mathbf{Z}^{\otimes 2}_{1}\right){\Pi}^{(2)}\right]}\;\text{ and }\; \alpha_\perp= \dfrac{\text{Tr}\left(\mathbf{Z}^{(2)\perp}{\Pi^{(2)}} A \right)}{\text{Tr}\left( \mathbf{Z}^{(2)\perp}{\Pi^{(2)}}\right)}.
\end{eqnarray}
By noting that $\text{Tr}\left( \mathbf{Z}^{\otimes 2}_{0}\Pi^{(2)} \right)=\text{Tr}\left( \mathbf{Z}^{\otimes 2}_{1}\Pi^{(2)} \right)$, we rewrite Eq.~(\ref{FAF}) as
\begin{eqnarray}\label{secmoment}
\int_{F\in U_{\text{Z}}(2^N)}d\mu(F)F^{\otimes 2} A F^{\dagger\otimes 2}&=&\text{Tr}\left[\left( \mathbf{Z}^{\otimes 2}_{0} + \mathbf{Z}^{\otimes 2}_{1} \right){\Pi}^{(2)}  A \right] \left\{\dfrac{1}{2}\left[  \dfrac{\mathbf{Z}^{\otimes 2}_{0}\Pi^{(2)} }{\text{Tr}\left( \mathbf{Z}^{\otimes 2}_{0}\Pi^{(2)} \right)} + \dfrac{\mathbf{Z}^{\otimes 2}_{1}\Pi^{(2)} }{\text{Tr}\left( \mathbf{Z}^{\otimes 2}_{1}\Pi^{(2)} \right)}   \right]\right\}+ \dfrac{\text{Tr}\left[ \mathbf{Z}^{(2)\perp}{\Pi}^{(2)}A \right]}{\text{Tr}\left[ \mathbf{Z}^{(2)\perp}{\Pi}^{(2)} \right]}\mathbf{Z}^{(2)\perp}{\Pi}^{(2)} \nonumber\\
&=&\text{Tr}\left[\left( \mathbf{Z}^{\otimes 2}_{0} + \mathbf{Z}^{\otimes 2}_{1} \right){\Pi}^{(2)}  A \right] \rho_{\text{f}}+ \dfrac{\text{Tr}\left[ \mathbf{Z}^{(2)\perp}{\Pi}^{(2)}A \right]}{\text{Tr}\left[ \mathbf{Z}^{(2)\perp}{\Pi}^{(2)} \right]}\mathbf{Z}^{(2)\perp}{\Pi}^{(2)}.
\end{eqnarray}
With a few simplifications, it follows that
\begin{eqnarray}\label{eq4}
\left\langle F^{\otimes 2} U^{\otimes 2}\rho_{\text{f}}U^{\dagger\otimes 2} F^{\dagger\otimes 2} \right\rangle_{F} &=& \text{Tr}\left[\left( \mathbf{Z}^{\otimes 2}_{0} + \mathbf{Z}^{\otimes 2}_{1} \right) U^{\otimes 2}\rho_{\text{f}}U^{\dagger\otimes 2} \right] \rho_{\text{f}}+ \dfrac{\text{Tr}\left[ \mathbf{Z}^{(2)\perp}U^{\otimes 2}\rho_{\text{f}}U^{\dagger \otimes 2} \right]}{\text{Tr}\left[ \mathbf{Z}^{(2)\perp}{\Pi} \right]}\mathbf{Z}^{(2)\perp}{\Pi}^{(2)}\nonumber\\
&=&  \left[ 1-\mathcal{A}_{p}(U) \right]\rho_f + \dfrac{\mathcal{A}_{p}(U)}{\overline{\mathcal{A}}} \dfrac{\Pi^{(2)}}{\mathrm{Tr}\left( \Pi^{(2)} \right)} + \dfrac{\mathcal{A}_{p}(U)}{\overline{\mathcal{A}}}\left( 1-\overline{\mathcal{A}} \right)\rho_f  \nonumber\\
&=&\left[ 1- \dfrac{\mathcal{A}_{p}(U)}{\overline{\mathcal{A}}} \right]\; \rho_{\text{f}}\;  + \; \dfrac{\mathcal{A}_{p}(U)}{\overline{\mathcal{A}}}\;\dfrac{\Pi^{(2)}}{\mathrm{Tr}\left( \Pi^{(2)} \right)},
\end{eqnarray}
 proving the  Eq.~(4) of the main text. In the second equality of the above equation, we have substituted $\mathcal{A}_p(U)=1-\text{Tr}\left[\left( \mathbf{Z}^{\otimes 2}_{0} + \mathbf{Z}^{\otimes 2}_{1} \right) U^{\otimes 2}\rho_{\text{f}}U^{\dagger\otimes 2} \right]$.

\subsection{Theorem~1 for the $Z_2$-AGP --- Quantum resources through projective measurements}

Having established the above result, we are now in a position to prove Theorem~1 for the case of the QRT of $Z_2$-asymmetry. Recall the protocol given in the main text:\\
(i) prepare a random bipartite free state $|\psi\rangle\in\mathcal{H}_{AB}$; \\
(ii) apply the unitary $U$, whose resource content is to be estimated; \\
(iii) apply a random free operation $F$;\\
(iv) measure a subset of the final state (say $B$) in a free basis and record the probabilities $\{p_b\}$. \\

The resulting $p_b$ can be written as
\begin{eqnarray}
 p_b = \langle\psi |U^{\dagger}F^{\dagger} |b\rangle\langle b|FU|\psi\rangle .
\end{eqnarray}
Upon squaring and performing averages over the entire ensemble of free states and the free unitaries, we get
\begin{eqnarray}
 \langle p^{2}_b\rangle_{F, |\psi\rangle}  &=& \left\langle\text{Tr}\left[ \left( |b\rangle\langle b| \right)^{\otimes 2} \left( FU|\psi\rangle\langle\psi |U^{\dagger}F^{\dagger} \right)^{\otimes 2}\right]\right\rangle_{F, |\psi\rangle}\nonumber\\
 &=& \left\langle\text{Tr}\left[ \left( |b\rangle\langle b| \right)^{\otimes 2} \left( FU \rho_{\text{f}} U^{\dagger}F^{\dagger} \right)^{\otimes 2}\right]\right\rangle_{F}\nonumber\\
 &=& \left[ 1- \dfrac{\mathcal{A}_{p}(U)}{\overline{\mathcal{A}}} \right]\; \text{Tr}\left(\langle b^{\otimes 2} |\rho_{\text{f}}|b^{\otimes 2}\rangle\right)\;  + \; \dfrac{\mathcal{A}_{p}(U)}{\overline{\mathcal{A}}}\;\dfrac{\text{Tr}\left(\langle b^{\otimes 2}|\Pi^{(2)}|b^{\otimes 2}\rangle\right)}{\mathrm{Tr}\left( \Pi^{(2)} \right)}, \label{eq:pb2Fpsi}
\end{eqnarray}
where in the second equality, we replaced the averaging over the free states with $\rho_{\text{f}}$ and in the third equality, we used the relation in Eq.~(\ref{eq4}). For simplicity, we write $k_1=\dfrac{\text{Tr}\left(\langle b^{\otimes 2}|\Pi^{(2)}|b^{\otimes 2}\rangle\right)}{{\mathrm{Tr}\left( \Pi^{(2)} \right)}}$ and $k_2=\text{Tr}\left(\langle b^{\otimes 2} |\rho_{\text{f}}|b^{\otimes 2}\rangle\right)$. These two constants depend only on the total system dimension $2^N$ and the subsystem dimension $2^{N_A}$. When $N_A$ is finite, these constants can be evaluated to read
\begin{eqnarray}\label{agpk1}
k_1=\dfrac{\text{Tr}\left( \langle b^{\otimes 2}|\Pi^{(2)}|b^{\otimes 2}\rangle \right)}{\mathrm{Tr}\left( \Pi^{(2)} \right)} = \dfrac{2^{N_A}(2^{N_A}+1)}{2^N(2^N+1)}
\end{eqnarray}
and
\begin{eqnarray}\label{agpk2}
k_2=\text{Tr}\left( \langle b^{\otimes 2}|\rho_{\text{f}}|b^{\otimes 2}\rangle \right)=\dfrac{1}{2} \sum_{k\in\{0, 1\}} \dfrac{\text{Tr} \left(\mathbf{Z}^{\otimes 2}_{k+\text{sgn}(b), A }\Pi^{(2)}_{A}\right)}{\text{Tr}\left( \mathbf{Z}^{\otimes 2}_{k}\Pi^{(2)}_{AB} \right)}= \dfrac{2^{N_A}(2^{N_A}+2)}{2^N(2^N+2)} .
\end{eqnarray}
Following this, one can rearrange Eq.~\eqref{eq:pb2Fpsi} to write the $Z_2$-AGP of an arbitrary evolution in terms of the success probability for the outcome corresponding to $|b\rangle$ as
\begin{eqnarray}\label{estAGP}
 \dfrac{\mathcal{A}_{p}(U)}{\overline{A}} = \dfrac{k_2-\langle p^2_b\rangle_{F, |\psi\rangle}}{k_2-k_1} .  ,
\end{eqnarray}
which corresponds to $Z_2$-AGP analogue of Eq.~(1) in the main text. The above expression provides an unbiased estimator for the $Z_2$-AGP.

If one has access to the probabilities corresponding to all the basis vectors, then the unbiased estimator of the $Z_2$-AGP can be written as
\begin{eqnarray}\label{z2agpest}
\mathcal{A}_{\mathrm{est}} = \dfrac{k_2-\tilde{P}^{(2)}}{k_2-k_1}, \;\text{ where }\; \tilde{P}^{(2)} =\dfrac{1}{2^{N_B}}\sum_{b=0}^{2^{N_B}-1} p^{2}_b,
\end{eqnarray}
which can further reduce the statistical error associated with the finite sampling.

\begin{figure}
\begin{center}
\includegraphics[scale=0.6]{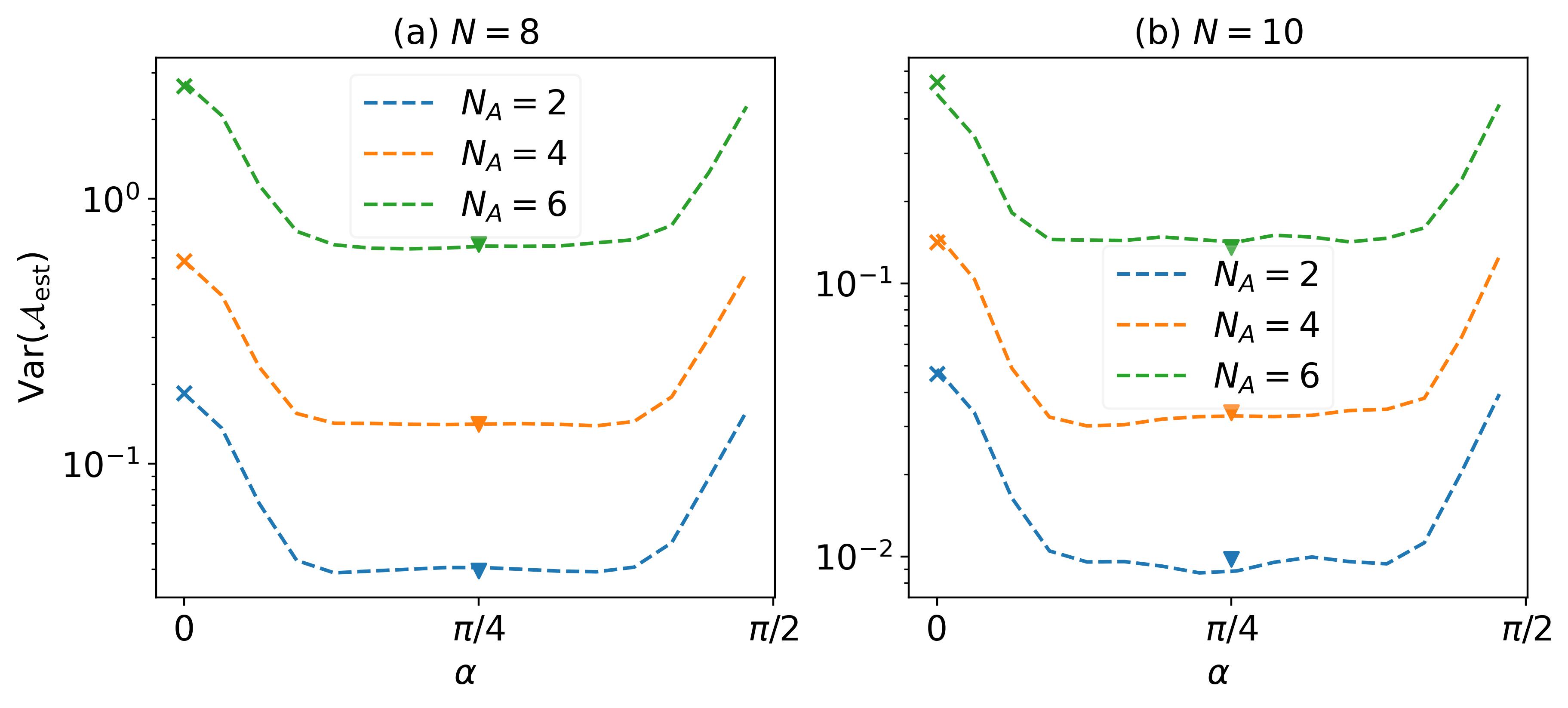}
\caption{\label{fig:AGP-prot-fluct} The figure illustrates the variance of the estimator $\mathcal{A}_{\mathrm{est}}$ (see Eq.~(\ref{z2agpest})) for the system dynamics considered in the main text ($U = u^{\otimes N}, \;u=e^{-i\alpha\sigma_x }$) for two different $N$, namely, $N=8$ in (a) and $N=10$ in (b), each for a three different unmeasured subsystem sizes, namely, $N_A=2, 4$ and $6$. The figure shows that, for fixed $N$, the estimator’s variance grows exponentially with $N_A$. When $\alpha = 0$, $U$ remains a free unitary, and the variance for all cases agrees with the analytical expression in Eq.~(\ref{varasymfree}), as indicated by the points marked with `$\times$' in both panels. Near $\alpha = \pi/4$, the unitary approaches a regime where $\mathcal{A}_{p}$ is close to its Haar value, and the variance correspondingly matches the prediction of Eq.~(\ref{varasymhaar}), which is shown by the points marked with `$\blacktriangledown$'. While the results in panel (a) are obtained by averaging over $\sim 10^4$ samples, those in panel (b) are computed using a sample size of $\sim 10^3$.
}
\end{center}
\end{figure}

{
\subsubsection{Fluctuations of the estimator sampling complexity}
Here, we analyze the sampling complexity for estimating the $Z_2$-AGP using the unbiased estimator given in Eq.~(\ref{z2agpest}). The variance can be evaluated as
\begin{eqnarray}\label{varstartasy}
\mathrm{Var}(\mathcal{A}_{\mathrm{est}}) &=& \dfrac{1}{(k_2-k_1)^2}\left[ \left\langle\left( \tilde{p}^{(2)}\right)^2 \right\rangle - \left\langle \tilde{p}^{(2)} \right\rangle^2  \right] \nonumber\\
&=& \dfrac{1}{(k_2-k_1)^2}\left[ \left\langle \sum_{b, b' \in \{0, 1\}^{N_{B}}} p^2_{b}p^2_{b'} \right\rangle - \left\langle \tilde{p}^{(2)} \right\rangle^2  \right]\,.
\end{eqnarray}
In the following, we consider the limiting cases, namely, (a) when $U$ is a free unitary and (b) when $U$ is a fully Haar random unitary, and compute the fluctuations analytically.

\noindent\textit{\textbf{(a) $U$ is a free unitary:}}
When $U$ is taken as a free unitary, it follows from Eq.~(\ref{estAGP}) that $\langle \tilde{p}^2\rangle = k_2$. On the other hand, $\left\langle \left(\tilde{p}^{(2)}\right)^{2}\right\rangle_{F, |\psi\rangle}$ can be computed by noticing the following:
\begin{eqnarray}
\left\langle p^{2}_{b}p^{2}_{b'} \right\rangle =
\begin{cases}
\displaystyle
\prod_{s=0}^{3}\dfrac{2^{N_{A}}+2s}{2^{N}+2s}
& \text{if } \; b = b' \\\\
\displaystyle
k^2_{2}\;\dfrac{2^N(2^N+2)}{(2^N+4)(2^N+6)}
& \text{otherwise}.
\end{cases}
\end{eqnarray}
Using these relations in Eq.~(\ref{varstartasy}), the variance is
\begin{eqnarray}\label{varasymfree}
\mathrm{Var}(\mathcal{A}_{\mathrm{est}}) &=& \dfrac{1}{(k_2-k_1)^2} \left[ \dfrac{1}{2^{N_{B}}} \left( \prod_{s=0}^{3}
\frac{2^{N_{A}} + 2s}{2^{N} + 2s} \right) + \dfrac{2^{N_B}-1}{2^{N_{B}}}  \left(
\dfrac{2^N (2^N+2)}{(2^N+4)(2^N+6)} \right) k^2_{2} - k^2_{2}  \right] \nonumber\\
&\approx & \dfrac{8}{2^{N_{B}}-1}\,,
\end{eqnarray}
where in the second line we approximated the variance for large $N$, while keeping $N_B$ finite and fixed.
In order to estimate the $Z_2$-AGP with an accuracy of $\varepsilon$, the number of times the experiment needs to be performed scales roughly as $\sim 8/[\varepsilon^2\cdot (2^{-N_{B}}-1)]$.
\\

\noindent\textit{(b) \textbf{$U$ is a Haar unitary:}} In this case, it is straightforward to see that
\begin{eqnarray}
\langle p^{2}_{b}p^{2}_{b'}\rangle_{U\in U(2^N)} =
\begin{cases}
\displaystyle \prod_{s=0}^{3}
\frac{2^{N_{A}} + s}{2^{N} + s},
& \text{if } \; b = b' \\[6pt]
\displaystyle k_1^2\,
\frac{2^N (2^N+1)}{(2^N+2)(2^N+3)},
& \text{otherwise}.
\end{cases}
\end{eqnarray}
Then, the variance of the estimator becomes
\begin{eqnarray}\label{varasymhaar}
\mathrm{Var}(\mathcal{A}_{\mathrm{est}}) &=& \dfrac{1}{(k_2-k_1)^2} \left[ \dfrac{1}{2^{N_{B}}} \left( \prod_{s=0}^{3}
\frac{2^{N_{A}} + s}{2^{N} + s} \right) + \dfrac{2^{N_B}-1}{2^{N_{B}}}  \left(
\dfrac{2^N (2^N+1)}{(2^N+2)(2^N+3)} \right) k^2_{1} - k^2_{1}  \right]\nonumber\\
&\approx & \dfrac{2}{2^{N_{B}}-1} \;\; \mathrm{ for }\;\; N_{A}\gg N_{B}.
\end{eqnarray}
In both cases, the variance decreases exponentially in the subsystem size, rendering the sampling efficient. One can expect, as is corroborated by our numerics in Fig.~\ref{fig:AGP-prot-fluct}, that this exponential decrease holds also in scenarios interpolating between the two limiting cases.

}

\subsection{AGP version of Eq.~(2)}

Here, we evaluate the average asymmetry of the projected states generated within the protocol considered in this work, retaining terms up to the leading order. The projected states asymmetry can be quantified by $\mathcal{A}(|\phi_b\rangle)=\langle\phi^{\otimes 2}_b| \mathbf{Z}^{(2)\perp}_{A} |\phi^{\otimes 2}_b\rangle$, where $\mathbf{Z}^{(2)\perp}=\mathbb{I}_{2^{2N_A}}-\mathbf{Z}^{\otimes 2}_{0, A}-\mathbf{Z}^{\otimes 2}_{1, A}$ denotes the projector onto the complement of the $Z_2$-symmetric subspace supported over $N_A$-qubits and $|b\rangle$ is a computational basis vector in $\mathcal{H}_{B}$.
We now compute the average asymmetry in the projected states as follows:
\begin{eqnarray}
\left\langle \mathcal{A}_{\mathrm{PS}}\right\rangle_{|\psi\rangle_{\text{free}}, F}=\left\langle \text{Tr}\left[ \mathbf{Z}^{(2)\perp}_{A} \dfrac{\left(\langle b| FU|\psi\rangle\langle\psi |U^{\dagger}F^{\dagger} |b\rangle^{\otimes 2}\right)}{ p^2_b } \right]\right\rangle_{F, |\psi\rangle_{\mathrm{free}}}.
\end{eqnarray}
In the above expression, the average is taken over both the free states and the free unitaries. For brevity, we henceforth omit the subscript “free’’ from $|\psi\rangle_{\mathrm{free}}$.
The presence of the normalizing factor in the above expression makes the Haar integration analytically challenging, except in special cases—for instance, when $U$ is a Haar-random unitary. To nevertheless get a handle on it, we can use a  general expansion of expectations involving ratios of functions,
\begin{equation}
\mathbb{E}\!\left[\frac{f}{g}\right]
\approx
\frac{\mu_f}{\mu_g}
\;-\; \frac{\mathrm{Cov}(f,g)}{\mu_g^{2}}
\;+\; \frac{\mu_f\,\mathrm{Var}(g)}{\mu_g^{3}} \,,
\end{equation}
where the leading order term involves the independent averages over the functions. Following this, we obtain the leading-order correction to the average asymmetry of the projected states $\langle \mathcal{A}_{\mathrm{PS}}\rangle_{F, |\psi\rangle}$. First, we evaluate the numerator as
\begin{eqnarray}\label{num}
\left\langle\text{Tr}\left[ \mathbf{Z}^{(2)\perp}_{A} \left(\langle b| FU|\psi \rangle\langle\psi|U^{\dagger}F^{\dagger} |b\rangle^{\otimes 2}\right) \right]\right\rangle_{|\psi\rangle, F} &=& \left( 1-\dfrac{\mathcal{A}_p(U)}{\overline{\mathcal{A}}} \right) \text{Tr}\left( \mathbf{Z}^{(2)\perp}_{A} \langle b^{\otimes 2} |\rho_{\text{f}} |b^{\otimes 2}\rangle  \right) + \dfrac{\mathcal{A}_p(U)}{\overline{\mathcal{A}}}{\text{Tr}\left( \mathbf{Z}^{(2)\perp}_{A} \langle b^{\otimes 2}| \Pi^{(2)} |b^{\otimes 2}\rangle\right)} \nonumber\\
&=& \mathcal{A}_{p}(U) \dfrac{\overline{\mathcal{A}_{A}}}{\overline{\mathcal{A}}}\; k_1,
\end{eqnarray}
where in the first equality we made use of Eq.~(\ref{eq4}). The term in the denominator follows from the previous subsection,
\begin{eqnarray}\label{den}
\langle p^2_b\rangle_{F, |\psi\rangle} &=& \left( 1-\dfrac{\mathcal{A}_p(U)}{\overline{\mathcal{A}}} \right)\; k_2 + \dfrac{\mathcal{A}_p(U)}{\overline{\mathcal{A}}}\; k_1 .
\end{eqnarray}
{Note that the constants $k_1$ and $k_2$ in the above exquations are the same as those obtained in the previous subsection [see Eqs.~(\ref{agpk1}) and ~(\ref{agpk2})].
Combining Eqs.~(\ref{num}) and ~(\ref{den}), the average asymmetry of the projected states, can be written up to the zeroth order correction as
\begin{eqnarray}
\left\langle \mathcal{A}_{\mathrm{PS}}\right\rangle_{|\psi\rangle, F} \approx \dfrac{\mathcal{A}_{p}(U) \dfrac{\overline{\mathcal{A}_{A}}}{\overline{\mathcal{A}}}\; k_1}{\left( 1-\dfrac{\mathcal{A}_p(U)}{\overline{\mathcal{A}}} \right)\; k_2 + \dfrac{\mathcal{A}_p(U)}{\overline{\mathcal{A}}}\; k_1 }.
\end{eqnarray}
This corresponds to the AGP version of Eq. (2) of the main text. This can be further simplified by considering the magnitude of $k_1/k2$:
\begin{equation}
\frac{k_1}{k_2}= \frac{(2^{N_A}+1)(2^N+2)}{(2^{N_A}+2)(2^N+1)}= 1 - \frac{1}{2^{N_A}} + \frac{1}{2^N}
+ O(2^{-2N_A}, 2^{-2N}).
\end{equation}
In the limit of large $N$ and $N_A$ and when $N_A$ is also comparable to $N$, $k_1/k_2\approx 1$. As a result, we obtain the following:
\begin{eqnarray}
\dfrac{\left\langle \mathcal{A}_{\mathrm{PS}}\right\rangle_{|\psi\rangle, F}}{\overline{\mathcal{A}_{A}}}\approx  \dfrac{\mathcal{A}_p(U)}{\overline{\mathcal{A}}}.
\end{eqnarray}
}
One can infer from the above equation that the leading order correction to the average asymmetry of the projected states follows the same dynamics as the AGP of the overall unitary $U$, given by $\mathcal{A}_p(U)$.
In the main text, we use this result to probe deep thermalization of the $Z_2$-asymmetry under generic quantum circuits composed of free and resourceful operations. Corresponding results are shown in Fig.~2(b).

\subsection{Proof of Eq.~(3) --- Thermalization of $Z_2$-AGP}

Here, we prove Theorem~2 for the $Z_2$-AGP (also see Eq.~(5)), which states that the $Z_2$-AGP of the circuit consisting of interlacing free and non-free operations converges exponentially to its Haar-averaged value. Before proceeding, we first prove the following important result concerning the AGP of a free unitary sandwiched between two arbitrary non-free operations.

\begin{lemma}\label{lemma} (\normalfont{Impact of free operations on $Z_2$-AGP})
Let $U$ and $V$ be two arbitrary {unitaries} in the QRT of $Z_2$-asymmetry over an $N$-qubit Hilbert space $\mathcal{H}^{2^{N}}$, with their corresponding AGPs given by $\mathcal{A}_{p}(U)$ and $\mathcal{A}_{p}(V)$, respectively. Let $F\in U_{Z_2}(2^N)$ be a random free operation drawn from $\mathcal{U}_{\mathbb{Z}_{2}}(2^N)$ according to its Haar measure. Then, the AGP of the free unitary sandwiched between $U$ and $V$, on average, is related to $\mathcal{A}_{p}(U)$ and $\mathcal{A}_{p}(V)$ as
\begin{equation}
\left\langle \mathcal{A}_{p}(VFU) \right\rangle_{F} = \mathcal{A}_{p}(U)+\mathcal{A}_{p}(V)-\dfrac{\mathcal{A}_{p}(U)\mathcal{A}_{p}(V)}{\overline{\mathcal{A}}},
\end{equation}
\end{lemma}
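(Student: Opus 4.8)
The plan is to express $\mathcal{A}_{p}(VFU)$ directly in terms of traces and then average over $F$ using the twirling identity from Eq.~(\ref{twirl}). First, I would write the AGP of the composite unitary as
\begin{equation}
\mathcal{A}_{p}(VFU) = \mathrm{Tr}\!\left[\mathbf{Z}^{(2)\perp}\,(VFU)^{\otimes 2}\rho_{\mathrm{f}}\,(VFU)^{\dagger\otimes 2}\right],
\end{equation}
and observe that the $F$-dependence is entirely contained in the factor $(FU)^{\otimes 2}\rho_{\mathrm{f}}(FU)^{\dagger\otimes 2}$, sandwiched between $V^{\otimes 2}$ and $V^{\dagger\otimes 2}$ on the outside and $\mathbf{Z}^{(2)\perp}$. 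Since the map $X \mapsto \mathrm{Tr}[\mathbf{Z}^{(2)\perp} V^{\otimes 2} X V^{\dagger\otimes 2}]$ is linear, the average over $F$ passes straight through to act on $X = (FU)^{\otimes 2}\rho_{\mathrm{f}}(FU)^{\dagger\otimes 2}$, which is exactly the object computed in Eq.~(\ref{twirl}).

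Next, I would substitute the twirling identity, $\langle (FU)^{\otimes 2}\rho_{\mathrm{f}}(FU)^{\dagger\otimes 2}\rangle_F = \bigl[1 - \tfrac{\mathcal{A}_{p}(U)}{\overline{\mathcal{A}}}\bigr]\rho_{\mathrm{f}} + \tfrac{\mathcal{A}_{p}(U)}{\overline{\mathcal{A}}}\bm{\Pi}^{(2)}$, to obtain
\begin{equation}
\left\langle \mathcal{A}_{p}(VFU)\right\rangle_F = \left[1 - \frac{\mathcal{A}_{p}(U)}{\overline{\mathcal{A}}}\right]\mathrm{Tr}\!\left[\mathbf{Z}^{(2)\perp}V^{\otimes 2}\rho_{\mathrm{f}}V^{\dagger\otimes 2}\right] + \frac{\mathcal{A}_{p}(U)}{\overline{\mathcal{A}}}\,\mathrm{Tr}\!\left[\mathbf{Z}^{(2)\perp}V^{\otimes 2}\bm{\Pi}^{(2)}V^{\dagger\otimes 2}\right].
\end{equation}
The first trace is precisely $\mathcal{A}_{p}(V)$ by definition. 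For the second trace, I would use that $\bm{\Pi}^{(2)}$ is the normalized projector onto the symmetric subspace and is invariant under $V^{\otimes 2}(\cdot)V^{\dagger\otimes 2}$ for any unitary $V$ (since $[V^{\otimes 2}, \mathrm{SWAP}] = 0$); hence $V^{\otimes 2}\bm{\Pi}^{(2)}V^{\dagger\otimes 2} = \bm{\Pi}^{(2)}$, and $\mathrm{Tr}[\mathbf{Z}^{(2)\perp}\bm{\Pi}^{(2)}]$ is just the Haar-averaged AGP $\overline{\mathcal{A}}$ (this is essentially the defining identity $\overline{\mathcal{A}} = \mathrm{Tr}[\mathbf{Z}^{(2)\perp}\Pi^{(2)}]/\mathrm{Tr}(\Pi^{(2)})$ from the supplemental material). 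Plugging these in gives $\langle \mathcal{A}_{p}(VFU)\rangle_F = [1 - \mathcal{A}_{p}(U)/\overline{\mathcal{A}}]\,\mathcal{A}_{p}(V) + [\mathcal{A}_{p}(U)/\overline{\mathcal{A}}]\,\overline{\mathcal{A}}$, which rearranges immediately to the claimed $\mathcal{A}_{p}(U) + \mathcal{A}_{p}(V) - \mathcal{A}_{p}(U)\mathcal{A}_{p}(V)/\overline{\mathcal{A}}$.

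The step I expect to require the most care is the justification that the twirling identity in Eq.~(\ref{twirl}) can be applied here verbatim: Eq.~(\ref{twirl}) is stated for the second moment of the ensemble of free states $\rho_{\mathrm{f}}$, and I am applying it with $\rho_{\mathrm{f}}$ replaced by $U^{\otimes 2}\rho_{\mathrm{f}}U^{\dagger\otimes 2}$ inside a larger expression — but inspecting the derivation in Appendix~\ref{appc-1}, the identity is in fact proven for the twirl $\langle F^{\otimes 2} A F^{\dagger\otimes 2}\rangle_F$ with $A = U^{\otimes 2}\rho_{\mathrm{f}}U^{\dagger\otimes 2}$ for arbitrary $U$ with finite $\mathcal{A}_{p}(U)$, so this is exactly the content we need. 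A secondary subtlety is confirming that $A$ still has no support on the antisymmetric subspace $\Pi^{-}$ after conjugation by $U^{\otimes 2}$ — but this holds because $\rho_{\mathrm{f}}$ is supported on the symmetric subspace and $U^{\otimes 2}$ commutes with the symmetric/antisymmetric projectors, so the coefficients $\beta, \beta_\perp$ indeed vanish as claimed.
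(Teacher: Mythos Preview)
Your proof is correct and follows essentially the same approach as the paper: both apply the twirling identity Eq.~(\ref{twirl}) to the inner factor $(FU)^{\otimes 2}\rho_{\mathrm{f}}(FU)^{\dagger\otimes 2}$ and then identify the two resulting traces with $\mathcal{A}_{p}(V)$ and $\overline{\mathcal{A}}$. The only cosmetic difference is that the paper works with the complementary projector $\mathbf{Z}^{\otimes 2}_{0}+\mathbf{Z}^{\otimes 2}_{1}$ (writing $\mathcal{A}_{p}=1-\mathrm{Tr}[\cdots]$) rather than $\mathbf{Z}^{(2)\perp}$ directly, so its intermediate expression reads $1-(1-\mathcal{A}_{p}(U)/\overline{\mathcal{A}})(1-\mathcal{A}_{p}(V))-(\mathcal{A}_{p}(U)/\overline{\mathcal{A}})(1-\overline{\mathcal{A}})$ before rearrangement, but the logic is identical.
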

\begin{proof}
We are interested in evaluating $\left\langle\mathcal{A}_{p}(VFU)\right\rangle_{F}$, the average SBS when two arbitrary symmetry-breaking operations $U$ and $V$ are interspersed with a random free operation $F$ abiding by the symmetry. This can be expressed as the following integral over the free operations:
\begin{eqnarray}\label{AGPintm1}
\left\langle\mathcal{A}_{p}(VFU)\right\rangle_{F}=1-\int_{F}d\mu(F)\text{Tr}\left[ \left(\mathbf{Z}^{\otimes 2}_{0}+\mathbf{Z}^{\otimes 2}_{0}\right) \left(VFU\right)^{\otimes 2}\rho_{\text{f}}\left(VFU\right)^{\dagger \otimes 2} \right].
\end{eqnarray}
To solve the above integral, we make use of Eq.~(\ref{eq4}), given by
\begin{eqnarray}
\int_{F}d\mu(F)F^{\otimes 2} U^{\otimes 2}\rho_{\text{f}}U^{\dagger\otimes 2} F^{\dagger\otimes 2}=\left[ 1- \dfrac{\mathcal{A}_{p}(U)}{\overline{\mathcal{A}}} \right]\; \rho_{\text{f}}\;  + \; \dfrac{\mathcal{A}_{p}(U)}{\overline{\mathcal{A}}}\;\dfrac{\Pi^{(2)}}{\mathrm{Tr}\left( \Pi^{(2)} \right)},
\end{eqnarray}
Upon substituting the above equation in Eq.~(\ref{AGPintm1}), we get
\begin{eqnarray}
\left\langle\mathcal{A}_{p}(VFU)\right\rangle_{F} &=& 1- \left( 1-\dfrac{\mathcal{A}_{p}(U)}{\overline{\mathcal{A}}} \right) \text{Tr}\left[ \left( \mathbf{Z}^{\otimes 2}_{0}+\mathbf{Z}^{\otimes 2}_{1} \right)V^{\otimes 2}\rho_{\text{f}}V^{\dagger\otimes 2} \right]   - \dfrac{\mathcal{A}_{p}(U)}{\overline{\mathcal{A}}} \text{Tr}\left[ \left( \mathbf{Z}^{\otimes 2}_{0}+\mathbf{Z}^{\otimes 2}_{1} \right)V^{\otimes 2}\dfrac{\Pi^{(2)}}{\mathrm{Tr}\left( \Pi^{(2)} \right)}V^{\dagger\otimes 2} \right]\nonumber\\
&=&1- \left( 1-\dfrac{\mathcal{A}_{p}(U)}{\overline{\mathcal{A}}} \right) \left( 1-\mathcal{A}_{p}(V) \right) - \dfrac{\mathcal{A}_{p}(U)}{\overline{\mathcal{A}}} \left(1-\overline{\mathcal{A}}\right)
\end{eqnarray}
With a slight rearrangement of the terms, we get
\begin{equation}\label{asym_impact}
\left\langle\mathcal{A}_{p}(VFU)\right\rangle_{F}=\mathcal{A}_{p}(U)+\mathcal{A}_{p}(V)-\dfrac{\mathcal{A}_{p}(U)\mathcal{A}_{p}(V)}{\overline{\mathcal{A}}},
\end{equation}
hence proving the lemma.
\end{proof}

An implication of the above result is that, under repeated interspersions of free and non-free operations, the system gradually loses the memory of its initial symmetry, with this decay occurring exponentially fast in time. This can be verified by showing that the AGP of the unitary circuit relaxes exponentially with time to its Haar-averaged value.  We state this result in the following theorem.

\begin{theorem}\normalfont{(Thermalization of $Z_2$-AGP)}

Let $U^{(n)} = U F_{n-1} U \cdots F_1 U$, where each $F_j$ (for all $1 \leq j \leq n-1$) is an independently drawn random unitary from the group $\mathcal{U}_{\mathbb{Z}_2}(2^N)$ of $Z_2$-symmetry-preserving unitaries, sampled according to the Haar measure. Let $U$ be a fixed symmetry-breaking unitary with a finite AGP given by $\mathcal{A}_{p}(U)$. {Then the evolution of AGP satisfies}
\begin{equation}
\left\langle \mathcal{A}_{p}(U^{(n)}) \right\rangle_{\tilde{F}} = \overline{\mathcal{A}} \left[ 1 - \left( 1 - \frac{\mathcal{A}_{p}(U)}{\overline{\mathcal{A}}} \right)^n \right],
\end{equation}
where the expectation $\langle \cdot \rangle_{\tilde{F}}$ denotes averaging over all the independent random free operations $\{F_j\}$ at each time step.
\end{theorem}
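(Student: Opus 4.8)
The plan is to prove the identity by induction on $t$, peeling off one free-unitary average at a time with the help of Lemma~\ref{lemma}. A useful preliminary observation is that $\mathcal{A}_{p}$ is bi-invariant under free unitaries, $\mathcal{A}_{p}(FWF')=\mathcal{A}_{p}(W)$ for $F,F'\in\mathcal{U}_{\mathbb{Z}_2}(2^N)$: since $F^{\otimes 2}$ commutes with $\Pi^{(2)}$ and merely permutes $\{\mathbf{Z}^{\otimes 2}_{0},\mathbf{Z}^{\otimes 2}_{1}\}$, it leaves both $\rho_{\mathrm f}$ and $\mathbf{Z}^{\otimes 2}_{0}+\mathbf{Z}^{\otimes 2}_{1}$ invariant. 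In particular an outermost free layer is immaterial, so the main-text circuit $F_{t}UF_{t-1}U\cdots F_{1}U$ and the SM circuit $U^{(t)}=UF_{t-1}U\cdots F_{1}U$ have the same AGP, and I write the latter recursively as $U^{(t)}=U\,F_{t-1}\,U^{(t-1)}$ with $U^{(1)}=U$.

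For the inductive step I would fix $F_{1},\dots,F_{t-2}$, regard $U^{(t-1)}$ as a fixed unitary of known AGP, and apply Lemma~\ref{lemma} with $V=U$ and with $U^{(t-1)}$ in the role of the inner unitary, averaging over $F_{t-1}$:
\begin{equation*}
\big\langle \mathcal{A}_{p}(U^{(t)})\big\rangle_{F_{t-1}} = \mathcal{A}_{p}(U) + \mathcal{A}_{p}(U^{(t-1)}) - \frac{\mathcal{A}_{p}(U)\,\mathcal{A}_{p}(U^{(t-1)})}{\overline{\mathcal{A}}}.
\end{equation*}
The right-hand side is affine in $\mathcal{A}_{p}(U^{(t-1)})$, and the $F_{j}$ are independent, so taking the remaining expectation over $F_{1},\dots,F_{t-2}$ replaces $\mathcal{A}_{p}(U^{(t-1)})$ by $a_{t-1}:=\langle \mathcal{A}_{p}(U^{(t-1)})\rangle_{\tilde{F}}$, giving the scalar recursion
\begin{equation*}
a_{t} = a_{t-1}\Big(1-\frac{\mathcal{A}_{p}(U)}{\overline{\mathcal{A}}}\Big) + \mathcal{A}_{p}(U), \qquad a_{1}=\mathcal{A}_{p}(U).
\end{equation*}

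The last step is to solve this linear recursion: its unique fixed point is $\overline{\mathcal{A}}$, so $\delta_{t}:=a_{t}-\overline{\mathcal{A}}$ obeys $\delta_{t}=(1-\mathcal{A}_{p}(U)/\overline{\mathcal{A}})\,\delta_{t-1}$, hence $\delta_{t}=(1-\mathcal{A}_{p}(U)/\overline{\mathcal{A}})^{t-1}\delta_{1}$ with $\delta_{1}=\mathcal{A}_{p}(U)-\overline{\mathcal{A}}=-\overline{\mathcal{A}}(1-\mathcal{A}_{p}(U)/\overline{\mathcal{A}})$, and therefore $a_{t}=\overline{\mathcal{A}}\big[1-(1-\mathcal{A}_{p}(U)/\overline{\mathcal{A}})^{t}\big]$, as claimed.

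The substantive work is already contained in the twirling identity Eq.~(\ref{eq4}) and in Lemma~\ref{lemma}, so I do not expect a serious obstacle here; the only genuine care required is in the bookkeeping of nested conditional expectations. Specifically, one must note that Lemma~\ref{lemma} applies verbatim with a \emph{composite} inner unitary $U^{(t-1)}$ — legitimate because its proof uses only that the inner operator has a well-defined $\mathcal{A}_{p}$, never its internal structure — and that the affine dependence on $\mathcal{A}_{p}(U^{(t-1)})$, combined with independence of the $F_{j}$ and the fact that $U$ is non-random, is what permits the term-by-term replacement under the outer average.
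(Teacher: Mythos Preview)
Your proof is correct and follows essentially the same route as the paper: both peel off one free layer at a time using Lemma~\ref{lemma} and iterate. The only cosmetic difference is that the paper rewrites the lemma as $\langle\mathcal{A}_{p}(VFU)\rangle_{F}=\overline{\mathcal{A}}\big[1-(1-\mathcal{A}_{p}(U)/\overline{\mathcal{A}})(1-\mathcal{A}_{p}(V)/\overline{\mathcal{A}})\big]$, so the recursion telescopes multiplicatively in $1-\mathcal{A}_{p}/\overline{\mathcal{A}}$, whereas you set up and solve the equivalent affine recursion $a_{t}=(1-\mathcal{A}_{p}(U)/\overline{\mathcal{A}})a_{t-1}+\mathcal{A}_{p}(U)$; your explicit remarks on bi-invariance and on why the conditional-expectation bookkeeping is legitimate are more careful than the paper's treatment.
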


\begin{proof}
Given an arbitrary non-free unitary $U$ with a finite AGP, Lemma~\ref{lemma} implies the following:
\begin{eqnarray}
\langle \mathcal{A}_{p}(VFU) \rangle_{F}&=&\mathcal{A}_{p}(U)+\mathcal{A}_{p}(V)-\dfrac{\mathcal{A}_{p}(U)\mathcal{A}_{p}(V)}{\overline{\mathcal{A}}}\nonumber\\
&=&\overline{\mathcal{A}}\left[ 1-\left(1-\dfrac{\mathcal{A}_{p}(U)}{\overline{\mathcal{A}}}\right)\left(1-\dfrac{\mathcal{A}_{p}(V)}{\overline{\mathcal{A}}}\right) \right].
\end{eqnarray}
We consider $V=U^{(t-1)}=UF_{t-1}\cdots F_2U$. Then, recursive integrations over the free operations lead to
\begin{equation}\label{expasym_supp}
\left\langle \mathcal{A}_{p}(U^{(n)}) \right\rangle_{\tilde{F}} = \overline{\mathcal{A}} \left[ 1 - \left( 1 - \frac{\mathcal{A}_{p}(U)}{\overline{\mathcal{A}}} \right)^n \right],
\end{equation}
proving the result.
\end{proof}

\vspace{1em}

This result appears in Eq.~(5) of the main text. We numerically verify the thermalization of the $Z_2$-AGP by considering a fixed unitary of the form $U = u^{\otimes N}$, where $u = \exp\{-i\pi\sigma_x/24\}$, for several values of $N$. The corresponding results are shown in Fig.~\ref{fig:AGP-therm} of this supplemental text. We observe that the numerical data agree with the analytical expression in Eq.~(\ref{expasym_supp}). In a more general senario when $U^{(n)}=U_{n}F_{n-1}\cdots F_1 U_1$, where each $U_j$ for all $1\leq j\leq n$ may have different asymmetry generating powers, Eq.~(\ref{expasym_supp}) becomes
\begin{eqnarray}
 \langle \mathcal{A}_{p}(U^{(n)})\rangle_{\tilde{F}}   = \overline{\mathcal{A}}\left[ 1- \prod_{j=1}^{n}\left( 1-\dfrac{\mathcal{A}_{p}(U_j)}{\overline{\mathcal{A}}} \right) \right],
\end{eqnarray}
 which also converges to the Haar value exponentially with time. In the following, we examine the behavior of Eq.~(\ref{expasym_supp}) for two cases, where $\mathcal{A}_{p}(U)\leq \overline{\mathcal{A}}$, and $\mathcal{A}_{p}(U)> \overline{\mathcal{A}}$.

{
\textbf{Case-1} ($\mathcal{A}_{p}(U)\leq \overline{\mathcal{A}}$):
In this case, Eq.~(\ref{expasym_supp}) can be rewritten as follows:
\begin{eqnarray}
\left\langle \mathcal{A}_{p}(U^{(n)}) \right\rangle_{\tilde{F}}
&=& \overline{\mathcal{A}} \left[ 1 - \exp\left\{ n \ln\left( 1 - \frac{\mathcal{A}_{p}(U)}{\overline{\mathcal{A}}} \right) \right\} \right].
\end{eqnarray}
It is now apparent that the AGP converges exponentially to the Haar value as a function of the discrete time steps. The rate of convergence to the Haar value is given by
\begin{eqnarray}
\lambda = -\ln\left(1 - \frac{\mathcal{A}_{p}(U)}{\overline{\mathcal{A}}}\right),
\end{eqnarray}
which is strictly positive in this case. The numerical results shown in Fig.~\ref{fig:AGP-therm}(a) in the Supplemental Material illustrate clear agreement between the analytically obtained $\lambda$ and the numerical results.

\textbf{Case-2} ($\mathcal{A}_{p}(U)> \overline{\mathcal{A}}$):
When $\mathcal{A}_{p}(U) > \overline{\mathcal{A}}$, Eq.~(\ref{expasym_supp}) implies that the convergence to the Haar value is accompanied by alternating behavior. In this case, the evolution can be written as
\begin{eqnarray}
\left\langle \mathcal{A}_{p}(U^{(n)}) \right\rangle_{\tilde{F}}
&=& \overline{\mathcal{A}} \left[ 1 - (-1)^n \left( \frac{\mathcal{A}_{p}(U)}{\overline{\mathcal{A}}} - 1 \right)^n \right] \nonumber\\
&=& \overline{\mathcal{A}} \left[ 1 - (-1)^n \exp\left\{ n \ln\left( \frac{\mathcal{A}_{p}(U)}{\overline{\mathcal{A}}} - 1 \right) \right\} \right].
\end{eqnarray}
This behavior corresponds to an exponentially damped oscillatory relaxation towards the Haar-averaged value. Here, $\langle \mathcal{A}_{p}(U^{(n)})\rangle_{\tilde{F}}$ oscillates around the Haar value $\overline{\mathcal{A}}$ due to the presence of the sign factor $(-1)^n$.

}

\begin{figure}
\begin{center}
\includegraphics[scale=0.65]{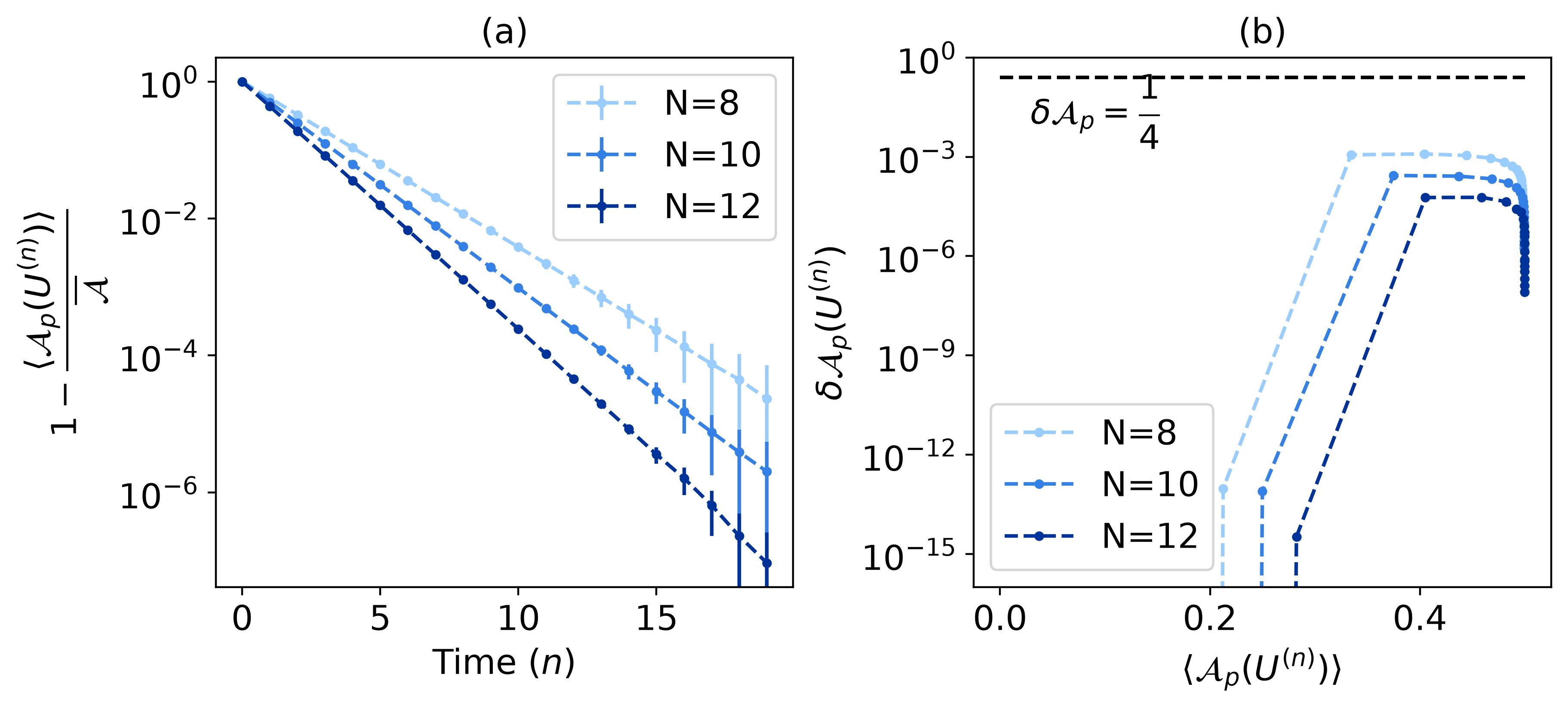}
\caption{\label{fig:AGP-therm}(a) Thermalization or equilibration of $Z_2$-AGP, characterized by the exponential decay of the quantity $1-\frac{\mathcal{A}_{p}(U^{(n)})}{\overline{\mathcal{A}}}$ for a fixed $U = u^{\otimes N}$ with $u = \exp\{-i\alpha\sigma_x\}$ and $\alpha = \pi/24$. Results are shown for three different system sizes, namely, $N=8$, $10$, and $12$. The numerical results (dots) coincide with the analytical expressions in Eq.~(5) of the main text and Eq.~(\ref{expasym_supp}) in the Supplemental Material (dashed lines). The error bars are given by the standard deviation of $\mathcal{A}_{p}(U^{(n)})$.
(b) Fluctuations of the $Z_2$-AGP, as measured by $\delta \mathcal{A}_{p}(U^{(n)})$ (see Eq.~(\ref{fluctAGP})), plotted versus the averaged $Z_2$-AGP on a semi-log scale for the same dynamics as in panel (a). The numerical results indicate that the fluctuations are increasingly suppressed with increasing system size. The black colored dashed line denotes the worst case bound on the fluctuations, given by $\delta \mathcal{A}_{p}=1/4$. All numerical simulations are performed by sampling $10^3$ instances of the circuit dynamics for $N=8$ and $10$, and $10^2$ instances for $N=12$.}
\end{center}
\end{figure}

\subsubsection{{Details on fluctuations and typicality of Eq.~(\ref{expasym_supp})}}

While Eq.~(\ref{asym_impact}) concerns the averaged $Z_2$-AGP over sandwiched free unitaries, it is equally important to examine the fluctuations of $\mathcal{A}_{p}(VFU)$ as $F$ is sampled randomly from the group of free unitaries. We quantify these fluctuations using the variance $\delta^2 \mathcal{A}_{p}(VFU)$, which is defined as
\begin{eqnarray}\label{fluctAGP}
 \delta^2 \mathcal{A}_{p}(VFU) = \left\langle \mathcal{A}^{2}_{p}(VFU)  \right\rangle_{F} - \left\langle \mathcal{A}_{p}(VFU)  \right\rangle^2_{F}  .
\end{eqnarray}
The second term on the right-hand side can be readily evaluated using Eq.~(\ref{asym_impact}). However, evaluating the first term would require knowledge of the fourth moment of the free unitary group. Instead, we focus on obtaining a meaningful upper bound on this term.

{
First, we note that $\mathcal{A}_{p}$ for arbitrary unitaries is bounded as $0 \leq \mathcal{A}_{p}(VFU) \leq \tfrac{1}{2}$. Using standard extremal bounds for random variables with bounded support, it follows that the variance satisfies
\begin{equation}
\delta^2\mathcal{A}_{p}(VFU) \leq \langle \mathcal{A}_{p}(VFU) \rangle_{F} \left( \dfrac{1}{2} - \langle \mathcal{A}_{p}(VFU) \rangle_{F} \right).
\end{equation}
This bound depends only on the second moment and does not require knowledge of higher moments of the group of free unitaries. In the simplest case where $U=V$, using Eq.~(\ref{asym_impact}), we obtain
\begin{eqnarray}
\delta^2\mathcal{A}_{p}(UFU)
&\leq&
\mu(U)\left(\tfrac{1}{2} - \mu(U)\right),\;\; \mathrm{ where }\;  \mu(U) = 2\mathcal{A}_p(U) - \dfrac{\mathcal{A}^2_p(U)}{\overline{\mathcal{A}}}.
\end{eqnarray}
This bound provides useful insight into the behavior of fluctuations in $\mathcal{A}_{p}(VFU)$. In the limit $\mathcal{A}_{p}(U)=0$, the bound vanishes, implying that the fluctuations are suppressed. Similarly, when $\mathcal{A}_{p}(U)\rightarrow 1/2$, the bound again vanishes, indicating suppressed fluctuations. In contrast, fluctuations can be significant in the intermediate regime.
}

{We further note that the above bound implies a worst-case upper limit on the variance.
Since $0 \leq \mathcal{A}_p(VFU) \leq \tfrac{1}{2}$, it follows that $\delta \mathcal{A}_p(VFU) \leq {1}/{4}$. This result implies that the fluctuations do not grow indefinitely with increasing system size. We further confirm this by numerically computing the fluctuations of the $Z_2$-AGP for the dynamics considered in Fig.~\ref{fig:AGP-therm}(a). The corresponding results are shown in Fig.~\ref{fig:AGP-therm}(b). We find that the worst-case bound remains several orders of magnitude larger than the actual values. Moreover, the fluctuations clearly decay exponentially with increasing system size.
}

A similar line of reasoning can be applied to the other resource theories considered in this work.

\vspace{2em}

\section{Main results concerning the non-stabilizing power}
\label{app-Nonstab}

In this section, we give rigorous proofs of all results concerning the non-stabilizing power, and we also include a few known results for the sake of completeness.

\subsection{Derivation of Eq.~(6) --- Twirling identity for the Clifford group}
The linear non-stabilizing power of an arbitrary non-Clifford unitary $U$ is given by $m_p(U)=1-\dfrac{1}{2^N}\text{Tr}\left[ Q U^{\otimes 4}\overline{\left( |\psi\rangle\langle\psi | \right)^{\otimes 4}}U^{\dagger\otimes 4} \right]$, where $Q=\dfrac{1}{2^{2N}}\sum_{j=0}^{2^{2N}-1} P_j$, with $P_j$s denoting Pauli strings supported over $N$ qubits. The operator $Q$ can be viewed as the projector onto the stabilizer subspace. We take $A=U^{\otimes 4}\rho_f U^{\dagger \otimes 4}$, where $\rho_f=\left\langle \left( |\psi\rangle\langle \psi | \right)^{\otimes 4} \right\rangle_{|\psi\rangle\in\mathrm{STAB}(2^N)}$, the fourth-order twirling channel (i.e., the uniform averaging over Clifford unitaries) is given by
\begin{eqnarray}
 \int_{C}d\mu(C)\left( C^{\dagger\otimes 4}\; U^{\otimes 4}\rho_f U^{\dagger \otimes 4}\;C^{\otimes 4} \right)   = \left(\alpha Q+ \alpha_{\perp}Q^{\perp}\right)\Pi^{(4)},
\end{eqnarray}
where $\Pi^{(4)}=\left(\sum_{j=1}^{24}\pi_j\right)/24$ and $Q^{\perp}=\mathbb{I}-Q$. The coefficients $\alpha$ and $\alpha_{\perp}$ are given by
\begin{eqnarray}
 \alpha = \dfrac{\text{Tr}\left( Q\Pi^{(4)} A \right)}{\text{Tr}\left( Q\Pi^{(4)} \right)}   = \dfrac{1-m_p(U)}{2^N\;\text{Tr}\left( Q\Pi^{(4)} \right)}
 \;\;\text{ and }\;\;
\alpha_{\perp}=\dfrac{\text{Tr}\left( Q^{\perp}\Pi^{(4)} A \right)}{\text{Tr}\left( Q^{\perp}\Pi^{(4)} \right)}  = \dfrac{m_p(U)-1+2^N}{2^N\;\text{Tr}\left( Q^{\perp}\Pi^{(4)} \right)} .
\end{eqnarray}
Upon substituting the above expressions, we get
\begin{eqnarray}\label{eqE3}
 \int_{C}d\mu(C)\left( C^{\dagger\otimes 4}\; A\;C^{\otimes 4} \right)   &=& \left(\dfrac{1-m_p(U)}{2^N\;\text{Tr}\left( Q\Pi^{(4)} \right)}\; Q+ \dfrac{m_p(U)-1+2^N}{2^N\;\text{Tr}\left( Q^{\perp}\Pi^{(4)} \right)}\;Q^{\perp}\right)\Pi^{(4)}\nonumber\\
 &=& \left\langle \left( |\psi\rangle\langle\psi | \right)^{\otimes 4}\right\rangle_{|\psi\rangle\in\mathrm{STAB}(2^N)} - m_p(U)\left( \dfrac{Q\Pi^{(4)}}{2^N\;\text{Tr}\left( Q\Pi^{(4)} \right)} -\dfrac{Q^{\perp}\Pi^{(4)}}{2^N\;\text{Tr}\left( Q^{\perp}\Pi^{(4)} \right)} \right),
\end{eqnarray}
where in the second equality, we used  $\left\langle \left( |\psi\rangle\langle\psi | \right)^{\otimes 4}\right\rangle_{|\psi\rangle\in\mathrm{STAB}(2^N)} = \dfrac{1}{2^N}\dfrac{Q\Pi^{(4)}}{\text{Tr}\left( Q\Pi^{(4)} \right)}+ \left( 1-\dfrac{1}{2^N} \right)\dfrac{Q^{\perp}\Pi^{(4)}}{\text{Tr}\left( Q^{\perp}\Pi^{(4)} \right)}$ \cite{leone2022stabilizer}. The second term on the right-hand side of the above equation can be simplified by taking $U$ to be a random unitary from the unitary group $\mathcal{U}(2^N)$ and performing the Haar average:
\begin{eqnarray}
\left( \dfrac{Q\Pi^{(4)}}{2^N\;\text{Tr}\left( Q\Pi^{(4)} \right)} -\dfrac{Q^{\perp}\Pi^{(4)}}{2^N\;\text{Tr}\left( Q^{\perp}\Pi^{(4)} \right)} \right)=\dfrac{1}{\overline{m}}\left[\left\langle \left( |\psi\rangle\langle\psi | \right)^{\otimes 4}\right\rangle_{|\psi\rangle\in\mathrm{STAB}(2^N)}- \dfrac{\Pi^{(4)}}{\mathrm{Tr}\left( \Pi^{(4)} \right)}\right].
\end{eqnarray}
Therefore, Eq.~(\ref{eqE3}) becomes
\begin{eqnarray}\label{E5}
\left\langle\left( C^{\dagger\otimes 4}\; U^{\otimes 4}\rho_f U^{\dagger \otimes 4}\;C^{\otimes 4} \right)\right\rangle_{C} = \left[ 1-\dfrac{m_p(U)}{\overline{m}} \right]\; \rho_f +\dfrac{m_p(U)}{\overline{m}}\; \dfrac{\Pi^{(4)}}{\mathrm{Tr}\left( \Pi^{(4)} \right)}   \,,
\end{eqnarray}
thereby yielding the twirling identity associated with the non-stabilizing power (see Eq.~(6) of the main text).

\subsection{Theorem~1 for Non-stabilizerness --- Protocol to measure non-stabilizerness}
\label{app-Nonstab-1}

Similar to the case of the $Z_2$-AGP, one can extend Theorem~1 in the main text to the non-stabilizing power. Using the same protocol as before, the probability of obtaining a fixed measurement outcome in the QRT of non-stabilizerness is given by
\begin{eqnarray}
  p_b=\langle  \psi |U^{\dagger}C^{\dagger} |b\rangle\langle  b |CU |\psi\rangle,
\end{eqnarray}
where  $|\psi\rangle$ and $C$ denote a random stabilizer state and a random Clifford unitary, respectively. Since the linear non-stabilizing power is a fourth order quantum state polynomial, we evaluate the fourth power of the probability, averaged over the free unitaries and the states,
\begin{eqnarray}
\langle p^{(4)}_b\rangle_{C, |\psi\rangle}&=&  \left\langle \text{Tr}\left[ \langle b^{\otimes 4}| \left( CU|\psi\rangle\langle\psi |U^{\dagger}C^{\dagger} \right)^{\otimes 4}|b^{\otimes 4}\rangle \right]\right\rangle_{C, |\psi\rangle} \nonumber\\
&=&\left[1-\dfrac{m_p(U)}{\overline{m}}\right] \; \text{Tr}\left[ \langle b^{\otimes 4}|\rho_f|b^{\otimes 4}\rangle  \right]+\dfrac{m_p(U)}{\overline{m}}\;\dfrac{\text{Tr}\left[ \langle b^{\otimes 4}| \Pi^{(4)} |b^{\otimes 4}\rangle \right]}{\mathrm{Tr}\left( \Pi^{(4)} \right)},
\end{eqnarray}
where, in the second equality, we used the result from Eq.~(\ref{E5}). For simplicity, we write $k_1=\dfrac{\text{Tr}\left[ \langle b^{\otimes 4}| \Pi^{(4)} |b^{\otimes 4}\rangle \right]}{\mathrm{Tr}\left( \Pi^{(4)} \right)}$ and $k_2=\text{Tr}\left[ \langle b^{\otimes 4}|\rho_f|b^{\otimes 4}\rangle  \right]$. Then, one can write the unbiased estimator of the non-stabilizing power as
\begin{eqnarray}
\langle m_{\mathrm{est}}  \rangle = \dfrac{k_2-\langle \tilde{p}^{(4)}\rangle_{C, |\psi\rangle}}{k_2-k_1},  \;\; \mathrm{ where }\;\; \tilde{p}^{(4)}= \dfrac{1}{2^{N_B}}\sum_{b=0}^{2^{N_B}}p^{4}_b.
\end{eqnarray}
For $N_A\geq 1$, the constants $k_1$ and $k_2$ take the forms
\begin{eqnarray}\label{k1nonstab}
k_1=\dfrac{\mathrm{Tr}\left( \langle b^{\otimes 4}|\Pi^{(4)}|b^{\otimes 4}\rangle \right)}{\mathrm{Tr}(\Pi^{(4)})}
= \prod_{s=0}^{3}\dfrac{2^{N_{A}}+s}{2^N+s} ,
\end{eqnarray}
and
\begin{eqnarray}\label{k2nonstab}
k_2=\mathrm{Tr}\left( \langle b^{\otimes 4}|\rho_f|b^{\otimes 4}\rangle \right)
= \dfrac{\mathrm{Tr}\left( Q_A\Pi^{(4)}_{A} \right)}{2^{N+N_B}\mathrm{Tr}\left( Q\Pi^{(4)} \right)}\; +\; \left( 1-\dfrac{1}{2^N} \right) \; \left[\dfrac{\mathrm{Tr}\left(\Pi^{(4)}_{A}\right) - \dfrac{\mathrm{Tr}\left( Q_A\Pi^{(4)}_{A} \right)}{2^{N_B}}}{\mathrm{Tr}\left( Q^{\perp}\Pi^{(4)} \right)}\right],
\end{eqnarray}
where $\mathrm{Tr}\left( Q\Pi^{(4)} \right)=\dfrac{(2^N+1)(2^N+2)}{6}$ and $\mathrm{Tr}\left( Q_A\Pi^{(4)}_A \right)=\dfrac{(2^{N_A}+1)(2^{N_A}+2)}{6}$ \cite{leone2022stabilizer}.

{
\subsubsection{Fluctuation and sampling complexity}
Here, we evaluate the fluctuations of the estimator and also provide analytical predictions for the special case when the given unitary is Haar random. Similar to the QRT of AGP, the variance can be computed using the expression
\begin{eqnarray}
\mathrm{Var}(m_{\mathrm{est}})
= \dfrac{1}{(k_2-k_1)^2}\left[ \left\langle \sum_{b, b' \in \{0, 1\}^{N_{B}}} p^4_{b}p^4_{b'} \right\rangle - \left\langle \tilde{p}^{(4)} \right\rangle^2  \right],
\end{eqnarray}
where $k_1$ and $k_2$ are the constants given in Eqs.~(\ref{k1nonstab}) and ~(\ref{k2nonstab}).

\begin{figure}
\begin{center}
\includegraphics[scale=0.65]{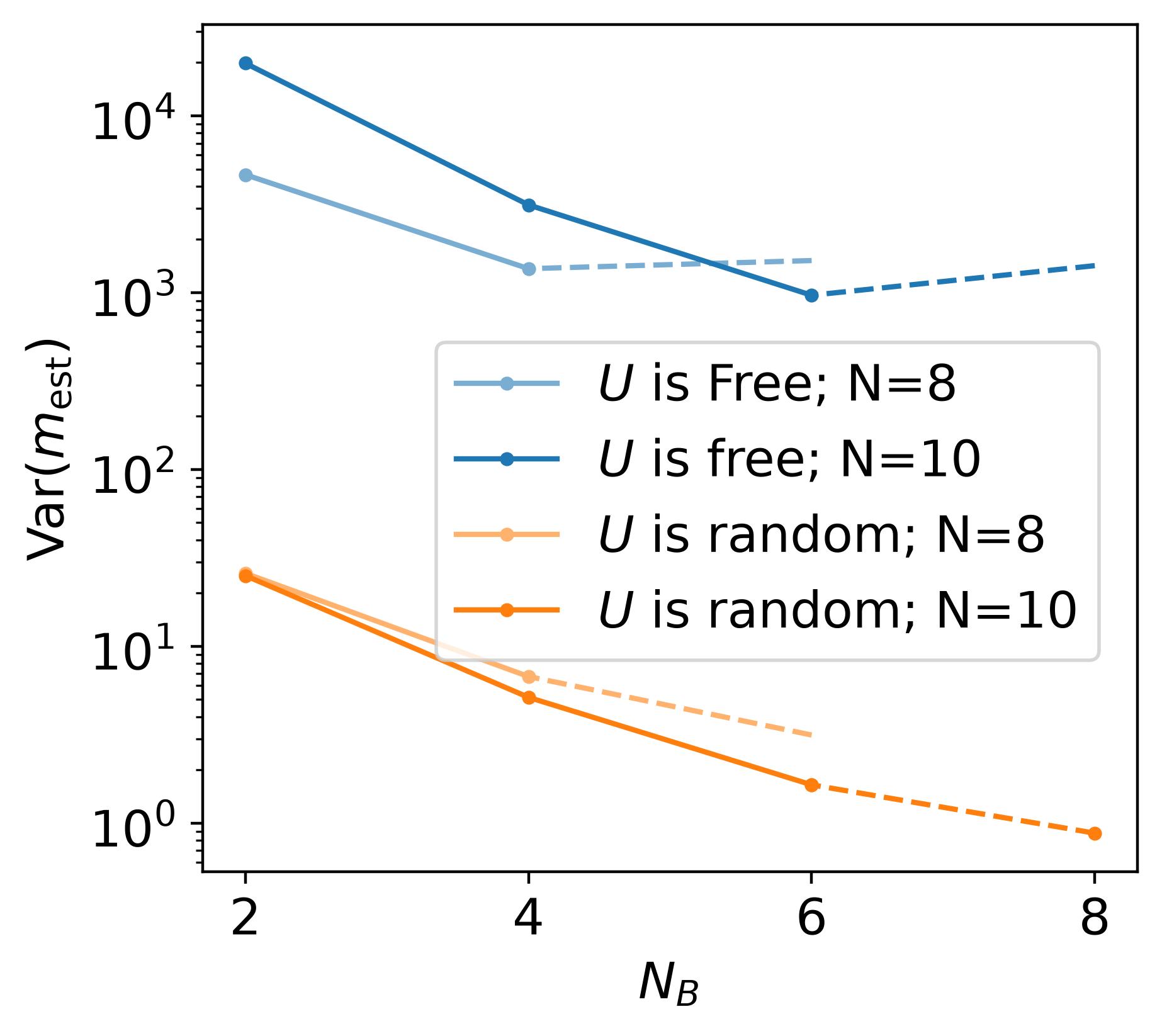}
\caption{\label{fig:mag-therm}  The figure illustrates numerical evaluation of the variance of the non-stabilizing power estimator as a function of the measured subsystem size $N_{B}$, for two limiting cases, namely, when $U$ is a random Clifford operator (blue) and when $U$ is a Haar-random unitary (orange). Results are shown for two system sizes, $N=8$ and $N=10$. The data are obtained by averaging over $\sim 10^4$ samples of random free (Clifford) operations and random stabilizer states. The results indicate that, for fixed $N$, the variance is exponentially suppressed with increasing $N_{B}$ in both cases. Moreover, the variance remains larger when the unitary remains close to being a Clifford.}
\end{center}
\end{figure}

\textbf{ $U$ is Haar random: }Here, we perform the Haar average over the unitary group to obtain the variance of the estimator. In this case, we note that $\langle \tilde{p}^{(4)}\rangle = k_1$. We identify that $\langle p^2_{b}p^2_{b'}\rangle_{U\in U{2^N}}$ for projection onto arbitrary states $|b\rangle$ and $|b'\rangle \in \mathcal{H}_{B}$ takes the following form:
\begin{eqnarray}
\left\langle p^2_{b} \right\rangle =
\begin{cases}
\displaystyle
\prod_{s=0}^{7}
\frac{2^{N_{A}} + s}{2^{N} + s},
& \text{if } \; b = b' \\[6pt]\\
\displaystyle k_1^2\,
\prod_{s=0}^{3}
\dfrac{2^N +s}{2^N+2+s},
& \text{otherwise},
\end{cases}
\end{eqnarray}
\begin{eqnarray}
\mathrm{Var}(m_{\mathrm{est}}) &=& \dfrac{1}{(k_2-k_1)^2} \left[ \dfrac{1}{2^{N_{B}}} \left( \prod_{s=0}^{7}
\frac{2^{N_{A}} + s}{2^{N} + s} \right) + \dfrac{2^{N_B}-1}{2^{N_{B}}}  \left(\prod_{s=0}^{3}
\dfrac{2^N +s}{2^N+2+s} \right) k^2_{1} - k^2_{1}  \right]\nonumber\\
&\approx & \dfrac{72}{2^{N_{B}}-1} \;\; \mathrm{ for }\;\; N\gg N_{B}.
\end{eqnarray}

}

\subsection{Average non-stabilizerness of the projected states upto the leading order --- Eq.~(2) for non-stabilizerness}

Here, we examine the validity of Eq.~(2) for the QRT of non-stabilizerness. To this end, we compute the leading-order correction to the average non-stabilizerness of a projected state produced by the protocol under consideration, which is given by
\begin{equation}
\left\langle\text{Tr}\left[\left(\mathbb{I}_{2^{N_A}}- 2^{N_A}Q\right) \dfrac{  \left(\langle b |CU |\psi\rangle\langle\psi |U^{\dagger}C^{\dagger}|b\rangle \right)^{\otimes 4} }{\left(\langle\psi |U^{\dagger}C^{\dagger}|b\rangle\langle b| CU|\psi \rangle\right)^{4} }\right]\right\rangle_{C, |\psi\rangle}\approx \dfrac{\left\langle \text{Tr}\left[ \left(\mathbb{I}_{2^{N_A}}- 2^{N_A}Q\right) \left(\langle b |CU |\psi\rangle\langle\psi |U^{\dagger}C^{\dagger}|b\rangle \right)^{\otimes 4} \right]\right\rangle_{C, |\psi\rangle} }{\left\langle\left(\langle\psi |U^{\dagger}C^{\dagger}|b\rangle\langle b| CU|\psi \rangle\right)^{4}\right\rangle_{C, |\psi\rangle}}+\text{1-st order terms},
\end{equation}
where the average is performed over the Clifford operators ($C$) and the initial stabilizer states ($|\psi\rangle$). The numerator can be solved as follows:
\begin{eqnarray}
\left\langle \text{Tr}\left[ \left(\mathbb{I}_{2^{N_A}}- 2^{N_A}Q\right) \left(\langle b |CU |\psi\rangle\langle\psi |U^{\dagger}C^{\dagger}|b\rangle \right)^{\otimes 4} \right]\right\rangle_{C, |\psi\rangle}  =\dfrac{\text{Tr}\left(\Pi^{(4)}_{A}\right)}{\text{Tr}\left(\Pi^{(4)}_{AB}\right)} \dfrac{m_p(U)}{\overline{m_{AB}}}\overline{m_A},
\end{eqnarray}
The denominator, after the averaging, can be written as
\begin{eqnarray}
\left\langle\left(\langle\psi |U^{\dagger}C^{\dagger}|b\rangle\langle b| CU|\psi \rangle\right)^{4}\right\rangle_{C, |\psi\rangle}
&=&\left(1-\dfrac{m_p(U)}{\overline{m_{AB}}}\right)  \text{Tr}\left( \langle b^{\otimes 4}|\;\rho_f\;|b^{\otimes 4}\rangle  \right)+\dfrac{m_p(U)}{\overline{m_{AB}}}\dfrac{\text{Tr}\left(\Pi^{(4)}_{A}\right)}{\text{Tr}\left(\Pi^{(4)}_{AB}\right)} ,
\end{eqnarray}
where $\rho_f=\left\langle \left( |\psi\rangle\langle\psi | \right)^{\otimes 4} \right\rangle_{|\psi\rangle\in\mathrm{STAB}(2^N)}$. Recall from the previous subsection that $k_1=\left( \langle b^{\otimes 4}|\;\Pi^{(4)}\;|b^{\otimes 4}\rangle  \right)$ and $k_2=\mathrm{Tr}\left( \langle b^{\otimes 4}|\;\rho_f\;|b^{\otimes 4}\rangle  \right)$. Then, the leading order correction to the average non-stabilizerness of the projected states becomes
\begin{eqnarray}
\left\langle m_\text{PS}\right\rangle_{C, |\psi\rangle} \approx \dfrac{\dfrac{m_p(U)}{\overline{m}}\;\overline{m_A}\; k_1}{\left( 1-\dfrac{m_p(U)}{\overline{m}} \right) k_2 + \dfrac{m_p(U)}{\overline{m}}   k_1}\approx \dfrac{m_p(U)}{\overline{m}}\;\overline{m_A}\;\; \mathrm{ when  }\;\; N_{B}\lessapprox N.
\end{eqnarray}
Therefore, the resource content of the projected states behaves very similar to the non-stabilizing power of the unitary $U$.

\subsection{Derivation of Eq.~(7) --- Thermalization of non-stabilizing power}
As shown in Ref.~\cite{varikuti2025impact}, the non-stabilizing power of quantum circuits built from interlacing layers of Clifford and non-Clifford operations relax exponentially to the Haar-averaged value. For the sake of self-consistency, we present the derivation for the same in this section. Before proceeding, we first examine the non-stabilizing power of a random Clifford operation sandwiched between two arbitrary non-Clifford unitaries.
\begin{lemma}\label{lemma2}\normalfont{\cite{varikuti2025impact}}
Let $U$ and $V$ be two arbitrary {unitaries} in the QRT of non-stabilizerness over an $N$-qubit Hilbert space $\mathcal{H}^{2^{N}}$, with their corresponding finite non-stabilizing powers given by $m_{p}(U)$ and $m_p(V)$, respectively. Let $C$ be a random unitary drawn from the Clifford group according to its Haar measure. Then, the non-stabilizing power of $C$ sandwiched between $U$ and $V$, on average, is related to $m_{p}(U)$ and $m_{p}(V)$ as
\begin{equation}
\left\langle m_{p}(VCU) \right\rangle_{C} = m_{p}(U)+m_{p}(V)-\dfrac{m_{p}(U)m_{p}(V)}{\overline{m}}.
\end{equation}
\end{lemma}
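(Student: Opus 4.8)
The plan is to replicate the argument behind Lemma~\ref{lemma}, replacing the $\mathbb{Z}_2$ twirling identity of Eq.~\eqref{twirl} with the Clifford twirling identity of Eq.~\eqref{magic_main} (equivalently Eq.~\eqref{E5}). Starting from the definition $m_p(W) = 1 - 2^N\,\text{Tr}\!\left[Q\,W^{\otimes 4}\rho_s\,W^{\dagger\otimes 4}\right]$ applied to $W = VCU$, I would use linearity of the trace to move the average over $C$ inside:
\begin{equation*}
\left\langle m_p(VCU)\right\rangle_C = 1 - 2^N\,\text{Tr}\!\left[Q\,V^{\otimes 4}\left\langle (CU)^{\otimes 4}\rho_s\,(CU)^{\dagger\otimes 4}\right\rangle_C V^{\dagger\otimes 4}\right].
\end{equation*}
The inner average is precisely the left-hand side of Eq.~\eqref{magic_main}, and hence equals $\left[1-\frac{m_p(U)}{\overline{m}}\right]\rho_s + \frac{m_p(U)}{\overline{m}}\,\bm{\Pi}^{(4)}$.

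Substituting this, only two traces remain. The first, $\text{Tr}\!\left[Q\,V^{\otimes 4}\rho_s\,V^{\dagger\otimes 4}\right]$, equals $(1-m_p(V))/2^N$ directly from the definition of the non-stabilizing power of $V$. For the second, $\text{Tr}\!\left[Q\,V^{\otimes 4}\bm{\Pi}^{(4)}V^{\dagger\otimes 4}\right]$, I would use that $\bm{\Pi}^{(4)}$ is a normalized sum of permutation operators, each of which commutes with $V^{\otimes 4}$; hence $V^{\otimes 4}\bm{\Pi}^{(4)}V^{\dagger\otimes 4} = \bm{\Pi}^{(4)}$ and the trace collapses to $\text{Tr}\!\left[Q\,\bm{\Pi}^{(4)}\right] = (1-\overline{m})/2^N$, the last equality following from the fact that $\bm{\Pi}^{(4)}$ is the Haar fourth moment together with the definition of $\overline{m}$. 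Collecting terms gives
\begin{equation*}
\left\langle m_p(VCU)\right\rangle_C = 1 - \left(1-\frac{m_p(U)}{\overline{m}}\right)\!\left(1 - m_p(V)\right) - \frac{m_p(U)}{\overline{m}}\left(1-\overline{m}\right),
\end{equation*}
and expanding the products and cancelling the common terms yields $m_p(U) + m_p(V) - m_p(U)\,m_p(V)/\overline{m}$, as claimed.

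I do not expect a genuine obstacle here: all the representation-theoretic content is already carried by the twirling identity Eq.~\eqref{magic_main}, whose derivation in the previous subsection does not assume $U$ to be Haar-random and therefore applies verbatim to the arbitrary non-Clifford $U$ of the statement. The only points that warrant a line of justification are the conjugation invariance $V^{\otimes 4}\bm{\Pi}^{(4)}V^{\dagger\otimes 4} = \bm{\Pi}^{(4)}$ (immediate, since each permutation operator commutes with $V^{\otimes 4}$) and the bookkeeping of the normalizations $\text{Tr}\!\left[Q\,\bm{\Pi}^{(4)}\right] = (1-\overline{m})/2^N$ and $\text{Tr}\!\left[Q\,V^{\otimes 4}\rho_s\,V^{\dagger\otimes 4}\right] = (1-m_p(V))/2^N$, both of which are immediate from the definitions; thereafter the claim reduces to the same elementary algebra used to close the proof of Lemma~\ref{lemma}.
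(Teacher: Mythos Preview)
Your proposal is correct and follows essentially the same route as the paper's own proof: apply the Clifford twirling identity Eq.~\eqref{magic_main}/\eqref{E5} inside the definition of $m_p(VCU)$, then evaluate the two resulting traces using $\text{Tr}[Q\,V^{\otimes 4}\rho_s V^{\dagger\otimes 4}]=(1-m_p(V))/2^N$ and $\text{Tr}[Q\,\bm{\Pi}^{(4)}]=(1-\overline{m})/2^N$. The only cosmetic difference is that you spell out the invariance $V^{\otimes 4}\bm{\Pi}^{(4)}V^{\dagger\otimes 4}=\bm{\Pi}^{(4)}$ explicitly, whereas the paper absorbs it silently into the cited identity.
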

\begin{proof}
We are interested in evaluating the quantity
\begin{eqnarray}
\left\langle m_p(VCU)  \right\rangle_{C} =1-2^N\int_{C}d\mu(C) \text{Tr}\left[ Q \left( VCU \right)^{\otimes 4}\overline{\left( |\psi\rangle\langle\psi \right)^{\otimes 4}} \left( VCU \right)^{\dagger\otimes 4}  \right]   .
\end{eqnarray}
From Eq.~(\ref{eq4}), it follows that
\begin{eqnarray}
\int_{C}d\mu(C) (VCU)^{\otimes 4} \overline{\left( |\psi\rangle\langle\psi | \right)^{\otimes 4}} (VCU)^{\dagger \otimes 4} = \left[ 1-\dfrac{m_p(U)}{\overline{m}} \right] V^{\otimes 4}\overline{\left( |\psi\rangle\langle\psi | \right)^{\otimes 4}}V^{\dagger \otimes 4} + \dfrac{m_p(U)}{\overline{m}}\dfrac{\Pi^{(4)}}{\mathrm{Tr}\left( \Pi^{(4)} \right)}.
\end{eqnarray}
Then, the non-stabilizing power of the above configuration is given by
\begin{eqnarray}
\left\langle  m_p(VCU)   \right\rangle_{C}&=& \left[ 1-\dfrac{m_p(U)}{\overline{m}} \right] m_p(V) + m_p(U)\nonumber\\
&=&m_p(U)+m_p(V)-\dfrac{m_p(U)m_p(V)}{\overline{m}},
\end{eqnarray}
proving the lemma.
\end{proof}\\

We now state the result corresponding to the exponential convergence of $m_p$ to the Haar value under interlacing free and non-free dynamics.
\begin{theorem}(\normalfont{Theorem~2 for the non-stabilizing power})
Let $U^{(n)} = U C_{n-1} U \cdots C_1 U$, where each $C_j$ (for all $1 \leq j \leq n-1$) is an independently drawn random unitary from the Clifford group, sampled according to the Haar measure. Let $U$ be a fixed non-Clifford unitary with a finite non-stabilizing power given by $m_{p}(U)$. {Then, the evolution of non-stabilizing power satisfies}
\begin{equation}
\left\langle m_{p}(U^{(n)}) \right\rangle_{\tilde{C}} = \overline{m} \left[ 1 - \left[ 1 - \dfrac{m_{p}(U)}{\overline{m}} \right]^n \right],
\end{equation}
where the expectation $\langle \cdot \rangle_{\tilde{C}}$ denotes averaging over all the independent random Cliford operations $\{C_j\}$ at each time step.
\end{theorem}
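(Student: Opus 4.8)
The plan is to prove the formula by induction on $t$, with Lemma~\ref{lemma2} supplying the single-step recursion. First I would recast Lemma~\ref{lemma2} in the equivalent product form $\langle m_p(VCU)\rangle_C = \overline{m}\bigl[1-(1-m_p(U)/\overline{m})(1-m_p(V)/\overline{m})\bigr]$, which is just an algebraic rearrangement of the stated identity and which holds for arbitrary unitaries $U,V$ (in particular it degenerates correctly when either argument is a Clifford, i.e.\ has vanishing non-stabilizing power).

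The base case $t=1$ is immediate since $U^{(1)}=U$. For the inductive step I would write the circuit in nested form, $U^{(t)} = U\,C_{t-1}\,U^{(t-1)}$ with $U^{(t-1)}=U C_{t-2}U\cdots C_{1}U$, observe that $U^{(t-1)}$ depends only on $C_1,\dots,C_{t-2}$ and is therefore independent of $C_{t-1}$, and then average in two stages: first over $C_{t-1}$ with everything else held fixed, then over the remaining $C_1,\dots,C_{t-2}$. The inner average is exactly Lemma~\ref{lemma2} applied with $C\to C_{t-1}$ and the two fixed unitaries $U$ and $U^{(t-1)}$, yielding $\overline{m}\bigl[1-(1-m_p(U)/\overline{m})(1-m_p(U^{(t-1)})/\overline{m})\bigr]$. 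Since this expression is affine in $m_p(U^{(t-1)})$, the remaining expectation passes through linearly and replaces $m_p(U^{(t-1)})$ by $\langle m_p(U^{(t-1)})\rangle_{\tilde C}$. Substituting the inductive hypothesis, the factor $1-\langle m_p(U^{(t-1)})\rangle_{\tilde C}/\overline{m}$ becomes $(1-m_p(U)/\overline{m})^{t-1}$, and the product telescopes to $(1-m_p(U)/\overline{m})^{t}$, which is the claim.

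The main (and essentially only) subtlety I anticipate is justifying the two-stage averaging: Lemma~\ref{lemma2} is stated for \emph{fixed} $U,V$, whereas here the ``outer'' unitary $U^{(t-1)}$ is itself random. This is resolved cleanly by the tower property of conditional expectation together with independence of the draws $\{C_j\}$: conditioning on $C_1,\dots,C_{t-2}$ freezes $U^{(t-1)}$, so Lemma~\ref{lemma2} applies verbatim to the conditional average, and its affine conclusion survives the outer expectation by linearity. Boundedness of $m_p$ on $\mathrm{U}(2^{N})$ removes any concern about interchanging expectations. Everything else is bookkeeping; one could equivalently peel the outermost layer $C_1 U$ from the right instead, with the same result.
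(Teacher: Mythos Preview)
Your proposal is correct and follows essentially the same route as the paper: rewrite Lemma~\ref{lemma2} in the product form $\langle m_p(VCU)\rangle_C = \overline{m}\bigl[1-(1-m_p(U)/\overline{m})(1-m_p(V)/\overline{m})\bigr]$ and then iterate/induct over the layers of $U^{(t)}$. Your explicit justification of the two-stage averaging via the tower property and the affinity of the recursion in $m_p(U^{(t-1)})$ is a welcome refinement of what the paper simply phrases as ``recursively performing the integrations.''
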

\begin{proof}
Given an arbitrary non-Clifford operation $U$ with a finite non-stabilizing power $m_p(U)$, Lemma~\ref{lemma2} implies the following:
\begin{eqnarray}
\langle m_{p}(VCU) \rangle_{C} &=& \left( 1-\dfrac{m_p(U)}{\overline{m}} \right)m_p(V) + m_p(U)\nonumber\\
&=& \overline{m}\left[1- \left( 1-\dfrac{m_p(U)}{\overline{m_p}} \right) \left( 1-\dfrac{m_p(V)}{\overline{m_p}} \right) \right].
\end{eqnarray}
By recursively performing the integrations over the Clifford operations, we finally get
\begin{eqnarray}
\langle m_p(U^{(n)})\rangle_{\tilde{C}} = \overline{m}\left[ 1-\prod_{j=1}^{n}\left( 1-\dfrac{m_p(U_j)}{\overline{m}} \right) \right].
\end{eqnarray}
For identical non-Clifford unitaries, i.e., $U_j=U$ for all $j$, the above expression becomes
\begin{eqnarray}
\langle m_p(U^{(n)})\rangle_{\tilde{C}} = \overline{m}\left[ 1-\left( 1-\dfrac{m_p(U)}{\overline{m}} \right)^{n} \right].
\end{eqnarray}
This proves Theorem~2 for the non-stabilizing power.
\end{proof} \\

{
Here, we analyze the evolution of the non-stabilizing power in two regimes: $m_p(U)\leq \overline{m}$ and $m_p(U)>\overline{m}$. In the first case, Eq.~(\ref{expasym_supp}) can be expressed as
\begin{eqnarray}
\left\langle m_{p}(U^{(n)}) \right\rangle_{\tilde{F}}
&=& \overline{m} \left[ 1 - \exp\left\{ n \ln\left( 1 - \frac{m_{p}(U)}{\overline{m}} \right) \right\} \right].
\end{eqnarray}
Since $0 \leq \frac{m_{p}(U)}{\overline{m}} \leq 1$, the logarithm is negative, implying that the above expression approaches $\overline{m}$ exponentially fast as a function of the discrete time steps $t$. The associated convergence rate is
\begin{eqnarray*}
\lambda = -\ln\left(1 - \frac{m_{p}(U)}{\overline{m}}\right),
\end{eqnarray*}
which is strictly positive in this regime. On the other hand, when $m_{U}> \overline{m}_{p}$, Eq.~(\ref{expasym_supp}) leads to a qualitatively different behavior. The evolution can be rewritten as
\begin{eqnarray}
\left\langle m_{p}(U^{(n)}) \right\rangle_{\tilde{F}}
&=& \overline{m} \left[ 1 - (-1)^n \exp\left\{ n \ln\left( \frac{m_{p}(U)}{\overline{m}} - 1 \right) \right\} \right].
\end{eqnarray}
In this case, the magnitude of the deviation from $\overline{m}$ still decays exponentially with $n$, while the factor $(-1)^n$ induces an alternating sign. Consequently, $\left\langle m_{p}(U^{(n)}) \right\rangle_{\tilde{F}}$ exhibits damped oscillations around the Haar-averaged value $\overline{m}$ before converging to it.
}

\newpage

{\textmd{\textit{In the remainder of this Supplemental Material, we generalize the main results to the resource theories of entanglement and coherence. These examples are included solely to illustrate the generality and broad applicability of our framework, rather than to introduce additional conceptual or technical overhead.}}}

\section{Extension to the QRT of entanglement}

To show the generality of our main results, we perform a similar analysis for the QRT of entanglement in this section and for the QRT of coherence in the next one.
In the resource theory of entanglement (across a bipartition $AB$), the set of local unitaries (and classical communications) constitutes the free operations. For pure bipartite states, the von-Neumann entropy, which is given by $\mathcal{S}(|\psi\rangle)=-\text{Tr}\left[ \rho_A\ln\rho_A \right]$, is the gold-standard resource monotone. The corresponding linear entropy is given by the purity: $\mathcal{S}=1-\text{Tr}(\rho^2_A)$. Consider a bipartite quantum system with Hilbert space $\mathcal{H}_A \otimes \mathcal{H}_B$, where the subsystems $A$ and $B$ have dimensions $d_A$ and $d_B$, respectively. For a pure product state $|\psi\rangle = |\phi_A\rangle \otimes |\phi_B\rangle$, a bipartite unitary operator $U$ transforms it into the state $U|\psi\rangle$. The entanglement created by this operation can be quantified using the purity of the reduced state on subsystem $B$. Explicitly, if $\rho_B = \operatorname{Tr}_A\big(U|\psi\rangle\langle\psi| U^\dagger\big)$ is the reduced density matrix on $B$, then the entanglement measure based on purity is given by
\begin{eqnarray}
\mathcal{E}(U|\psi\rangle) &=& 1 - \operatorname{Tr}(\rho_B^2)\nonumber\\
& =& 1 - \operatorname{Tr}\left[ U^{\otimes 2} \left( |\phi_A \phi_B \rangle \langle \phi_A \phi_B| \right)^{\otimes 2} U^{\dagger \otimes 2} S_{BB'} \right],
\end{eqnarray}
where $S_{BB'}$ swaps the two copies of subsystem $B$. Building upon this, the \emph{entangling power} of the unitary operator $U$ is defined as the average amount of entanglement generated when $U$ acts on Haar-distributed random product states:
\begin{equation}
e_p(U) = \overline{\mathcal{E}(U|\psi\rangle)} = 1- \operatorname{Tr}\left( U^{\otimes 2} \overline{\left( |\phi_A\phi_B\rangle\langle\phi_A\phi_B | \right)^{\otimes 2}} U^{\dagger \otimes 2} S_{BB'} \right).
\end{equation}

Similar to the twirling identities we have obtained for the $Z_2$-AGP and non-stabilizerness, here we consider the twirling channel given by $\mathbb{T}=\left\langle F^{\otimes 2}\;U^{\otimes 2}\;\rho_f\; U^{\dagger\otimes 2}\; F^{\dagger\otimes 2} \right\rangle_{F}$, where $\rho_f=\left\langle\left(| \phi_A\phi_B\rangle\langle\phi_A\phi_B |\right)^{\otimes 2}\right\rangle_{|\phi_A\rangle, |\phi_B\rangle}$. We evaluate the identity in the following:
\begin{eqnarray}
\mathbb{T}&=&\left\langle F^{\otimes 2}\;U^{\otimes 2}\;\rho_f\; U^{\dagger\otimes 2}\; F^{\dagger\otimes 2} \right\rangle_{F}\nonumber\\
&=&\int_{u_A, u_B}d\mu(u_A)d\mu(u_B) \left( u_A\otimes u_B \right)^{\otimes 2} U^{\otimes 2} \overline{\left( |\phi_A\phi_B\rangle\langle\phi_A\phi_B | \right)^{\otimes 2}} U^{\dagger \otimes 2} \left( u^{\dagger}_A\otimes u^{\dagger}_B \right)^{\otimes 2}\nonumber\\
&=& \sum_{i, j\in\{+, -\}} \text{Tr}\left( \Pi^{i (2)}_{A} \Pi^{j (2)}_{B} U^{\otimes 2} \overline{\left( |\phi_A\phi_B\rangle\langle\phi_A\phi_B | \right)^{\otimes 2}} U^{\dagger \otimes 2}  \right)  \dfrac{\Pi^{i (2)}_{A} \Pi^{j (2)}_{B}}{\text{Tr}\left( \Pi^{i (2)}_{A} \Pi^{j (2)}_{B} \right)}\nonumber\\
&=& \left(1-\dfrac{e_p(U)}{2}\right) \overline{\left( |\phi_A\phi_B\rangle\langle\phi_A\phi_B | \right)^{\otimes 2}} + \dfrac{e_p(U)}{2} \dfrac{\Pi^{- (2)}_{A} \Pi^{- (2)}_{B}}{\text{Tr}\left( \Pi^{- (2)}_{A} \Pi^{- (2)}_{B} \right)}.
\end{eqnarray}
In the last equality, we used the relation $\text{Tr}\left( \Pi^{i (2)}_{A} \Pi^{j (2)}_{B} U^{\otimes 2} \overline{\left( |\phi_A\phi_B\rangle\langle\phi_A\phi_B | \right)^{\otimes 2}} U^{\dagger \otimes 2} \right) = 1+ (-1)^{i+j} + (1-e_p(U))\left( (-1)^{i}+(-1)^{j} \right)$, where we assigned $+1$ for the symbol $+$ and $-1$ for $-$. With a slight adjustment, the above relation can be rewritten as
\begin{eqnarray}\label{ent-main}
\mathbb{T} = \left(1-\dfrac{e_p(U)}{\overline{e_p}}\right) \overline{\left( |\phi_A\phi_B\rangle\langle\phi_A\phi_B | \right)^{\otimes 2}} + \dfrac{e_p(U)}{\overline{e_p}} \dfrac{\Pi^{(2)}_{AB}}{\text{Tr}\left( \Pi^{ (2)}_{AB}  \right)},
\end{eqnarray}
where $\overline{e_p}$ denotes the Haar averaged value for the entanglement generating power of the unitary $U$ and is given by $\overline{e_p}=1-(d_A+d_B)/(d_Ad_B+1)$.

\subsection{Protocol for quantifying $e_p$ --- Theorem~1 for $e_p$}

We now consider the projected ensemble of a typical product state evolved under an arbitrary unitary followed by the random free operations. We consider the local measurements on the overlapping region of $A$ and $B$. Let the subsystem being measured be $A_2B_2$, where $A_2\subset A$ and $B_2\subset B$ (see Fig.~\ref{fig:ent_proj}(a) of this supplemental material). {We denote by $\{|a_2\rangle\}$ and $\{|b_2\rangle\}$ the computational basis states associated with the subsystems $A_2$ and $B_2$, respectively.}

\begin{figure}
\includegraphics[scale=0.55]{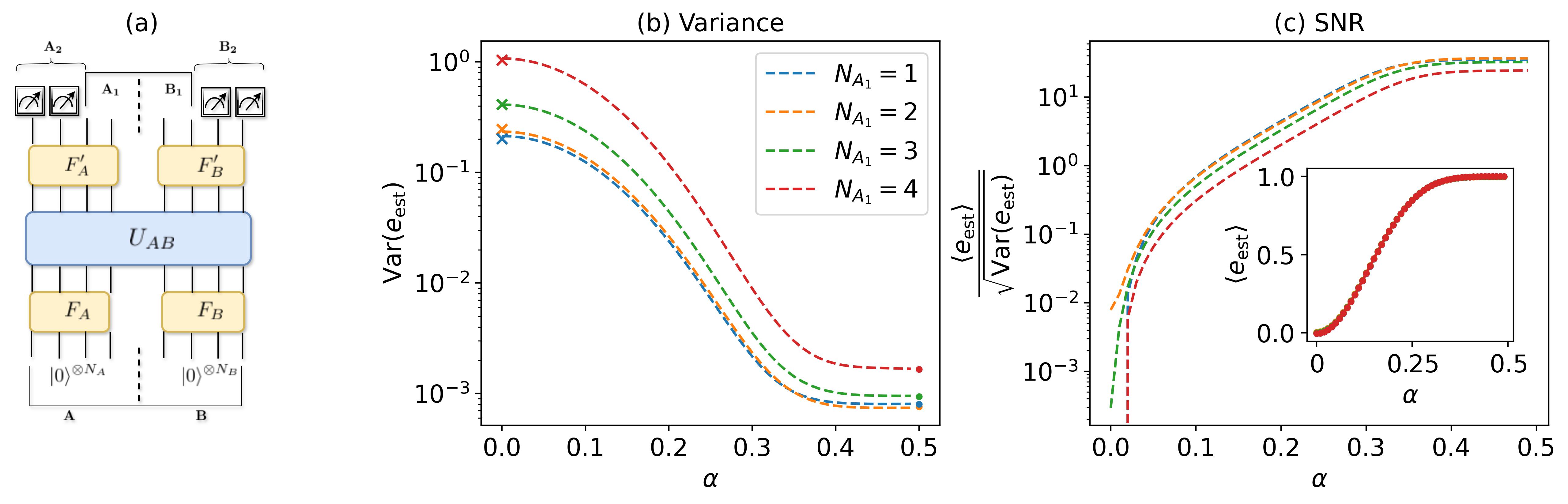}
\caption{\label{fig:ent_proj} (a) Schematic illustration of the protocol for quantifying the entangling power of a bipartite unitary $U_{AB}$, as outlined in the main text. The measurements at the end of the protocol are performed on the subsystems $A_2\in A$ and $B_2\in B$. For simplicity, in this work we consider $N_{A_2}=N_{B_2}$, which also implies $N_{A_1}=N_{B_1}$. (b) The variance of the estimator as a function of $\alpha$ when the entangling dynamics is given by $U_{AB} = e^{-i\alpha H}$, where $H$ is considered to be a random Hermitian operator $H$ with $N=10$, and $N_{A}=N_{B}=5$. Here, we take a single random instance of $H$ for the numerical computations. The points marked by ``$\times$" denote the analytical predictions of the variance obtained in Eq.~(\ref{varentsep}). The points located at $\alpha=0.5$ marked by ``$\cdot$" indicate the variances corresponding to Haar random unitaries (see Eq.~(\ref{varenthaar})). (c) Signal to the Noise (SNR), for the same random Hamiltonian considered in (b). The results in (b) and (c) are evaluated for over $\sim 5\times10^3$ samples of random free unitaries (local unitaries) and states (product states). }
\end{figure}

{
Following the circuit setting shown in Fig.~\ref{fig:ent_proj}(a), the probability of obtaining the measurement outcome corresponding to the computational basis state $(|a_2b_2\rangle)$ upon performing projective measurements on the subsystem $A_2B_2$ is
\[
p_{a_2b_2} = \mathrm{Tr}\left[\langle a_2b_2|(u_A\otimes u_B)U|\phi_A\phi_B\rangle\langle \phi_A\phi_B |U^{\dagger}(u_A\otimes u_B)^{\dagger} |a_2b_2\rangle\right].
\]
Squaring the above expression, performing the Haar averages over the free states and unitaries, and finally using Eq.~(\ref{ent-main}), one readily finds that the entangling power of $U$ across the bipartition $(AB)$ is related to the second moment of the outcome probability distribution as
\begin{equation}\label{eprotocol}
\langle e_{\mathrm{est}} \rangle = \dfrac{e_{p}(U)}{\overline{e_p}} = \dfrac{\langle p^2_{a_2b_2} \rangle -k_2 }{k_1 - k_2} = \dfrac{\langle \tilde{p}^{(2)} \rangle -k_2}{k_1-k_2},
\end{equation}
where
\begin{equation*}
e_{\mathrm{est}} = \dfrac{\tilde{p}^{(2)}-k_2}{k_1-k_2} \quad\text{ is the unbiased estimator, and }\quad
\tilde{p}^{(2)}=\dfrac{1}{2^{N_{A_2B_2}}}\sum_{\substack{
a_2 \in \{0,1\}^{N_{A_2}} \\
b_2 \in \{0,1\}^{N_{B_2}}
}} p^{2}_{a_2, b_2}
\end{equation*}
We find the constants $k_1$ and $k_2$ to be
\begin{eqnarray}
k_1= \dfrac{\text{Tr}\left( \langle p^{\otimes 2}|\Pi^{(2)}_{AB}|p^{\otimes 2}\rangle \right)}{\text{Tr}\left( \Pi^{(2)}_{AB} \right)} = \dfrac{2^{N_{A_1B_1}}(2^{N_{A_1B_1}}+1)}{2^{N}(2^{N}+1)}
\end{eqnarray}
and
\begin{eqnarray}
k_2 =\dfrac{\text{Tr}\left(\langle p^{\otimes 2} | \Pi^{(2)}_{A}\Pi^{(2)}_{B} |p^{\otimes 2}\rangle \right)}{\text{Tr}\left( \Pi^{(2)}_{A}\Pi^{(2)}_{B} \right)} =  \dfrac{2^{N_{A_1}}(2^{N_{A_1}}+1)2^{N_{B_1}}(2^{N_{B_1}}+1)}{2^{N_{A}}(2^{N_A}+1) 2^{N_B}(2^{N_B}+1)}.
\end{eqnarray}
Note that the above relations for $k_1$ and $k_2$ hold only when the unmeasured subsystem size is non-vanishing. Equation.~(\ref{eprotocol}) extends the applicability of Theorem-1 of the main text to the  resource theory of entanglement.

\subsubsection{Fluctuations of the estimator and sampling complexity}

Here, we analyze the fluctuations of the entangling power estimator in two limiting cases, namely, separable $(U_{AB}=u_{A}\otimes u_B)$ and Haar-random $U$, and examine how the sampling complexity scales with the system size and the number of qubits being measured. By examining these two limits, we identify a clear contrast in the behavior of the fluctuations, which directly affects the number of resources required to estimate the entangling power. We now discuss these two cases in detail.\\

\textbf{(a) $U$ is separable: }In the first case, when the unitary has vanishing entangling power, we characterize the fluctuations in the estimator using the variance. In this case, Eq.~(\ref{eprotocol}) implies $\langle p^2_{a_2, b_2}\rangle=\langle  \tilde{p}^{(2)}\rangle=k_2$ for any $a_2\in\{0, 1\}^{N_{A_2}}$ and $b_2\in\{0, 1\}^{N_{B_2}}$. Therefore, we write the variance as
\begin{eqnarray}\label{varentinit1}
\mathrm{Var}
&=& \dfrac{1}{(k_1-k_2)^2} \left[\dfrac{1}{2^{2N_{A_2B_2}}} \sum_{a_2, b_2, a'_2, b'_2}\langle p^2_{a_2, b_2}\cdot p^2_{a'_2b'_2} \rangle -k^{2}_{2} \right],
\end{eqnarray}
where $\langle \cdot\rangle$ denotes the averaging over the free states and free unitaries.
Since $U$ is a free unitary, one can write $p_{a_2b_2} = p_{a_2}\cdot  p_{b_2}$ and $p_{a'_2b'_2} = p_{a'_2}\cdot  p_{b'_2}$. One can further decouple the term $\langle p^2_{a_2, b_2}\cdot p^2_{a'_2b'_2} \rangle$ as
\begin{equation*}
 \langle p^2_{a_2, b_2}\cdot p^2_{a'_2b'_2} \rangle = \left\langle p^2_{a_2}p^2_{a'_2}\right\rangle_{u_A, |\phi_A \rangle} \cdot  \left\langle p^2_{b_2}p^2_{b'_2}\right\rangle_{u_B, |\phi_B\rangle}.
\end{equation*}
The terms on the right-hand side can be computed through Haar averages over the subsystems $A$ and $B$, and are given by
\begin{eqnarray}
\langle p^2_{a_2}p^2_{a'_2}\rangle =
\begin{cases}
\displaystyle \prod_{s=0}^{3}
\frac{2^{N_{A_1}} + s}{2^{N_{A}} + s},
& \text{if } a_2 = a'_2 \\[6pt]\\
\displaystyle
\dfrac{2^{N_{A_1}}(2^{N_{A_1}}+1)2^{N_{A_1}}(2^{N_{A_1}}+1)}{2^{N_{A}}(2^{N_{A}}+1)(2^{N_{A}}+2)(2^{N_{A}}+3)}
,
& \text{otherwise},
\end{cases}
\end{eqnarray}
and similarly for $\langle p^2_{b_2}p^2_{b'_2}\rangle$. It then follows from Eq.~(\ref{varentinit1}) that
\begin{eqnarray}\label{varentsep}
\mathrm{Var}
&=& \dfrac{1}{(k_1-k_2)^2} \left[\dfrac{1}{2^{4N_{A_2}}}\left\{2^{N_{A_2}} \langle p^4_{a_2}\rangle+2^{N_{A2}}\left(2^{N_{A_2}}-1\right) \langle p^2_{a_2}\cdot p^2_{a'_2}\rangle\right\}^2    -k^{2}_{2} \right].
\end{eqnarray}
Here, we make an assumption that $N_{A}=N_{B}$ and $N_{A_2}=N_{B_2}$. Then, in the limit of large $N$, the variance follows the scaling
\begin{equation}
\mathrm{Var}\approx \dfrac{1}{2^{N_{A_{2}}}-1}.
\end{equation}

\textbf{(b) $U$ is Haar random: }On the other hand, when the unitary $U$ is a Haar unitary, the second moment of the probability becomes $\langle p^2_{a_2, b_2}\rangle=\langle  \tilde{p}^{(2)}\rangle=k_1$ for any $a_2\in\{0, 1\}^{N_{A_2}}$ and $b_2\in\{0, 1\}^{N_{B_2}}$. Then, the variance of the estimator becomes
\begin{eqnarray}\label{varentinit}
\mathrm{Var} =  \dfrac{1}{(k_1-k_2)^2} \left[\dfrac{1}{2^{2N_{A_2B_2}}} \sum_{a_2, b_2, a'_2, b'_2}\langle p^2_{a_2, b_2}\cdot p^2_{a'_2b'_2} \rangle -k^{2}_{1} \right].
\end{eqnarray}
Note that in the above expression $\langle \cdot\rangle$ also includes the averaging over $U\in U(2^N)$. Upon performing the Haar average over $U\in U(2^N)$, we obtain the terms inside the summation as follows:
\begin{align}
\left\langle p^2_{a_2, b_2}\, p^2_{a'_2, b'_2} \right\rangle
&=
\begin{cases}
\displaystyle \prod_{s=0}^{3}
\frac{2^{N_{A_1 B_1}} + s}{2^{N} + s},
& \text{if } a_2 = a'_2,\; b_2 = b'_2 \\[6pt]\\
\displaystyle k_1^2\,
\frac{2^N (2^N+1)}{(2^N+2)(2^N+3)},
& \text{otherwise}.
\end{cases}
\end{align}
By incorporating the above equation in Eq.~(\ref{varentinit}), we get the variance of the estimator as
\begin{eqnarray}\label{varenthaar}
 \mathrm{Var} = \dfrac{1}{(k_2-k_1)^2}\left[\dfrac{1}{2^{N_{A_2B_2}}}  \left( \prod_{s=0}^{3} \frac{2^{N_{A_1B_1}} + s}{2^{N} + s}  \right) +\left\{ \left(\dfrac{2^{N_{A_2B_2}}-1}{2^{N_{A_2B_2}}}\right)\left( \dfrac{2^{N}(2^N+1)}{(2^N+2)(2^{N}+3)}\right) -1\right\} k^2_1 \right].
\end{eqnarray}
Similar to the previous case, we take the simplest scenario where $N_{A}=N_{B}$ and $N_{A_1}=N_{B_1}$. Then, further simplications in the large $N$ limit reveal that
\begin{equation}\label{varenthaar2}
\mathrm{Var} \approx \dfrac{1}{2\cdot 2^{N}} \;  + \; \dfrac{1}{2\cdot 2^N 2^{2N_{A_2}}},
\end{equation}
indicating that the fluctuations get suppressed exponentially with the system size $N$.
}

{
\subsubsection{Towards experimental implementation---A concrete illustration}
In the following, we provide a concrete illustration of our protocol for quantifying entangling power, thereby illustrating its computational advantages and experimental feasibility. Entanglement is one of the most extensively studied resources in quantum many-body physics. In this setting, entangling power is defined as the average entanglement generated by a unitary evolution acting on typical pure product states, quantified by the linear entropy.
The main questions regarding the feasibility of our protocol concern $(1)$ the required experimental operations and $(2)$ the necessary number of shots to obtain converged results.\\

\emph{(1) Required experimental operations:}
As illustrated in Fig.~\ref{fig:ent_proj}(a), the protocol requires
\begin{itemize}
\item the ability to apply a non-free unitary---this is the object of interest, whose entangling power should be estimated, and which by assumption can be applied;
\item the ability to apply random free operations---this is the primary bottleneck; in the present case, these correspond to random unitaries acting on subsystems, which thanks to recent advances in quantum technologies have already been demonstrated in large-scale experiments, even beyond straightforward classical tractability \cite{boixo2018characterizing, doi:10.1126/science.aau4963, PhysRevLett.120.050406};
\item the ability to projectively measure in the computational basis---this is a standard operation in any universal quantum computer \cite{DiVincenzo_2000} and is also available in ultracold-atom setups thanks to advanced quantum-gas microscopes \cite{bakr2009quantum, sherson2010single, gross2021quantum}.
\end{itemize}
It remains to show that the required number of shots remains tractable. The variance analysis from the previous subsection demonstrates that the estimator is remarkably efficient.\\

\emph{(2) Required number of shots:}
As shown in the previous subsection (see Sec.~VII of the Supplemental Material), in the large-system limit under the assumptions $N_A=N_B$ and $N_{A_1}=N_{B_1}$, and in the regime of vanishing entangling power, the variance of the estimator is suppressed exponentially with the size of the measured subsystem. Consequently, Chebyshev's inequality implies that the number of experimental repetitions required to estimate the entangling power with additive accuracy $\epsilon$ scales as
$M
\sim
\frac{1}{\epsilon^2(2^{N_{A_2}}-1)}.
$
Thus, in this regime, the measurement cost is determined by the measured subsystem size rather than the total system size. 
Note that the estimator's variance is suppressed even further as the dynamics become chaotic. In particular, Eq.~(\ref{varenthaar2}) shows that, in the large-$N$ limit, the variance decays exponentially with the total system size $N$, implying an even lower shot complexity than in the vanishing-entangling-power regime.

Taken together, these observations suggest that our protocol is highly viable on current experimental architectures.
Besides the fundamental interest in characterizing quantum effects, the protocol also  provides a   route to certify quantum hardware that generates complex many-body quantum dynamics at scales beyond the reach of conventional classical methods.  \\
}

\subsection{Equation.~(2) for the entangling power}

One can further show that the average purity of a projected state, up to the leading term, can be evaluated as
\begin{eqnarray}
\mathbb{E}_{|\psi_A\rangle, |\psi_B\rangle, u_A, u_B} \mathcal{E}(|\phi_{A_1B_1}\rangle) \approx \dfrac{ e_p(U) }{\overline{e_{AB}}}\overline{e_{A}} \dfrac{k_1}{\left( 1-\dfrac{e_p(U)}{\overline{e_{AB}}} \right)k_2+ \dfrac{e_p(U)}{\overline{e_{AB}}}k_1}
\end{eqnarray}

For simplicity, we take $N_A=N_B$ and $N_{A_1}=N_{B_1}$. Then, one can immediately see that
\begin{eqnarray}
 \dfrac{k_2}{k_1} &=&  \dfrac{1}{1+\dfrac{2\cdot  2^{N_A}}{2^N+1}}\; +\; \dfrac{2}{\left( 1+\dfrac{2\cdot 2^{N_A}}{2^N+1} \right)\left( \dfrac{1+2^{2N_{A_1}}}{2^{N_{A_1}}} \right)}\nonumber\\
 &\approx & 1\; +\; \dfrac{2}{\left( \dfrac{1+2^{2N_{A_1}}}{2^{N_{A_1}}} \right)} \;\;\hspace{2em}\mathrm{ for }\;\; N\gg 1.
\end{eqnarray}
It then follows that for even moderately large $N_{A_1}$, we get $k_2/k_1\approx 1 + 2/2^{N_{A_1}}+O(2^{-2\cdot N_{A_1}})$. Therefore, the average purity of the projected states is given by
\begin{eqnarray}
\mathbb{E}_{|\psi_A\rangle, |\psi_B\rangle, u_A, u_B} \mathcal{E}(|\phi_{A_1B_1}\rangle) \approx \dfrac{ e_p(U) }{\overline{e_{AB}}}\overline{e_{A}} \,,
\end{eqnarray}
indicating that the entanglement generated in the projected states closely tracks the entangling power of the unitary $U$ taken in the protocol. This analysis demonstrates that our main results are generic and that the QRT of entanglement fully reproduces the behavior outlined in the main text.

\subsection{Thermalization of entanglement generating power \cite{jonnadula2017impact}}

To make the supplemental material self-contained, we also briefly discuss the thermalization of the entanglement-generating power under alternating free and non-free unitaries, in direct analogy with the other QRTs considered in this work. For further details, the interested reader may refer to Refs.~\cite{jonnadula2017impact, jonnadula2020entanglement}. As in the other QRTs, the entangling power is affected by the free operations, which in this setting correspond to local unitaries.

Given two arbitrary unitaries $U$ and $V$ with finite entangling power, we consider the quantity $e_p(VFU)$, where the free operation is $F = u_A \otimes u_B$. Averaging over all free operations yields
\begin{equation}
\left\langle e_p (V F U) \right\rangle_F
= e_p(U) + e_p(V) - \frac{e_p(U)e_p(V)}{\overline{e_p}} .
\end{equation}
This result can be naturally generalized to generic quantum circuits composed of alternating free and non-free operations. Defining $ U^{(n)} = U F_{n-1} \cdots U F_2 \, U F_1$,
and averaging independently over the free operations at each time step, we obtain
\begin{equation}
\left\langle e_p(U^{(n)}) \right\rangle_{F_1, \ldots, F_{n-1}}
= \overline{e_p} \left[ 1 - \left(1 - \frac{e_p(U)}{\overline{e_p}} \right)^n \right].
\end{equation}

This expression is strikingly similar to the thermalization law presented in Theorem~2 of the main text, further strengthens the generality of our results.

\section{QRT of Coherence}

In this section, we extend our main results to the resource theory of quantum coherence. Here, the free states are the computational basis vectors, and the free unitaries are taken to be of the form $\pi\; V_{\mathrm{diag}}$, where $\pi$ indicates an arbitrary basis permutation operator and $V_{\mathrm{diag}}$ is a random diagonal unitary. Given a state $|\psi\rangle\in \mathcal{H}^{d}$ , the coherence with respect to the computational basis can be computed using the quantity
\begin{equation}
 C(|\psi\rangle)=1- \sum_{i=0}^{d-1}\left| \langle i|\psi\rangle \right|^{4}.
\end{equation}
This quantity vanishes if and only if $|\psi\rangle$ is a computational basis vector and hence faithfully quantifies the resource. It is easy to see that the above quantity is a second degree quantum state polynomial. Having defined the coherence in a state, the coherence generating power (CGP) of an arbitrary unitary $U$ with respect to the computational basis can be defined as \cite{anand2021quantum, varikuti2022out, styliaris2019quantum}
\begin{eqnarray}
\mathcal{C}(U)=1-\dfrac{1}{\text{dim}(U)} \sum_{i, j=1}^{\text{dim}(U)} \left |\langle i|U|j\rangle\right |^{4}  .
\end{eqnarray}
Before proceeding, we note that
\begin{equation}
V_\mathrm{diag}=\sum_{k=0}^{\text{dim}(U)-1} e^{i\phi_k}|k\rangle\langle k|.
\end{equation}
We first consider the twirling channel as in the previous cases:
\begin{eqnarray}
(FU)^{\otimes 2}\rho (FU)^{\dagger \otimes 2} &=& \left( \pi\; V_{\mathrm{diag}}\; U \right)^{\otimes 2}\rho_f \left( \pi\; V_{\mathrm{diag}}\; U \right)^{\dagger \otimes 2} \nonumber\\
&=&\sum_{k, l, k', l' =0}^{\text{dim}-1}\exp\left\{ i\left( \phi_k+\phi_l-\phi_{k'}-\phi_{l'} \right) \right\}\; \left(\pi^{\otimes 2} |kl\rangle\langle k'l'|\pi^{\dagger\otimes 2}\right)\; \left( \langle kl | U^{\otimes 2}\rho_f U^{\dagger\otimes 2}  |k'l'\rangle \right).
\end{eqnarray}
To compute the average over the free operations, we first perform the averaging over the diagonal unitaries,
\begin{eqnarray}\label{H3}
\left\langle (FU)^{\otimes 2}\rho (FU)^{\dagger \otimes 2} \right\rangle_{F} &=& \left\langle\sum_{k, l}  \pi^{\otimes 2}|kl\rangle\langle kl|\pi^{\dagger\otimes 2} \left( \langle kl | U^{\otimes 2}\rho_f U^{\dagger\otimes 2}  |kl\rangle \right)  + \sum_{k, l}  \pi^{\otimes 2}|kl\rangle\langle lk|\pi^{\dagger\otimes 2} \left( \langle kl | U^{\otimes 2}\rho_f U^{\dagger\otimes 2}  |lk\rangle \right)\right. \nonumber\\
&&\left.- \sum_{k}  \pi^{\otimes 2}|kk\rangle\langle kk|\pi^{\dagger\otimes 2} \left( \langle kk | U^{\otimes 2}\rho_f U^{\dagger\otimes 2}  |kk\rangle \right)\right\rangle_{\pi\in \mathrm{Perm}(\mathrm{dim}(U))}.
\end{eqnarray}
We now evaluate the averaging over the basis permutation operators for all the terms on the right-hand side separately. For fixed indices $k,l$, the permutation averages over $\pi \in \mathrm{Perm}(d)$ give
\begin{equation}
\left\langle \pi^{\otimes 2} |kl\rangle \langle kl| \pi^{\dagger \otimes 2} \right\rangle_{\pi\in\mathrm{Perm}(d)}=\dfrac{1}{d!} \sum_{\pi \in \mathrm{Perm}(d)} \pi^{\otimes 2} |kl\rangle \langle kl| \pi^{\dagger \otimes 2}
=
\begin{cases}
\dfrac{1}{d(d-1)} \displaystyle\sum_{i \neq j} |ij\rangle \langle ij| & (k \neq l), \\[2ex]
\dfrac{1}{d} \displaystyle\sum_{i} |ii\rangle \langle ii| & (k = l).
\end{cases}
\end{equation}
Similarly, for the swapped term we obtain
\begin{equation}
\left\langle \pi^{\otimes 2} |kl\rangle \langle lk| \pi^{\dagger \otimes 2} \right\rangle_{\pi\in\mathrm{Perm}(d)}=\dfrac{1}{d!} \sum_{\pi \in \mathrm{Perm}(d)} \pi^{\otimes 2} |kl\rangle \langle lk| \pi^{\dagger \otimes 2}
=
\begin{cases}
\dfrac{1}{d(d-1)} \displaystyle\sum_{i \neq j} |ij\rangle \langle ji| & (k \neq l), \\[2ex]
\dfrac{1}{d} \displaystyle\sum_{i} |ii\rangle \langle ii| & (k = l).
\end{cases}
\end{equation}
The third term becomes
\begin{eqnarray}
\left\langle \pi^{\otimes 2} |kk\rangle \langle kk| \pi^{\dagger \otimes 2} \right\rangle_{\pi\in\mathrm{Perm}(d)}     = \dfrac{1}{d}\sum_{i}|ii\rangle\langle ii|.
\end{eqnarray}
Consequently, Eq.~(\ref{H3}) can be rewritten as
\begin{eqnarray}
\left\langle (FU)^{\otimes 2}\rho (FU)^{\dagger \otimes 2} \right\rangle_{F} = \dfrac{1-T}{d(d-1)}  \left( \mathbb{I}+\mathrm{SWAP} \right)\; +  \; d\left( \dfrac{T}{d}-\dfrac{2-2T}{d(d-1)} \right)\; \rho_f\;,
\end{eqnarray}
where $T=\sum_{k}\langle kk| U^{\otimes 2}\rho_f U^{\dagger\otimes 2} |kk\rangle$. By noticing that $\Pi^{(2)}=\dfrac{\mathbb{I}+\mathrm{SWAP}}{2}$, and the Haar averaged CGP given by $\overline{C}=\dfrac{d-1}{d+1}$, the twirling channel in the above equation can be written in a compact form as follows:
\begin{eqnarray}\label{H10}
\left\langle (FU)^{\otimes 2}\rho (FU)^{\dagger \otimes 2} \right\rangle_{F} =\left[1- \dfrac{C_p(U)}{\overline{C}}\right]\; \rho_f \; +\; \dfrac{C_p(U)}{\overline{C}}\; \dfrac{\Pi^{(2)}}{\mathrm{Tr}\left( \Pi^{(2)} \right)}\;.
\end{eqnarray}
The above equation displays remarkable similarity with the twirling identities obtained for the other QRTs in this work, such as $Z_2$-asymmetry (see Eq.(4)), non-stabilizerness (see Eq.~(6)), and the entangling power (see Eq.~(\ref{ent-main})). Hence, all the results discussed for the above cases can also be extended to the QRT of coherence.

\subsection{Thermalization of CGP --- CGP version of Theorem~2}
For the sake of completeness, we analyse thermalization of the CGP under interlacing free and non-free operations. From Eq.~(\ref{H10}), it is straightforward to verify the following expression depicting the impact of free operations on the CGP:
\begin{eqnarray}
\langle C_p(VFU) \rangle_{F} = C_p(U)+C_p(V)-\dfrac{C_p(U)C_p(V)}{\overline{C}},
\end{eqnarray}
where $U$ and $V$ are non-free operations and the averaging is performed over the free operations $F$. When the non-free operations are identical, the above equation becomes
\begin{eqnarray}
\langle C_p(UCU)\rangle_{F}   & =& 2 C_p(U) -\dfrac{C^2_p(U)}{\overline{C}}\nonumber\\
&=& C_p(U)\left[ 2-\dfrac{C_p(U)}{\overline{C}} \right].
\end{eqnarray}
If we denote $U^{(n)}=UF_{n-1}\cdots UF_2UF_1 U$, independent averages over the free operations at every time step yields
\begin{eqnarray}
\left\langle  C_p(U^{n})\right\rangle_{F_1, \cdots F_{n-1}} = \overline{C}\left[ 1-\left( 1-\dfrac{C_p(U)}{\overline{C}} \right)^n \right]\;,
\end{eqnarray}
implying the expected exponential thermalization of the coherence-generating power toward its Haar-averaged value. This result further reinforces the findings presented in the main text.

\end{document}